%% file: main.tex
\documentclass[journal]{IEEEtran}
\IEEEoverridecommandlockouts

\input{./definitions.tex}

\input{./tikzStyles.tex}

\usepackage[final]{pdfpages}    

\newcommand{\lineref}[1]{(\tikzexternaldisable\ref{#1}\tikzexternalenable)}	


\usepackage[hidelinks]{hyperref}        
\usepackage{cite}
\usepackage{amsmath,amssymb,amsfonts}
\usepackage{algorithmic}
\usepackage{graphicx}
\usepackage{textcomp}
\usepackage{xcolor}
\usepackage{cite}
\usepackage{amsmath,amssymb,amsfonts}
\usepackage{algorithmic}
\usepackage{graphicx}
\usepackage{textcomp}
\usepackage{xcolor}
\usepackage{orcidlink}
\usepackage{cuted}          
\usepackage{mathtools}      
\usepackage{enumitem}       
\usepackage{comment}

\usepackage{array, booktabs, xltabular}
\usepackage{colortbl}         
\usepackage{multirow}	      
\usepackage{longtable}	      
\usepackage{nicefrac}	      
\usepackage{adjustbox}        
\usepackage{upgreek}          
\usepackage[per-mode=symbol]{siunitx}    		
\DeclareSIUnit{\dBm}{dBm}	
\usepackage{balance}          
\usepackage{graphicx}         
\usepackage{placeins}         
\usepackage{multirow}

\usepackage[
    indexonlyfirst, 
    section=section,
    nonumberlist,numberedsection
]{glossaries}
\makenoidxglossaries%
\loadglsentries{abbr}%
\usepackage{ifthen}
\newcommand{\exportFigures}{false}      

\usepackage{etoolbox}
\usepackage{bibunits}
\defaultbibliographystyle{IEEEtran}

\newcommand{\herm}{\mathsf{H}}

\newcommand{\trp}{\mathsf{T}}

\newcommand{\realset}[2]{ \mathbb{R}^{#1 \times #2}  }
\newcommand{\realsetone}[1]{ \mathbb{R}^{#1}  }

\newcommand{\complexsetone}[1]{ \mathbb{C}^{#1}  }

\DeclareMathOperator{\tr}{tr}
\DeclareMathOperator{\arctantwo}{arctan2}
\newcommand{\condIndep}{\mathrel{\perp\mspace{-10mu}\perp}} 
\newcommand{\eye}[1]{\mathbf{I}_{\scriptscriptstyle#1}}       

\newcommand{\CN}{{\mathcal{CN}}}

\input{math-notation}   

\ifthenelse{\equal{\exportFigures}{false}}
{
   \newcommand{\tikzexternaldisable}{}  
   \newcommand{\tikzexternalenable}{}   
  }
{
  \usepgfplotslibrary{external}
  \tikzexternalize[prefix=tikz/]
  \usepackage{shellesc}
}

\newcounter{MYtempeqncnt}
\usepackage{amsthm}
\newtheorem{proposition}{Proposition}

\newlength\figureheight
\newlength\figurewidth 
\newlength\verticalSpace

\newsavebox{\foobox}
\newcommand{\slantbox}[2][0]{\mbox{%
        \sbox{\foobox}{#2}%
        \hskip\wd\foobox
        \pdfsave
        \pdfsetmatrix{1 0 #1 1}%
        \llap{\usebox{\foobox}}%
        \pdfrestore
}}
\newcommand\unslant[2][-.25]{\slantbox[#1]{$#2$}}

\usepackage{algorithm2e}		
\RestyleAlgo{ruled}			
\SetKw{KwBy}{by}			%
\SetKwComment{Comment}{\% }{}	

\usepackage[hang,flushmargin]{footmisc}
\newcommand{\algorithmfootnote}[2][\footnotesize]{%
  \let\old@algocf@finish\@algocf@finish
  \def\@algocf@finish{\old@algocf@finish
    \leavevmode\rlap{\begin{minipage}{\linewidth}
    #1#2
    \end{minipage}}%
    \vspace{-0.3cm} 
  }%
}

\newcolumntype{C}{@{\hskip 0.075cm}c@{\hskip 0.075cm}}

\def\BibTeX{{\rm B\kern-.05em{\sc i\kern-.025em b}\kern-.08em
    T\kern-.1667em\lower.7ex\hbox{E}\kern-.125emX}}

\begin{document}

\title{
\PaperTitle\\
\thanks{
\IEEEauthorrefmark{1}The AMBIENT-6G project has received funding from the Smart Networks and Services Joint Undertaking 
under the European Union's Horizon Europe research and innovation programme (Grant Agreement No. 101192113).}
}


\allowdisplaybreaks
\frenchspacing
	\author{\IEEEauthorblockN{Benjamin J.\,B. Deutschmann, Lukas D'Angelo, Erik Leitinger, Klaus Witrisal}
	
    \IEEEauthorblockA{
	Institute of Comm. Networks and Satellite Comms., Graz University of Technology, Austria \\ 
    }
	\vspace*{-6mm}}

\maketitle

\begin{abstract}
Distributed multiple-input multiple-output (D-MIMO) is envisioned as a key deployment architecture for future wireless systems, offering improved coverage and robustness through spatial separation, and favorable geometry for localization and sensing.
Its greatest potential for localization lies in joint coherent processing across distributed antenna panels.
However, stringent frequency-synchronization and phase-calibration requirements, together with multimodal likelihood functions, hinder the estimation process.
Consequently, most existing algorithms process the panels noncoherently, potentially sacrificing localization accuracy.
We present a unified family of Bayesian state-space filters that are based on concentrated Type-I and marginal Type-II likelihoods for wideband near-field D-MIMO systems and operate directly on noisy channel observations.
The Type-I filters explicitly realize (i) noncoherent, (ii) coherent, and (iii) carrier-phase-based processing.
For Type-II filtering, we show that a zero-mean model is inherently noncoherent under distributed processing, whereas observation stacking restores coherence.
A nonzero-mean model can automatically adapt to the coherence available in the data, a property that we term ``soft coherence''. 
We derive posterior Cram\'er--Rao lower bounds (PCRLBs) for all three coherence levels and show that each level is fundamentally tied to the number of phase parameters used for positioning or treated as nuisance parameters.
Numerical results show that the coherence-specific filters closely approach their respective PCRLBs and that coherent processing can substantially outperform noncoherent processing.
We derive particle-based belief propagation methods, which parallelize over particles and distributed panels, scale linearly with the observed data, and achieve runtimes of tens of milliseconds per time step in a GPU-accelerated implementation.
\end{abstract}

\glsresetall        
\begin{IEEEkeywords}
D-MIMO, coherence, direct positioning, PCRLB, GPU-acceleration
\end{IEEEkeywords}

\section{Introduction}\label{sec:intro}
Accurate wireless localization is becoming a key capability of future radio infrastructures, supporting autonomous navigation, industrial automation, and location-aware services as part of 6G \gls{isac}~\cite{GonFurKalValDarSheSheBayWymProcIEEE2024}.
This development coincides with the emergence of \gls{xlmimo} infrastructures~\cite{XuLiuZhaMinCai:TSP2024,XuLiuZhaCaiWu:Arxiv2025,Hua:EUSIPCO2025,WuQuiSunWeiZhaEld:TSP2026,Deutschmann26WCM} and \gls{dmimo} systems~\cite{HuRusEdf:TWC2018,TenWymKesDeySveJCS2026,Deutschmann24SPAWC}, in which large numbers of antenna elements are distributed over large spatial regions.
Spatially separated antenna panels can be processed as a very large jointly coherent synthetic array whose aperture spans the entire infrastructure~\cite{Fascista23RadioStripesICC,Fascista25RadioStripes}.
This panel-based model likewise applies to \gls{xlmimo} arrays partitioned into subarrays.
While potentially yielding substantial gains in localization accuracy, such coherent processing requires accurate channel models accounting for near-field (spherical-wavefront) propagation~\cite{GueGuiDarDju:TSP2021}, likelihood models that permit coherent data fusion across distributed panels, and estimators that can both cope with spatial nonstationarity~\cite{LiaLeiMey:TSP2025,Deutschmann26TSP} and with the likelihood multimodality (i.e., grating lobes) inherent to the spatially aliased synthetic array.
These accuracy gains can be realized if the infrastructure fulfills the necessary frequency-synchronization and phase-calibration requirements~\cite{Osorio25distributedISAC}.
At the same time, scaling up wireless infrastructures places stringent demands on algorithmic scalability.
As antenna deployments become increasingly massive~\cite{Bjornson25GiganticMIMO} and signal bandwidths continue to grow~\cite{Sarajlic23subTHz}, the amount of observed data is ever increasing.
Localization algorithms must therefore scale favorably with the data volume and expose sufficient parallelism to achieve practical runtimes on modern parallel computing architectures~\cite{Iancu25distributedMessagePassing}.

\subsection{State of the Art}

Model-based wireless localization methods can broadly be divided into two families. 
\emph{Two-step} approaches~\cite{richter2005estimation,ChuJSTSP2019,GreLeiWitFle:TWC2024} first extract parametric channel estimates from the received signals---such as delays, angles, Doppler shifts, and complex amplitudes---and subsequently perform Bayesian tracking~\cite{Meyer16DistrTracking,Iancu25distributedMessagePassing,Venus24biasTracking,LeitMeyHlaWitTufWin:TWC2019}. 
However, the data compression inherent to the preceding channel estimation and detection stage can lead to a loss of position-relevant information, particularly at low \gls{snr}~\cite{Weiss04DPD,Garcia17directLocalization,Mingchao23TBT}.

In contrast, \emph{direct} approaches \cite{Hadaschik15coherentDMIMO,Garcia17directLocalization,ZhaStaJosWanGenDamWymHoeTAES2020, LiaMey:Asilomar2024, Deutschmann24SPAWC, Fascista25RadioStripes} operate directly on the raw received signals and embed the physical channel model within the statistical inference engine. 
By jointly processing the received signals within a position-dependent probabilistic model, these methods avoid the information loss associated with a preliminary channel-estimation and detection stage and enable principled exploitation of low-\gls{snr} signal components.
However, direct methods may be computationally more demanding than two-step approaches~\cite{Venus24biasTracking}.
We hereafter refer to the spatially separated antenna panels as \glspl{pa}.

\paragraph*{Coherence Levels}
We distinguish three coherence levels.
\Gls{nc} processing assumes no phase calibration among \glspl{pa} and retains one nuisance phase per \gls{pa},
\gls{c} processing assumes mutual phase calibration among \glspl{pa} and retains one common nuisance phase,
and \gls{cp} processing additionally assumes phase calibration with respect to the \gls{mt} clock and retains no nuisance phase.

Direct approaches provide a natural foundation for coherent Bayesian fusion directly at the raw-signal level.
While noncoherent direct wireless localization methods \cite{Oispuu10coherentNoncoherent,Hadaschik15coherentDMIMO,Garcia17directLocalization,Kamil17distributedTracking} are the most common, coherent methods~\cite{Hadaschik15coherentDMIMO,Fascista23RadioStripesICC,Fascista25RadioStripes} and even carrier-phase-based approaches have been explored~\cite{Vokomirovic19carrierPhasePos,Oispuu10coherentNoncoherent}.
\Glspl{crlb} for snapshot-based estimators and \glspl{pcrlb} for Bayesian state filters have been derived for noncoherent methods \cite{Oispuu10coherentNoncoherent,Hadaschik15coherentDMIMO,Garcia17directLocalization,Kamil17distributedTracking}, coherent \gls{dmimo}~\cite{Godrich10coherentDMIMOradar,Shourezari26CoherentPos,Wymeersch23BoundsCP} or near-field \gls{xlmimo}~\cite{GueGuiDarDju:TSP2021} approaches.
A carrier-phase-based snapshot \gls{crlb} was derived in~\cite{HuRusEdf:TWC2018}, while corresponding \glspl{pcrlb} remain largely unexplored.

\ifthenelse{\equal{\IEEEversion}{true}}
{%
}%
{%
The authors in~\cite{Hadaschik15coherentDMIMO} compared noncoherent and coherent \gls{dmimo} snapshot estimators and found that the impairment caused by likelihood multimodality can be mitigated through particle filtering.
The authors of~\cite{HuRusEdf:TWC2018} discovered a large performance gain in \gls{xlmimo} systems when transitioning from a coherent setup to a carrier-phase-based setup.
In~\cite{Fascista23RadioStripesICC,Fascista25RadioStripes}, we discovered another large performance gap in \gls{dmimo} systems when transitioning from a coherent to a noncoherent setup.
}

Key unresolved challenges include 
(i) a framework of estimators and \glspl{pcrlb} that unifies the defined coherence levels, 
(ii) likelihood models that preserve and exploit the phase coherence available in the data while allowing \gls{pa}-local processing, and
(iii) scalable algorithms capable of processing wide bandwidths and large antenna arrays in real time.

\subsection{Contributions}
In this paper, we propose Bayesian state filters operating at the defined coherence levels and derive the respective \glspl{pcrlb} bounding their estimation performance.
For a general \gls{dmimo} positioning and tracking scenario, we show by means of these \glspl{pcrlb} that the coherence levels are fundamentally tied to the number of nuisance phases that need to be estimated.
We show that the respective Bayesian state filters closely approach these \glspl{pcrlb} both when using (i) a concentrated \textit{Type-I} likelihood, which can perform coherent fusion in the data domain by simple stacking of observations, and (ii) a \textit{Type-II} marginal likelihood performing fusion in the likelihood domain.
In the latter case, 
we show that distributed processing with the commonly used \textit{zero-mean} Type-II likelihood inherently leads to noncoherent operation, thereby sacrificing aperture in \gls{dmimo}/\gls{xlmimo} systems.
To overcome this problem, we proposed a \textit{nonzero-mean} Type-II likelihood in~\cite{Deutschmann26TSP}.
Although this approach performs robustly and accurately, filtering the complex means does not permit a clear classification in terms of the three defined coherence levels.
Concentrating Type-I likelihoods w.r.t. the unknown complex amplitudes at their \gls{ml}-estimates provides a simple tool to select the desired coherence level, albeit their performance and scalability can suffer in multipath channels. 
The main contributions are summarized as follows.
\begin{itemize}[leftmargin=5mm]
    \item We present seven estimators with varying levels of coherence for wideband, near-field (spherical-wavefront) localization with distributed array-equipped \glspl{pa}, applicable to both \gls{dmimo} and panelized \gls{xlmimo} systems.
    \item We derive the corresponding (P)\glspl{crlb} that fundamentally bound the performance of these estimators.
    \item In a unified framework, we show that the coherence level and corresponding Fisher information are fundamentally tied to the number of phase parameters either actively used for positioning or treated as nuisance parameters.
    \item We introduce three Bayesian state filters working with concentrated Type-I likelihood models that closely approach their respective noncoherent, coherent, or even carrier-phase-based \glspl{pcrlb}.
    \item We demonstrate that the commonly used zero-mean Type-II likelihood function model is inherently noncoherent under distributed processing, while a centralized stacking approach restores coherence.
    \item We introduce a nonzero-mean Type-II likelihood model that automatically performs coherent or noncoherent positioning depending on the coherence of the data, a property of this estimator that we term ``soft coherence''.
    \item Our presented \gls{bp} method scales linearly with the number of \glspl{pa} and the data volume, while our fully \gls{gpu}-parallel implementation achieves runtimes of only tens of milliseconds per snapshot.
\end{itemize}

\subsection{Scope and Paper Organization}
The main scope of this paper is to discuss and compare direct \gls{dmimo} estimator coherence based on several commonly used Type-I and Type-II likelihoods.
For an accessible contribution tightly tailored to this scope, we consider simple \gls{los} channels in \gls{awgn}.
In~\cite{Deutschmann26TSP}, we demonstrated the applicability of the modeling and inference framework presented here to specular multipath channels by extending it 
to~\gls{slam}.
The presented methods likewise extend to multiobject tracking~\cite{LiaMey:Asilomar2024} or Bernoulli filters to accommodate spatial nonstationarity.

Throughout this paper, we use \glspl{fg}~\cite{Loeliger04IntroFG} both to illustrate the statistical model underlying our Bayesian state filters and to apply \gls{bp} message passing on those \glspl{fg} as a principled way to compute marginal posterior \glspl{pdf} required for Bayesian state estimation.
We then present a numerically feasible implementation using particle-based \gls{bp}, which corresponds to particle filtering on the simple \glspl{fg} underlying our Type-I models.

The remainder of the paper is organized as follows.  
Section~\ref{sec:System-Model} introduces the geometric, signal, and statistical models underlying the \gls{dmimo} localization and state filtering problem.
Section~\ref{sec:problem-formulation} describes the proposed \gls{bp}-based inference algorithm, the particle-based evaluation of which follows in Section~\ref{sec:particle-based-implementation}.
Sections~\ref{sec:Type-I} and~\ref{sec:Type-II} introduce likelihood function models and estimators for varying coherence levels.
Section~\ref{sec:PCRLB} derives the \glspl{pcrlb} bounding the coherence-level-dependent estimation performance of our proposed state filters.  
Section~\ref{sec:Results} presents numerical experiments and performance evaluations, which are discussed in Section~\ref{sec:discussion}. 
Finally, Section~\ref{sec:conclusion} concludes the paper.

\textit{Notation:} 
Scalars are denoted by lowercase letters $\mathrm{x}$, 
column vectors by bold lowercase letters $\mathbf{x}$, and matrices by bold uppercase letters $\mathbf{X}$. 
Regular upright font is used for deterministic constants.
\Glspl{rv} are typeset in sans serif, upright font, e.g., $\rv{x}$ and $\RV{x}$, and their realizations in serif, italic font, e.g., $x$ and $\V{x}$. 
$f(\V{x})$, shorthand for $f_{\RV{x}}(\V{x})$, denotes the \gls{pdf} of a continuous \gls{rv} $\RV{x}$.
$f(\V{x}|\V{y})$ is the conditional \gls{pdf} of $\RV{x}$ given $\RV{y}$, a shorthand for $f_{\RV{x}|\RV{y}}(\V{x}|\V{y})$.
We use $\bm{x}^\trp$ and $\bm{x}^\herm$ to denote the transpose and Hermitian transpose of $\bm{x}$, respectively.
The Hadamard product is denoted by $\odot$ and the Kronecker product is denoted by $\otimes$. 
The Dirac delta function is $\delta(\cdot)$.  
The $N\!\times\!N$ identity matrix is denoted by $\eye{N}$.

\begin{figure}
        \vskip 0pt	
        \begin{center}
            \newcommand{\LWrays}{0.05pt}
            \def\datapath{.}
            \setlength{\figurewidth}{0.8\linewidth}
            \tikzexternaldisable    
	           \input{figures/scenario}
            \tikzexternalenable
        \end{center}
        \vspace{-2mm}
        \captionof{figure}{
        Scenario for synthetic data generation with
        $J\!=\!4$ \glspl{pa} 
        from right to left, and 
        random realizations of \gls{mt} trajectories.
        }
        \label{fig:scenario}
    \vspace{-1mm}
\end{figure}

\section{System Model}\label{sec:System-Model}  

We consider a single-antenna \gls{mt} moving on a trajectory of unknown positions $\pos{n}\!\in\!\realsetone{3}$. 
At each time step $n$, the \gls{mt} transmits an uplink pilot with bandwidth $B$ at carrier frequency $\fc$, which is received by a set $\setAnchors\!\coloneqq\!\{1\hdots J\}$ of identical \glspl{pa} at known positions $\posPA{j}\!\in\!\realsetone{3}$ and with orientations modeled by rotation matrices 
$\rotM{j}\!\in\!SO(3)$, each equipped with an $\Nantennas$-antenna \gls{ura}.

\subsection{State Vectors and Observation Model}\label{sec:observation-model}

For sequential Bayesian filtering, we are interested in the \gls{mt} state $\RVstate{n} \!\coloneqq\! [\RVpos{n}^\trp \iist \RVvel{n}^\trp]^\trp \!\in\! \realsetone{6}$, comprising the \gls{mt} position $\RVpos{n}$ and velocity $\RVvel{n}\!\in\!\realsetone{3}$, both of which are modeled as \glspl{rv}. 
The sequence of \gls{mt} states up to time $n$ is denoted by $\RVstate{0:n} \!\coloneqq\! [\RVstate{0}^\trp \iist \cdots \iist \RVstate{n}^\trp]^\trp$.
In this work, we consider a \gls{los} channel\footnote{The system model, methods, and \glspl{pcrlb} readily extend to multipath channels with spatial nonstationarity~\cite{Deutschmann26TSP}.} observation
\begin{align}\label{eq:observation}
    \observation{n}{j} = 
    \amplitude{s}{n}{j} 
    \steerVec{j}(\pos{n})
    +
    \noise{n}{j}
    \quad
    \in \complexsetone{\Nz}
\end{align}
at \gls{pa} $j$ modeled as an amplitude $\amplitude{s}{n}{j}$-scaled steering vector $\steerVec{j}(\pos{n})\!\in\!\complexsetone{\Nz}$ in noise $\noise{n}{j}\!\in\!\complexsetone{\Nz}$.
Here, $\Nfrequency$ frequency-domain samples from each of the $\Nantennas$ antenna elements are stacked, leading to an observation length $\Nz\!=\!\Nfrequency\Nantennas$.
The steering vectors $\steerVec{j}(\pos{n})\!\in\!\complexsetone{\Nz}$ are obtained by using the \gls{mt} positions $\pos{n}$ to parameterize the array response, which we introduce below. 
The noise $\noise{n}{j}$ can represent different noise sources, potentially including a \gls{dmc}~\cite{Fascista25RadioStripes}. 
For simplicity, however, we assume temporally and spatially uncorrelated circularly symmetric complex \gls{awgn}, i.e., $\RVnoise{n}{j}|\RVetan{n}^{\scriptscriptstyle(j)} \!\sim\! \CN(\mathbf{0},\RVetan{n}^{\scriptscriptstyle(j)} \eye{\Nz})$ with noise variance $\RVetan{n}^{\scriptscriptstyle(j)}$, where the noise vectors are mutually independent across \glspl{pa} when conditioned on their noise variances. 
Consequently, the observations are conditionally independent, i.e., $\RVobservation{n}{\grave{\jmath} } \condIndep \RVobservation{n}{\acute{\jmath}}
\,\big|\,
\RVstate{n},\{\RVamplitude{s}{n}{j},\RVetan{n}^{\scriptscriptstyle(j)}\}_{j\in\setAnchors}$ for $\grave{\jmath}\neq \acute{\jmath}$.
The vector of stacked observations is $\RVobservationn{n} \!\coloneqq\! [{\RVobservation{n}{1}}^\trp \ist \hdots \ist {\RVobservation{n}{J}}^\trp]^\trp$ and the sequence of observations up to time $n$ is $\RVobservationn{0:n} \!\coloneqq\! [\RVobservationn{0}^\trp \iist \cdots \iist \RVobservationn{n}^\trp]^\trp$.

\paragraph*{Type-I Likelihood}
Treating the amplitudes $\amplitude{s}{n}{j}$ as deterministic latent variables leads to the Type-I likelihood function
\begin{align}\label{eq:likelihood-Type-I}
    f_{\scriptscriptstyle\mathrm{I}}\big(\observation{n}{j};\state{n},\amplitude{s}{n}{j},\etan{n}^{\scriptscriptstyle(j)}\big) 
    &= 
    \CN\big(\observation{n}{j};
    \amplitude{s}{n}{j}
    \steerVecx{s}{n}{j},
    \etan{n}^{\scriptscriptstyle(j)} \eye{\Nz}\big)
\end{align}
of the observations $\observation{n}{j}$, parameterized by the amplitudes $\amplitude{s}{n}{j}$ and noise variance $\etan{n}^{\scriptscriptstyle(j)}$, where $\steerVecx{s}{n}{j}\!\coloneqq\!\steerVec{j}(\pos{n})$ is a shorthand for a steering vector parameterized by the \gls{mt} state $\state{n}$.

\paragraph*{Type-II Marginal Likelihood}
Treating the amplitudes $\amplitude{s}{n}{j}$ as stochastic latent variables---
often modeled using a complex Gaussian prior \gls{pdf} $\RVamplitude{s}{n}{j}|\RVPFmu{s}{n}{j},\RVPFgamma{s}{n}{j} \!\sim\! \CN(\RVPFmu{s}{n}{j},\RVPFgamma{s}{n}{j})$ with prior mean $\RVPFmu{s}{n}{j}\!\in\!\complexsetone{}$ and variance $\RVPFgamma{s}{n}{j}\!\in\!\realsetone{}_{\scriptscriptstyle\geq 0}$---leads to the Type-II marginal likelihood function
\begin{align}\label{eq:likelihood-Type-II}
    f_{\scriptscriptstyle\mathrm{II}}\big(\observation{n}{j}|\state{n},\PFphi{s}{n}{j},\etan{n}^{\scriptscriptstyle(j)}\big) 
    &= \CN\big(\observation{n}{j};
    \bm{\mu}_{\scriptscriptstyle n}^{\scriptscriptstyle (j)},
    \bm{C}_{\scriptscriptstyle n}^{\scriptscriptstyle (j)}\big)
\end{align}
with mean $\bm{\mu}_{\scriptscriptstyle n}^{\scriptscriptstyle (j)}\!=\!
\PFmu{s}{n}{j}
\steerVecx{s}{n}{j}$
and covariance matrix $\bm{C}_{\scriptscriptstyle n}^{\scriptscriptstyle (j)}\!=\!
\PFgamma{s}{n}{j}
\steerVecx{s}{n}{j}
{\steerVecx{s}{n}{j}}^\herm
+
\etan{n}^{\scriptscriptstyle(j)} \eye{\Nz}$
upon analytical marginalization over the amplitudes $\RVamplitude{s}{n}{j}$.
Conditioned on their prior parameters, the amplitudes $\RVamplitude{s}{n}{j}$ are modeled as independent across time $n$ and i.i.d. across \glspl{pa} $j$.
Together with the conditional independence of the noise vectors, this implies that the Type-II likelihood of the stacked observation factorizes over \glspl{pa}.
For brevity, we define the amplitude state as
$\RVPFphi{s}{n} \!\coloneqq\! [ \RVPFgamma{s}{n}{j} \iist \RVPFmu{s}{n}{j}]^\trp \!\in\! \realsetone{}_{\scriptscriptstyle\geq 0} \!\times\! \complexsetone{}$.
While most Type-II models use a zero-mean amplitude prior, a nonzero-mean prior is necessary for coherent \gls{dmimo} data fusion in the likelihood domain~\cite{Deutschmann26TSP}.

\subsection{Array Response}\label{sec:array-response}
The key to coherent \gls{dmimo} processing---hence aperture gain---lies in retaining a common phase relation across distributed \glspl{pa} that takes the distance-dependent (i.e., near-field) phase shifts to the \gls{mt} into account.

Let $\acute{\V{r}}\!\coloneqq\![\rx\iist\ry\iist\rz]^\trp$ be a vector pointing from the \gls{pa} phase center to the \gls{mt} position in local Cartesian \gls{pa} coordinates, i.e., the \gls{pa}-local frame of reference. 
For data fusion based on the amplitude moduli, we define a path-loss-compensated array response
\begin{align}\label{eq:array-response}
    \V{\psi}(\acute{\V{r}})
    &\coloneqq 
    \frac{\lambda}{\sqrt{4 \pi}}
    \frac{1}{\sqrt{4 \pi} \lVert \acute{\V{r}} \rVert}
    \widetilde{\V{\psi}}(\acute{\V{r}})
\end{align}
with wavelength $\lambda=\frac{\lightspeed}{\fc}$ and
propagation velocity $\lightspeed$.
For data fusion based on the amplitude phases, the carrier-phase-based unit-modulus array response is given by
\begin{align}\label{eq:unit-modulus-array-response}
    \widetilde{\V{\psi}}(\acute{\V{r}})
    &\coloneqq \Big(\bm{b}\big(\delay(\acute{\V{r}})\big)  
        \otimes 
        \bm{a}\big(\elevation(\acute{\V{r}}),\azimuth(\acute{\V{r}})\big)\Big)
        \exp \!\Big(\!\!-\!\mathrm{j}\frac{2\pi}{\lightspeed}\fc 
        \lVert \acute{\V{r}}\rVert 
        \Big)
\end{align}
where $\mathbb{T}^{\Nz}\!\coloneqq\!(\mathbb{S}^1)^{\Nz}$ denotes the $\Nz$-torus and $\mathbb{S}^1\!\coloneqq\!\{x\!\in\!\mathbb{C}\!:\!|x|\!=\!1\}$ the unit circle.
The array response in \eqref{eq:unit-modulus-array-response} is computed from the Kronecker product of the temporal array response
\begin{align}\label{eq:delay-array-response}
    \bm{b}(\delay) \coloneqq 
    \exp\big(\!-\!\mathrm{j} 2 \pi \mathbf{f} \delay\big) \quad \in \mathbb{T}^{\Nfrequency}\,,
\end{align}
in delay $\delay(\acute{\V{r}}) = \lVert \acute{\V{r}} \rVert / \lightspeed$
and the spatial array response in \gls{aoa}~\cite{richter2005estimation,DeutschmannCISA2025}, $\bm{a}(\elevation,\azimuth)\!\coloneqq\! \bm{a}_y(\elevation,\azimuth)\!\otimes\!\bm{a}_z(\elevation)$ in elevation $\elevation(\acute{\V{r}}) = \arccos (\rz/ \lVert \acute{\V{r}} \rVert)$ and azimuth angle $\azimuth(\acute{\V{r}}) = \arctantwo (\ry,\rx)$ in \gls{pa}-local spherical coordinates.
With \gls{ura}-equipped \glspl{pa}, the latter is itself modeled as a Kronecker product of the ``horizontal'' spatial array response
\begin{align}\label{eq:spatial-response-y}
    \bm{a}_y(\elevation,\azimuth) \coloneqq \exp\Big(\mathrm{j} \frac{2 \pi}{\lambda} \mathbf{p}_y \sin(\elevation) \sin(\azimuth) \Big) \quad \in \mathbb{T}^{\Nantennasy}
\end{align}
and the ``vertical'' spatial array response
\begin{align}\label{eq:spatial-response-z}
    \bm{a}_z(\elevation) \coloneqq \exp\Big(\mathrm{j} \frac{2 \pi}{\lambda} \mathbf{p}_z \cos(\elevation) \Big) \quad \in \mathbb{T}^{\Nantennasz}\,.
\end{align}
We use $\mathbf{f}\!\in\!\realsetone{\Nfrequency}$ to denote a vector of $\Nfrequency$ baseband frequencies equally spaced across bandwidth $B$, a vector $\mathbf{p}_y\!\in\! \realsetone{\Nantennasy}$ of $\Nantennasy$ horizontal positions, and a vector $\mathbf{p}_z\!\in\!\realsetone{\Nantennasz}$ of $\Nantennasz$ vertical positions of a ``template'' \gls{ura}. 
This leads to an observation length $\Nz\!=\!\Nfrequency\Nantennasy\Nantennasz$.
We require each of the support vectors $\{\mathbf{f}, \mathbf{p}_y, \mathbf{p}_z\}$ to be symmetric around $0$.
While assuming plane-wave propagation at the \gls{pa} level, \eqref{eq:array-response} accommodates spherical wavefront near-field processing at the infrastructure level.
See~\cite[Sec.\,S-II]{Deutschmann26TSP} for a detailed derivation of this channel model.
Steering vectors $\steerVec{j}(\pos{n}) \!\coloneqq \V{\psi}(\rangep{k}{n}{j})$ are elements of the array manifold $\mathcal{M}\!\coloneqq\!\big\{\V{\psi}(\acute{\V{r}})|\acute{\V{r}}\!\in\! \mathbb{R}^3\setminus\{\mathbf{0}\} \big\}\!\subset\!\complexsetone{\Nz}$, obtained by parameterizing the array response in~\eqref{eq:array-response} with vectors 
\begin{align}\label{eq:rangep}
    \rangep{k}{n}{j}(\pos{n},\posPA{j},\rotM{j}) 
    \!\coloneqq\! \rotM{j}^{-1}\big(\pos{n}\! -\! \posPA{j}\big)  \in \realsetone{3}\,
\end{align}
pointing from \gls{pa} $j$ to the \gls{mt} in \textit{local} Cartesian \gls{pa}-coordinates, i.e., the local frame of reference of \gls{pa} $j$.
Absorbing the \textit{known} \gls{pa} position $\posPA{j}$ and orientation $\rotM{j}$, the \gls{pa}-dependent steering vectors $\steerVec{j}(\pos{n})$ depend only on the \gls{mt} position $\pos{n}$.

\subsection{State-Transition Models}\label{sec:state-transition-model}
All states evolve independently.
We assume first-order Markovity and conditional independence of the state from past measurements, i.e.,
$\RVstate{n}\!\condIndep\!
\big(\RVstate{0:n\!-\!2},\RVobservationn{1:n\!-\!1}\big)
\mid\RVstate{n\!-\!1}$.
Hence the \gls{mt} state evolves according to
$f(\state{n}|\state{0:n-1},\observationn{1:n-1}) =f(\state{n}|\state{n-1})$.
Under the same assumptions, the amplitude state and noise states evolve according to state-transition \glspl{pdf} $f(\PFphi{s}{n}{1}|\PFphi{s}{n-1}{J})$ and $f(\etan{n}^{\scriptscriptstyle(j)}|\etan{n-1}^{\scriptscriptstyle(j)})$, respectively.


\begin{figure*}[!t]
\normalsize
\setcounter{MYtempeqncnt}{\value{equation}}
\setcounter{equation}{12}       
\begin{align}\label{eq:joint-post-PDF}
    &f(\state{0:n},\PFphi{s}{0:n}{j},\etann{0:n}|\observationn{1:n})
    \propto 
    \nonumber \\[-1mm]
    &~~~~~
    \underbrace{
        f(\state{0})
        \Big(
            \prod_{j \in \setAnchors}
            f(\etan{0}^{\scriptscriptstyle(j)})
        \Big)
        f(\PFphi{s}{0}{J})
    }_{\text{Initial prior PDFs}} 
    \Bigg(
    \underbrace{
    \prod_{n'=1}^{n}
        \Big(\prod_{j \in \setAnchors}f(\etan{n'}^{\scriptscriptstyle(j)}|\etan{n'\!-1}^{\scriptscriptstyle(j)})\Big)
        f(\state{n'}|\state{n'\!-1})
        f(\PFphi{s}{n'}{1}|\PFphi{s}{n'\!-1}{J})
    }_{\text{State transition PDFs}}
    \underbrace{
        \prod_{j \in \setAnchors}
        f_{\scriptscriptstyle\mathrm{II}}(\observation{n'}{j}|\state{n'}\!,\PFphi{s}{n'}{j}\!,\etan{n'}^{\scriptscriptstyle(j)})
    }_{\text{Likelihood functions}}
    \Bigg)
\end{align}
\setcounter{equation}{\value{MYtempeqncnt}}
\hrulefill
\vspace{-5mm}%
\end{figure*}

\section{Problem Formulation}\label{sec:problem-formulation}  
Since it is the most comprehensive model, we consider the nonzero-mean Type-II likelihood model in this section, while peculiarities of the Type-I likelihood model are discussed in Section~\ref{sec:Type-I}.

We aim to jointly estimate the \gls{mt} state $\RVstate{n}$, the amplitude state $\RVPFphi{s}{n}$, and the noise variances $\RVetan{n}^{\scriptscriptstyle(j)}$ for all \glspl{pa} $j \in \{1\iist\dots\iist J\}$ based on the observed (thus fixed) measurements $\observationn{1:n} \!\coloneqq\! [\observationn{1}^\trp \ist \hdots \ist \observationn{n}^\trp]^\trp$.

\subsection{State Estimation}\label{sec:estimation}
In Bayesian inference, estimates are computed from marginal posterior \glspl{pdf} $f(\state{n}|\observationn{1:n})$,
$f(\PFphi{s}{n}{j}|\observationn{1:n})$, and 
$f(\etan{n}^{\scriptscriptstyle(j)}|\observationn{1:n})$, for instance, by means of the \gls{mmse} estimator, i.e., as posterior means~\cite{kay1993estimation}
\begin{align}
    \stateHat{n}^{\text{\tiny MMSE}} &= \E({\RVstate{n}|\RVobservationn{1:n}\!=\!\observationn{1:n}})\! = 
    \!\int\!\! \state{n} f(\state{n}|\observationn{1:n}) \mathrm{d}\state{n}\,, \ist
    \label{eq:stateHat}\\
    {\PFphiHat{s}{n}{j}}^{\text{\tiny MMSE}} &= \E({\RVPFphi{s}{n}|\RVobservationn{1:n}\!=\!\observationn{1:n}})\! = 
    \!\int\!\! \PFphi{s}{n}{j} f(\PFphi{s}{n}{j}|\observationn{1:n}) \mathrm{d}\PFphi{s}{n}{j}\,, \ist
    \label{eq:PFphiHat}\\
    \etanHat{n}^{\,\text{\tiny MMSE}\scriptscriptstyle(j)} &= 
    \E({\RVetan{n}^{\scriptscriptstyle(j)}|\RVobservationn{1:n}\!=\!\observationn{1:n}})\! = \!\int\!\! \etan{n}^{\scriptscriptstyle(j)} f(\etan{n}^{\scriptscriptstyle(j)}|\observationn{1:n}) \mathrm{d}\etan{n}^{\scriptscriptstyle(j)} \ist.
    \label{eq:etanHat} 
\end{align}

\subsection{The Factor Graph}\label{sec:factor-graph}

Following the statistical model and assumptions in Section~\ref{sec:System-Model}, the joint posterior \gls{pdf} of $\RVstate{0:n}$, $\RVPFphi{s}{0:n}$, and $\RVetann{0:n}$ conditioned on the measurements $\observationn{1:n}$ can be factorized as shown in~\eqref{eq:joint-post-PDF}\addtocounter{equation}{1}. 
A single time step of the corresponding \gls{fg}~\cite{Kschischang01factorGraphs,Loeliger04IntroFG} is depicted in Fig.\,\ref{fig:factor-graph}. 
This factorization enables the development of an efficient method for computing approximate marginal posterior \glspl{pdf}, referred to as beliefs, i.e., $\belief(\state{n}) \approx f(\state{n}|\observationn{1:n})$, $\belief(\etan{n}^{\scriptscriptstyle(j)}) \approx f(\etan{n}^{\scriptscriptstyle(j)}|\observationn{1:n})$, and $\belief(\PFphi{s}{n}{J}) \approx f(\PFphi{s}{n}{J}|\observationn{1:n})$ as described in the following.

\begin{figure}[h]%
    \centering%
    \includegraphics[width = \linewidth]{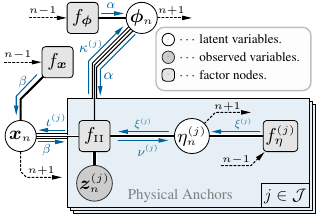}%
    \vspace{-0.2cm}\caption{
    FG representing the joint posterior \gls{pdf} from \eqref{eq:joint-post-PDF} and
    corresponding to VI) the NZM filter with type-II likelihood $f_{\scriptscriptstyle\mathrm{II}}^{\text{\tiny{NZM}}}\!:=\!f\big(\observation{n}{j}|\state{n},\PFphi{s}{n}{j},\etan{n}^{\scriptscriptstyle(j)}\big)$, 
    state-transition \glspl{pdf}
    $f_{\bm{x}}\!\coloneqq\!f(\state{n}|\state{n-1})$,
    $f_{\bm{\phi}}\!\coloneqq\!f(\PFphi{}{n}{}|\PFphi{}{n-1}{})$,
    $f_{\eta}^{\scriptscriptstyle(j)}\!\coloneqq\!f(\etan{n}^{\scriptscriptstyle(j)}|\etan{n-1}^{\scriptscriptstyle(j)})$. 
    Prediction messages are 
    $\alpha\!\coloneqq\!\alpha(\PFphi{s}{n}{j})$,
    $\beta\!\coloneqq\!\beta(\state{n})$, and
    $\Mxi^{\scriptscriptstyle(j)}\!\coloneqq\!\xi(\etan{n}^{\scriptscriptstyle(j)})$. 
    Update messages are 
    $\kappa^{\scriptscriptstyle(j)}\!\coloneqq\!\kappa(\observation{n}{j};\PFphi{s}{n}{j})$,
    $\iota^{\scriptscriptstyle(j)}\!\coloneqq\!\iota(\observation{n}{j};\state{n})$, and
    $\nu^{\scriptscriptstyle(j)}\!\coloneqq\!\nu(\observation{n}{j};\etan{n}^{\scriptscriptstyle(j)})$.
    }%
    \label{fig:factor-graph}%
\end{figure}%

\subsection{BP Method}\label{sec:BP-method}
We compute beliefs through \gls{bp} message passing on the \gls{fg} in Fig.\,\ref{fig:factor-graph} using the
\gls{spa} rules~\cite{Kschischang01factorGraphs,Loeliger04IntroFG}.
On tree-structured \glspl{fg}, \gls{bp} message passing computes exact marginal posterior \glspl{pdf}.
However, the \gls{fg} in Fig.\,\ref{fig:factor-graph} has loops, i.e., cycles, between the \gls{mt} state $\state{n}$ and amplitude state $\PFphi{s}{n}{j}$ variable nodes via the edges connecting to likelihood factor nodes of different \glspl{pa}.
Iteratively applying \gls{bp} message passing on loopy graphs results in loopy \gls{bp}~\cite{Frey97loopyBP}, where beliefs approximate marginal posterior \glspl{pdf}.
On loopy graphs, \gls{bp} does not prescribe a unique message-passing schedule~\cite[Sec.\,V-A]{Kschischang01factorGraphs}.
We specify the following message schedule:

Messages are sent only forward in time from step $n\!-\!1$ to $n$, i.e., filtering rather than smoothing, and every message is computed only once, i.e., a single iteration.
Within this schedule, we define the following message-passing \textit{phases}:
\begin{enumerate}[label=(\roman*)]
    \item State-transition factor-node-to-variable-node prediction messages ($\Mbeta$, $\Mxi^{\scriptscriptstyle(j)}$, $\alpha$) from time $n-1$
    \item Variable-node-to-likelihood-factor-node prior messages ($\Mbeta$, $\Mxi^{\scriptscriptstyle(j)}$, $\alpha$)
    \item Likelihood factor-node-to-variable-node update messages ($\Miota{j}$, $\Mnu{j}$, $\Mkappa{s}{j}$)
\end{enumerate}

\subsection{Belief Calculation}\label{sec:belief-calculation}
Beliefs of states are computed as the product of all messages received at the respective variable nodes:
\begin{align}
    \belief(\state{n})&\propto
    \beta(\state{n})\prod_{\scriptscriptstyle j\in\setAnchors} \iota(\state{n};\observation{n}{j})
    \label{eq:belief-state}
    \\
    \belief(\PFphi{s}{n}{j})
    &\propto 
    \alpha(\PFphi{s}{n}{j}) 
    \prod_{\scriptscriptstyle j\in \setAnchors} 
    \kappa(\PFphi{s}{n}{j};\observation{n}{j}) 
    \label{eq:belief-PFphi}
    \\
    \belief(\etan{n}^{\scriptscriptstyle(j)}) &\propto \xi(\etan{n}^{\scriptscriptstyle(j)}) 
    \nu(\etan{n}^{\scriptscriptstyle(j)};\observation{n}{j}) 
    \label{eq:belief-eta}
\end{align}
The beliefs are nonnegative and we require them to be appropriately normalized to integrate to one, making them \glspl{pdf}.
Consequently, prediction messages will likewise be \glspl{pdf}.
Approximating marginal posterior \glspl{pdf}, these beliefs can be used for state estimation according to Section~\ref{sec:estimation}.
First, we derive prediction messages in Section~\ref{sec:prediction-messages} and then we derive update messages in Section~\ref{sec:update-messages}.

\subsection{Prediction Messages}\label{sec:prediction-messages}

Prediction messages are given by factor-node-to-variable-node messages sent from state-transition (prior) factors to the respective state variables~\cite[eq.\,(6)]{Kschischang01factorGraphs}.
Inserting the state-transition \glspl{pdf} from Section~\ref{sec:state-transition-model}, the prediction \glspl{pdf} become
\begin{align}
    \!\!f(\state{n}|\observationn{1:n-1}) 
    &=
    \int \! f(\state{n}|\state{n-1}) f(\state{n-1}|\observationn{1:n-1}) \mathrm{d}\state{n-1}  \,,
    \label{eq:Mbeta-exact}
    \\
    f(\PFphi{s}{n}{j}|\observationn{1:n-1}) 
    &=
    \int \! 
    f(\PFphi{s}{n}{1}|\PFphi{s}{n-1}{j}) f(\PFphi{s}{n-1}{j}|\observationn{1:n\!-1}) \mathrm{d}\PFphi{s}{n-1}{j}
    \,,
    \label{eq:Malpha-exact}
    \\
    f(\etan{n}^{\scriptscriptstyle(j)}|\observationn{1:n-1}) &= \int \! f(\etan{n}^{\scriptscriptstyle(j)}|\etan{n-1}^{\scriptscriptstyle(j)}) f(\etan{n-1}^{\scriptscriptstyle(j)}|\observationn{1:n-1}) \mathrm{d}\etan{n-1}^{\scriptscriptstyle(j)}
    \label{eq:Mxi-exact}
    \,.
\end{align}
Approximating the true marginal posterior \glspl{pdf} through their respective beliefs leads to the prediction messages 
$\beta(\state{n})\!=\!\int \! f(\state{n}|\state{n-1}) \belief(\state{n-1}) \mathrm{d}\state{n-1}$,
$\alpha(\PFphi{s}{n}{j})\!=\!\int \! 
f(\PFphi{s}{n}{1}|\PFphi{s}{n-1}{j}) \belief(\PFphi{s}{n-1}{j}) \mathrm{d}\PFphi{s}{n-1}{j}$,
and
$\xi(\etan{n}^{\scriptscriptstyle(j)})\!=\! \int \! f(\etan{n}^{\scriptscriptstyle(j)}|\etan{n-1}^{\scriptscriptstyle(j)}) \belief(\etan{n-1}^{\scriptscriptstyle(j)}) \mathrm{d}\etan{n-1}^{\scriptscriptstyle(j)}$.

\subsection{Update Messages}\label{sec:update-messages}
With each new channel observation $\observation{n}{j}$, beliefs are updated through update messages.
In \gls{bp} message passing, update messages are described by factor-node-to-variable-node messages sent from likelihood factors to the respective state variables. 
Messages sent from a factor node to a neighboring variable node are computed by multiplying the factor with all incoming messages from the \textit{other} neighboring variable nodes and marginalizing over those variables~\cite{Kschischang01factorGraphs}. 

Update messages sent from likelihood factor nodes 
$f_{\scriptscriptstyle\mathrm{II}}\big(\observation{n}{j}|\state{n},\PFphi{s}{n}{j},\etan{n}^{\scriptscriptstyle(j)}\big)$
to the \gls{mt} state variable node $\state{n}$ are
\begin{align}\label{eq:Miota-exact}
    \iota\big(\state{n};\observation{n}{j}\big) 
    &\!= 
    \!\! \iint \!
    f_{\scriptscriptstyle\mathrm{II}}\big(\observation{n}{j}|\state{n},\PFphi{s}{n}{j},\etan{n}^{\scriptscriptstyle(j)}\big)
    \alpha(\PFphi{s}{n}{j}) 
    \Mxi(\etan{n}^{\scriptscriptstyle(j)})
    \mathrm{d}\PFphi{s}{n}{j}
    \mathrm{d}\etan{n}^{\scriptscriptstyle(j)}
    \,.
\end{align}

Update messages sent from likelihood factor nodes 
to the amplitude state variable node $\PFphi{s}{n}{j}$ are
\begin{align}\label{eq:Mkappa-exact}
    \kappa\big(\PFphi{s}{n}{j};\observation{n}{j}\big) 
    &\!= 
    \!\! \iint \!
    f_{\scriptscriptstyle\mathrm{II}}\big(\observation{n}{j}|\state{n},\PFphi{s}{n}{j},\etan{n}^{\scriptscriptstyle(j)}\big)
    \beta(\state{n})
    \Mxi(\etan{n}^{\scriptscriptstyle(j)})
    \mathrm{d}\state{n}
    \mathrm{d}\etan{n}^{\scriptscriptstyle(j)}
    \,,
\end{align}
and 
update messages sent from likelihood factor nodes 
to the noise variance variable nodes $\etan{n}^{\scriptscriptstyle(j)}$ are
\begin{align}\label{eq:Mnu-exact}
    \nu\big(\etan{n}^{\scriptscriptstyle(j)};\observation{n}{j}\big) 
    &\!= 
    \!\! \iint \!
    f_{\scriptscriptstyle\mathrm{II}}\big(\observation{n}{j}|\state{n},\PFphi{s}{n}{j},\etan{n}^{\scriptscriptstyle(j)}\big)
     \beta(\state{n})
    \alpha(\PFphi{s}{n}{j}) 
    \mathrm{d}\state{n}
    \mathrm{d}\PFphi{s}{n}{j}
    \,.
\end{align}
These update messages were derived for the \gls{fg} in Fig.\,\ref{fig:factor-graph} under the Type-II marginal likelihood model in \eqref{eq:likelihood-Type-II}, where amplitudes $\amplitude{s}{n}{j}$ were analytically marginalized out, introducing their prior parameters, i.e., the amplitude state $\PFphi{s}{n}{j}$, into the \gls{fg}.
Update messages about the \gls{mt} state---our parameters of interest---are then computed by marginalizing the product of the Type-II likelihood and the prior messages over the amplitude state $\PFphi{s}{n}{j}$ and noise variance $\etan{n}^{\scriptscriptstyle(j)}$.
\Gls{bp} messages would follow analogously under a Type-I likelihood model; however, we choose a different treatment of nuisance parameters.

\subsection{Approximate Moment-Matched Update Messages}\label{sec:moment-matching}

Beliefs \eqref{eq:belief-state}--\eqref{eq:belief-eta} and the resulting prediction messages \eqref{eq:Mbeta-exact}--\eqref{eq:Mxi-exact} are general non-Gaussian \glspl{pdf}.
Consequently, the exact update messages are likewise non-Gaussian and the marginalization integrals in \eqref{eq:Miota-exact}--\eqref{eq:Mnu-exact} do not generally admit a closed-form evaluation.
In Section~\ref{sec:particle-based-implementation}, we introduce \glspl{pr} of beliefs and prediction messages.
Representing each prediction message through a separate set of particles, the Monte Carlo integrations approximating \eqref{eq:Miota-exact}--\eqref{eq:Mnu-exact} would involve sums over different sets of particles, which would scale cubically in the number of particles $P$ and would become prohibitive for large $P$.
The stacking trick~\cite{Meyer16DistrTracking}---stacking particles of different prior messages---could be used to retain an implementation scaling linearly in $P$ but would result in a solution working with a large stacked state and may suffer from the curse of dimensionality~\cite{Bengtsson08curseOfDimensionality}, a common problem of sample-based estimators operating with high-dimensional states~\cite{Wielandner22Diss}. 
For that reason, we follow the \textit{moment-matching} approach in~\cite{Davies22MomentMatching,Mingchao23TBT,LiaMey:Asilomar2024,LiaLeiMey:TSP2025,Deutschmann26TSP}, in which the exact update messages in \eqref{eq:Miota-exact}--\eqref{eq:Mnu-exact} are interpreted as \glspl{pdf} in $\observation{n}{j}$ conditioned on the respective state being updated and approximated by complex Gaussian \glspl{pdf} with matching conditional means and covariance matrices.
That is, we formulate the approximate messages 
$\widetilde{\iota}(\state{n};\observation{n}{j}) \!\coloneqq\! \mathcal{CN}\big(\observation{n}{j};\muiota,\Ciota\big)$,~
$\widetilde{\nu}(\etan{n}^{\scriptscriptstyle(j)};\observation{n}{j})\!\coloneqq\! \mathcal{CN}\big(\observation{n}{j};\munu,\Cnu\big)$, and
$\widetilde{\kappa}(\PFphi{s}{n}{j};\observation{n}{j})\!\coloneqq\! \mathcal{CN}\big(\observation{n}{j};\mukappa,\Ckappa\big)$.

\begingroup             
\allowdisplaybreaks[4] 
\begin{proposition}
The first raw moments of the messages are
\begin{align}
    \muiota(\state{n}) 
    &\coloneqq 
    \E_{\iota} \!
    \left(
        \RVobservation{n}{j}|\RVstate{n}
    \right)
    =
    \musnj{1}(\state{n})
    \label{eq:muiota}
    \\
    \munu
    &\coloneqq 
    \E_{\nu} \!
    \left(
        \RVobservation{n}{j}|\RVetan{n}^{\scriptscriptstyle(j)}
    \right)
    =
    \musnj{3}
    \\
    \mukappa(\PFphi{s}{n}{j})
    &\coloneqq 
    \E_{\kappa} \!
    \left(
        \RVobservation{n}{j}|\RVPFphi{s}{n}
    \right)
    = 
    \musnj{2}(\PFphi{s}{n}{j}) 
\end{align}
and the second central moments of the messages are
\begin{align}
    &\Ciota(\state{n}) 
    \!\coloneqq\!
    \E_{\iota} 
    \!
    \Big(
        \RVobservation{n}{j}{\RVobservation{n}{j}}^\herm|\RVstate{n}
    \Big)
    \!-\!
    \E_{\iota} \!
    \left(
        \RVobservation{n}{j}|\RVstate{n}
    \right) 
    \E_{\iota} \!
    \left(
        \RVobservation{n}{j}|\RVstate{n}
    \right)^\herm
    \label{eq:Ciota}
    \\
    &=
    \eye{\Nz} \etaXi{n} \!+\!
     \Csnj{1}(\state{n}) 
    \!+\!
    \Kiota\!(\state{n}) \!-\! \muiota{\muiota}^{\!\herm}\!(\state{n}) 
    \nn \\ 
    &\Cnu(\etan{n}^{\scriptscriptstyle(j)}) 
    \!\coloneqq \!
    \E_{\nu} 
    \!
    \Big(
        \!\RVobservation{n}{j}{\RVobservation{n}{j}}^\herm|\RVetan{n}^{\scriptscriptstyle(j)}\!
    \Big)
    \!-\!
    \E_{\nu} \!
    \left(
        \RVobservation{n}{j}|\RVetan{n}^{\scriptscriptstyle(j)}\!
    \right) 
    \!\E_{\nu} \!
    \left(
        \RVobservation{n}{j}|\RVetan{n}^{\scriptscriptstyle(j)}
    \right) ^\herm
    \nn \\
    &=
    \eye{\Nz} \etan{n}^{\scriptscriptstyle(j)} +
     \Csnj{3}
    \!+\!
    \Knu  \!-\! \munu{\munu}^{\!\herm}
    \\ 
    &\Ckappa(\PFphi{s}{n}{j})
    \!\coloneqq\!
    \E_{\kappa} 
    \!
    \Big(
        \! \RVobservation{n}{j}{\RVobservation{n}{j}}^\herm|\RVPFphi{s}{n} \!
    \Big)
    \!-\!
    \E_{\kappa} \!
    \left(
        \RVobservation{n}{j}|\RVPFphi{s}{n}
    \right) 
    \!
    \E_{\kappa} \!
    \left(
        \RVobservation{n}{j}|\RVPFphi{s}{n}
    \right)^{\!\herm}
    \nn 
    \\
    &=
    \eye{\Nz} \etaXi{n}
    +\!
     \Csnjp{2}(\PFphi{s}{n}{j}) 
    +
    \Kkappa(\PFphi{s}{n}{j}) \!-\! \mukappa{\mukappa}^{\!\herm}(\PFphi{s}{n}{j}) 
    \label{eq:Ckappa}
\end{align}
with the noise variance prior mean $\etaXi{n}\!\coloneqq\!\int\etan{n}^{\scriptscriptstyle(j)}\Mxi(\etan{n}^{\scriptscriptstyle(j)})\mathrm{d}\etan{n}^{\scriptscriptstyle(j)}$ and abbreviated moment terms
\begin{align}
    \musnj{1}(\state{n}) 
    &\coloneqq \int \PFmu{\!\overline{s},\underline{s}\!}{n}{j} 
    \steerVecx{s}{n}{j}
    \alpha(\PFphi{s}{n}{j}) \mathrm{d}\PFphi{s}{n}{j} 
    \label{eq:musnj1}
    \\
    \musnj{2}(\PFphi{s}{n}{j}) &\coloneqq  \!\!
    \int \! 
    \PFmu{\!\overline{s},\underline{s}\!}{n}{j}
    \steerVecx{s}{n}{j} 
    \beta(\state{n}) \mathrm{d}\state{n}
    \label{eq:musnj2}
    \\
    \musnj{3} &\coloneqq \iint
    \PFmu{\!\overline{s},\underline{s}\!}{n}{j} 
    \steerVecx{s}{n}{j}
    \beta(\state{n})
    \mathrm{d}\state{n}
    \alpha(\PFphi{s}{n}{j})  
    \mathrm{d}\PFphi{s}{n}{j}
    \label{eq:musnj3}
    \\
    \Csnj{1}(\state{n}) &\coloneqq
    \int \PFgamma{\!\overline{s},\underline{s}\!}{n}{j} \steerVecx{s}{n}{j} {\steerVecx{s}{n}{j}}^\herm 
    \alpha(\PFphi{s}{n}{j}) \mathrm{d}\PFphi{s}{n}{j} 
    \label{eq:Csnj1}
    \\
    \Csnj{2}(\PFphi{s}{n}{j}) &\coloneqq
    \int \PFgamma{\!\overline{s},\underline{s}\!}{n}{j}
    \steerVecx{s}{n}{j} {\steerVecx{s}{n}{j}}^\herm 
    \beta(\state{n}) \mathrm{d}\state{n}
    \label{eq:Csnj2}
    \\
    \Csnj{3} 
    &\coloneqq
    \iint
    \! \PFgamma{\!\overline{s},\underline{s}\!}{n}{j} 
    \steerVecx{s}{n}{j} {\steerVecx{s}{n}{j}}^\herm
    \!\beta(\state{n})\mathrm{d}\state{n}
    \alpha(\PFphi{s}{n}{j})  
    \mathrm{d}\PFphi{s}{n}{j}
    \label{eq:Csnj3}
    \\
    \Kiota(\state{n}) 
    &\coloneqq
    \int 
    \bm{\mu}_{\scriptscriptstyle n}^{\scriptscriptstyle (j)}
    {\bm{\mu}_{\scriptscriptstyle n}^{\scriptscriptstyle (j)}}^\herm 
    \alpha(\PFphi{s}{n}{j}) \mathrm{d}\PFphi{s}{n}{j} 
    \label{eq:Kiota}
    \\
    \Kkappa(\PFphi{s}{n}{j}) 
    &\coloneqq
    \int 
    \bm{\mu}_{\scriptscriptstyle n}^{\scriptscriptstyle (j)}
    {\bm{\mu}_{\scriptscriptstyle n}^{\scriptscriptstyle (j)}}^\herm 
    \beta(\state{n}) \mathrm{d}\state{n}
    \label{eq:Kkappa}
    \\
    \Knu
    &\coloneqq
    \iint
    \bm{\mu}_{\scriptscriptstyle n}^{\scriptscriptstyle (j)}
    {\bm{\mu}_{\scriptscriptstyle n}^{\scriptscriptstyle (j)}}^\herm 
    \beta(\state{n})\mathrm{d}\state{n}
    \alpha(\PFphi{s}{n}{j})  
    \mathrm{d}\PFphi{s}{n}{j}
    \label{eq:Knu}
\end{align}
\end{proposition}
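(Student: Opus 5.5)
The plan is to treat each exact update message in \eqref{eq:Miota-exact}--\eqref{eq:Mnu-exact} as a continuous mixture of complex Gaussians in $\observation{n}{j}$. The mixing density is the product of the incoming prior messages over the variables being marginalized. The moments then follow from the laws of total expectation and total covariance. Take $\iota$ as the representative case. Since $\alpha$ and $\Mxi$ are normalized \glspl{pdf}, $\iota(\state{n};\cdot)$ integrates to one in $\observation{n}{j}$. It is therefore the density of $\RVobservation{n}{j}$ obtained by first drawing $(\PFphi{s}{n}{j},\etan{n}^{\scriptscriptstyle(j)})\sim\alpha\,\Mxi$ and then $\RVobservation{n}{j}\sim\CN(\bm{\mu}_{\scriptscriptstyle n}^{\scriptscriptstyle (j)},\bm{C}_{\scriptscriptstyle n}^{\scriptscriptstyle (j)})$, with $\state{n}$ fixed. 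Interchanging the order of integration (Fubini, assuming finite second moments of the prior messages) gives
$\E_\iota(\RVobservation{n}{j}|\RVstate{n})=\iint \bm{\mu}_{\scriptscriptstyle n}^{\scriptscriptstyle (j)}\,\alpha\,\Mxi\,\mathrm{d}\PFphi{s}{n}{j}\mathrm{d}\etan{n}^{\scriptscriptstyle(j)}$.
The mean $\bm{\mu}_{\scriptscriptstyle n}^{\scriptscriptstyle (j)}=\PFmu{s}{n}{j}\steerVecx{s}{n}{j}$ does not depend on $\etan{n}^{\scriptscriptstyle(j)}$, so the $\Mxi$-integral equals one and the result is $\musnj{1}(\state{n})$ from \eqref{eq:musnj1}.

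For the second moment, I will use $\E(\RVobservation{n}{j}{\RVobservation{n}{j}}^\herm|\cdot)=\bm{C}_{\scriptscriptstyle n}^{\scriptscriptstyle (j)}+\bm{\mu}_{\scriptscriptstyle n}^{\scriptscriptstyle (j)}{\bm{\mu}_{\scriptscriptstyle n}^{\scriptscriptstyle (j)}}^\herm$ for a complex Gaussian and integrate it against the mixing density. Substituting $\bm{C}_{\scriptscriptstyle n}^{\scriptscriptstyle (j)}=\PFgamma{s}{n}{j}\steerVecx{s}{n}{j}{\steerVecx{s}{n}{j}}^\herm+\etan{n}^{\scriptscriptstyle(j)}\eye{\Nz}$ splits the integral into three terms, each depending on only a subset of the variables:
\begin{itemize}
\item the noise term integrates to $\etaXi{n}\eye{\Nz}$, because only $\Mxi$ acts on it;
\item the $\PFgamma{s}{n}{j}$ term gives $\Csnj{1}(\state{n})$;
\item the outer-product term gives $\Kiota(\state{n})$.
\end{itemize}
Subtracting $\muiota\muiota^\herm$ yields \eqref{eq:Ciota}.

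The remaining two messages follow the same pattern with different mixing densities. For $\kappa$ the mixing density is $\beta(\state{n})\Mxi$ with $\PFphi{s}{n}{j}$ fixed, so the steering vectors are averaged over $\beta$. This gives $\musnj{2}$, $\Csnj{2}$, $\Kkappa$ and again $\etaXi{n}\eye{\Nz}$. For $\nu$ the mixing density is $\beta\,\alpha$ with $\etan{n}^{\scriptscriptstyle(j)}$ fixed. The noise term therefore stays as $\etan{n}^{\scriptscriptstyle(j)}\eye{\Nz}$, and the double integrals give $\musnj{3}$, $\Csnj{3}$ and $\Knu$.

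I do not expect any step to be a real obstacle. Two points need care. The first is justifying that each message is a normalized density in $\observation{n}{j}$, so that calling these quantities moments is legitimate; this uses the normalization of the beliefs stated in Section~\ref{sec:belief-calculation}, together with the fact that each Gaussian integrates to one. The second is keeping track of which variables each factor depends on. In particular, in $\Kiota$ the factor $\steerVecx{s}{n}{j}$ cannot be pulled out of $\PFmu{s}{n}{j}$-moments jointly with the remaining integral. For this reason $\Kiota$, $\Kkappa$ and $\Knu$ are left as stated rather than simplified, and the same holds for the $\mu$-averaged and $\gamma$-averaged terms under $\beta$.
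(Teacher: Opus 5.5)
Your proposal is correct and follows essentially the paper's route: each exact update message is a normalized continuous Gaussian mixture in $\observation{n}{j}$ whose mixing density is the product of the incoming prior messages. The stated moments then follow by exchanging the order of integration and applying $\E(\RVobservation{n}{j}{\RVobservation{n}{j}}^\herm|\cdot)=\bm{C}_{\scriptscriptstyle n}^{\scriptscriptstyle (j)}+\bm{\mu}_{\scriptscriptstyle n}^{\scriptscriptstyle (j)}{\bm{\mu}_{\scriptscriptstyle n}^{\scriptscriptstyle (j)}}^\herm$ term by term.

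Your closing side remark is slightly off, though it does not affect the proof. In $\Kiota(\state{n})$ the steering vector depends only on the fixed $\state{n}$, so it does factor out: $\Kiota(\state{n})={\steerVecx{s}{n}{j}}{\steerVecx{s}{n}{j}}^\herm\int|\PFmu{s}{n}{j}|^2\alpha(\PFphi{s}{n}{j})\,\mathrm{d}\PFphi{s}{n}{j}$. What genuinely does not simplify is that this second moment of $\PFmu{s}{n}{j}$ under $\alpha$ differs from its squared mean, which is why $\Kiota\neq\muiota{\muiota}^{\herm}$ in general.
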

\endgroup
\begin{proof}
See our Extended Derivations in~\cite{ThisPaperDerivations}.
\end{proof} %
We use the approximations $\Kiota\!(\state{n}) \!\approx\! \muiota{\muiota}^{\!\herm}\!(\state{n})$, $\Knu  \!\approx\! \munu{\munu}^{\!\herm}$, and $\Kkappa(\PFphi{s}{n}{j}) \!\approx\! \mukappa{\mukappa}^{\!\herm}(\PFphi{s}{n}{j})$, allowing us to parameterize the approximate Gaussian update messages solely 
through means and covariance matrices 
\eqref{eq:musnj1}--\eqref{eq:Csnj3}.

\section{Type-I Estimators}\label{sec:Type-I}  
The amplitude moduli $|\amplitude{s}{n}{j}|$ and noise variance $\etan{n}^{\scriptscriptstyle(j)}$ contain negligible information about the parameters of interest, i.e., the \gls{mt} state $\state{n}$; hence they are treated as nuisance parameters.
Two pathways for the treatment of these nuisance parameters are particularly convenient:
profiling (i.e., concentration at \gls{ml} estimates) or closed-form marginalization.
We apply the former to the following Type-I estimators, while the latter is applied to the Type-II estimators.
Since the nuisance parameters are concentrated out in the Type-I estimators, they do not appear as variable nodes in the corresponding \glspl{fg}. 


\subsection{Snapshot-Based NC Type-I Estimator}
\begin{figure}[h]%
    \centering%
    \includegraphics[width = \linewidth]{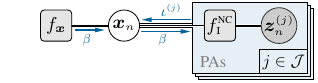}%
    \vspace{-0.2cm}\caption{FG of I) the snapshot-based NC estimator with $f_{\scriptscriptstyle\mathrm{I}}^{\text{\tiny{NC}}}\!\coloneqq\!f_{\scriptscriptstyle\mathrm{I}}^{\text{\tiny{NC}}}(\observation{n}{j};\state{n})$
    and a fixed prior \gls{pdf} $f_{\bm{x}}\!\coloneqq\!f(\state{n})$.
    }%
    \label{fig:main-I-NC-snapshot}%
\end{figure}%
We compute the \textit{profile} likelihood, i.e., we \textit{concentrate} w.r.t. the nuisance parameters using \gls{ml} estimates~\cite{KrimViberg96ASP,Deutschmann24SPAWC,DeutschmannCISA2025}
\begin{align}
    \amplitudeHat{n}{j}|\state{n} &= 
    \frac{ 
        {\steerVecx{s}{n}{j}}^\herm \observation{n}{j} 
    }{
        \lVert {\steerVecx{s}{n}{j}} \rVert^2
    } \,,
\\
    \etaHat{n}{j}|\state{n} &= 
    \frac{ 
        \lVert \observation{n}{j} \rVert^2
    }{
        \Nz
    }
    -
    \frac{ 
        \big|
            {\steerVecx{s}{n}{j}}^\herm \observation{n}{j} 
        \big|^2
    }{
        \Nz \lVert \steerVecx{s}{n}{j} \rVert^2
    }
\end{align}
conditional on the state $\state{n}$.
Reinsertion of $\amplitudeHat{n}{j}|\state{n}$ and $\etaHat{n}{j}|\state{n}$ into~\eqref{eq:likelihood-Type-I} yields the noncoherent Type-I profile likelihood function
\begin{align}\label{eq:likelihood-Type-I-NC}
    f_{\scriptscriptstyle\mathrm{I}}^{\text{\tiny{NC}}}\big(\observation{n}{j};\state{n}\big) 
    &= 
    \CN\big(\observation{n}{j};
    \amplitudeHat{n}{j}
    \steerVec{j}(\state{n}),
    \etaHat{n}{j}\eye{\Nz}\big)\,.
\end{align}
Fig.\,\ref{fig:main-I-NC-snapshot} shows the \gls{fg} of our noncoherent snapshot-based Type-I estimator \estLabel{S}.
Here, the update message $\iota(\state{n};\observation{n}{j})\!=\!f_{\scriptscriptstyle\mathrm{I}}^{\text{\tiny{NC}}}\big(\observation{n}{j};\state{n}\big)$ corresponds to the per-\gls{pa} likelihood function.
Due to conditionally independent observations, the joint (over \glspl{pa}) profile likelihood factorizes into a product over per-\gls{pa} likelihoods $\prod_{j\in\setAnchors}f_{\scriptscriptstyle\mathrm{I}}^{\text{\tiny{NC}}}\big(\observation{n}{j};\state{n}\big)$.
Using both the \gls{pcrlb} in Section~\ref{sec:PCRLB} and the estimators in this section, we find that \textit{coherence} is fundamentally tied to the number of \textit{nuisance phases}---contained in complex amplitudes---that need to be estimated:
Concentration w.r.t. one nuisance phase per \glspl{pa} $j$ yields the \textit{noncoherent} profile likelihood function (jointly over \glspl{pa}) depicted in Fig.\,\ref{fig:LHFs}\,a), which is unimodal, yet has limited curvature around the true \gls{mt} position.
The belief is computed according to \eqref{eq:belief-state} with $\beta(\state{n})\!=\!f(\state{n})$ acting as a fixed prior message given by, e.g., a uniform \gls{pdf} $f(\state{n})$ that is not sequentially updated.

\subsection{NC Type-I Filter}
\begin{figure}[h]%
    \centering%
    \includegraphics[width = \linewidth]{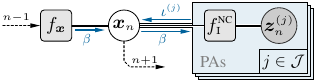}%
    \vspace{-0.2cm}\caption{FG of II) the NC filter with likelihood $f_{\scriptscriptstyle\mathrm{I}}^{\text{\tiny{NC}}}:=f_{\scriptscriptstyle\mathrm{I}}^{\text{\tiny{NC}}}(\observation{n}{j};\state{n})$ and state-transition \gls{pdf} $f_{\bm{x}}\!\coloneqq\!f(\state{n}|\state{n-1})$.}%
    \label{fig:main-I-NC}%
\end{figure}%
In contrast, the noncoherent Type-I filter \estLabel{NC} sequentially computes prediction messages $\beta(\state{n})$ through \eqref{eq:Mbeta-exact}.
In the corresponding \gls{fg} in Fig.\,\ref{fig:main-I-NC}, the prior factor is the state-transition \gls{pdf} $f(\state{n}|\state{n-1})$, over which the previous posterior represented by the belief $\belief(\state{n-1})$ is propagated.
It uses the same update messages $\iota(\state{n};\observation{n}{j})\!=\!f_{\scriptscriptstyle\mathrm{I}}^{\text{\tiny{NC}}}\big(\observation{n}{j};\state{n}\big)$ as its snapshot-based counterpart, and gains information relative to its snapshot-based counterpart by propagating the belief from one time step to the next, i.e., by state filtering.

\subsection{C Type-I Filter}
\begin{figure}[h]%
    \centering%
    \includegraphics[width = \linewidth]{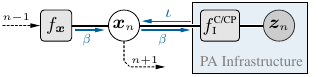}%
    \vspace{-0.2cm}\caption{FG of 
    III) the C filter with $f_{\scriptscriptstyle\mathrm{I}}^{\text{\tiny{C}}}:=f_{\scriptscriptstyle\mathrm{I}}^{\text{\tiny{C}}}(\observationn{n};\state{n})$ or 
    IV) the CP filter with $f_{\scriptscriptstyle\mathrm{I}}^{\text{\tiny{CP}}}:=f_{\scriptscriptstyle\mathrm{I}}^{\text{\tiny{CP}}}(\observationn{n};\state{n})$, 
    and state-transition \gls{pdf} $f_{\bm{x}}\!\coloneqq\!f(\state{n}|\state{n-1})$.}%
    \label{fig:main-I-C}%
\end{figure}%
\begin{figure*}[t]
    \centering%
    \newcommand{\datapath}{figures/LHFs}
    \setlength{\figurewidth}{0.335\linewidth}
    \input{figures/LHFs/LHFs.tex}%
    \vspace{-0.1cm}\caption{
    Type-I likelihood functions of filters 
    a) Noncoherent (NC),
    b) coherent (C),
    c) carrier-phase-based (CP), 
    and Type-II likelihood functions of filters 
    d) zero-mean (ZM),
    e) stacked zero-mean (ZM-S),
    f) nonzero-mean (NZM)
    evaluated at time step $n\!=\!175$ on the expected \gls{mt} trajectory.
    The likelihood functions are evaluated on a cutting plane parallel to the $xy$-plane and vertically colocated with the \gls{mt} marked by a red circle.
    The subplots show the relative concentrated (Type-I) and marginal (Type-II) log likelihoods, with their maxima subtracted. 
    The underlying likelihoods are $\prod_{j\in\setAnchors}\widetilde{\iota}(\state{n};\observation{n}{j})$ for distributed processing and $\widetilde{\iota}(\state{n};\observationn{n})$ for stacked processing.
    }%
    \label{fig:LHFs}%
\end{figure*}
We present two options for coherent \gls{dmimo} data fusion:
i) in the data domain and ii) in the likelihood domain.
In the case of the former, we \textit{stack} observations, i.e., $\observationn{n}$, and
steering vectors 
$\steerVecxBar{n}\!\coloneqq\!
\steerVecBar(\state{n})\!\coloneqq\!
\big[ 
    {\steerVecx{s}{n}{1}}^\trp
    \ist \hdots \ist
    {\steerVecx{s}{n}{J}}^\trp
\big]^\trp\in\complexsetone{J \Nz}$.
The coherent Type-I model constrains \gls{pa}-local amplitudes to a common complex amplitude,
$\amplitude{s}{n}{j}\!=\!\amplitudeBar{n}$ for all $j\in\setAnchors$.
For simplicity, we also constrain \gls{pa}-local noise variances to one common noise variance
$\etan{n}^{\scriptscriptstyle(j)}\!=\!\etan{n}$, yielding the (stacked) Type-I likelihood
\begin{align}\label{eq:likelihood-Type-I-stacked}
f\big(\observationn{n};\state{n},\amplitudeBar{n},\etan{n}\big) 
= 
\CN\big(\observationn{n};
\amplitudeBar{n}
\steerVecxBar{n},
\etan{n} \eye{J\Nz}\big) \,.
\end{align}
Again, we compute \gls{ml} estimates
\begin{align}\label{eq:amplitudeBarHat}
    \amplitudeBarHat{n}\big|\state{n} &= 
    \frac{ 
        {\steerVecxBar{n}}^\herm \observationn{n} 
    }{
        \lVert {\steerVecxBar{n}} \rVert^2
    } \,,
\\
    \etaHatBar{n}\big|\state{n} &= 
    \frac{ 
        \lVert \observationn{n} \rVert^2
    }{
        J \Nz
    }
    -
    \frac{ 
        \big|
            {\steerVecxBar{n}}^\herm \observationn{n}
        \big|^2
    }{
        J \Nz \lVert \steerVecxBar{n} \rVert^2
    }
\end{align}
conditional on the state $\state{n}$, reinsert them in \eqref{eq:likelihood-Type-I-stacked}, and obtain the \textit{coherent} Type-I profile likelihood function
\begin{align}\label{eq:likelihood-Type-I-C}
    f_{\scriptscriptstyle\mathrm{I}}^{\text{\tiny{C}}}\big(\observationn{n};\state{n}\big) 
    &= 
    \CN\big(\observationn{n};
    \amplitudeBarHat{n}
    \steerVecBar(\state{n}),
    \etaHatBar{n}\eye{J\Nz}\big)\,.
\end{align}
Fig.\,\ref{fig:main-I-C} shows the \gls{fg} of our coherent Type-I filter \estLabel{C}.
Here, the single update message $\iota(\state{n};\observationn{n})\!=\!f_{\scriptscriptstyle\mathrm{I}}^{\text{\tiny{C}}}\big(\observationn{n};\state{n}\big)$ corresponds to the Type-I profile likelihood function.
Prediction messages $\beta(\state{n})$ are again computed through \eqref{eq:Mbeta-exact} and the belief through \eqref{eq:belief-state}.

Concentration w.r.t. one amplitude (and nuisance phase) over the entire \gls{pa} infrastructure yields the coherent profile likelihood function depicted in Fig.\,\ref{fig:LHFs}\,b), which is multimodal, but has a higher curvature around the true \gls{mt} position than the noncoherent likelihood.

\subsection{CP Type-I Filter}
In contrast, the carrier-phase-based Type-I filter \estLabel{CP} assumes that the phase relations between all \glspl{pa} and the \gls{mt} are perfectly known (i.e., calibrated) 
so that no nuisance phase has to be estimated and only a nuisance modulus 
$\amplitudeBarCP{n}\!\in\!\realsetone{\,}_{\scriptscriptstyle\geq0}$ has to be estimated.

\begingroup             
\allowdisplaybreaks[4] 
\begin{proposition}
The \textit{carrier-phase-based} Type-I (stacked) profile likelihood function is
\begin{align}\label{eq:likelihood-Type-I-CP}
    f_{\scriptscriptstyle\mathrm{I}}^{\text{\tiny{CP}}}\big(\observationn{n};\state{n}\big) 
    &= 
    \CN\big(\observationn{n};
    \amplitudeBarHatCP{n}
    \steerVecBar(\state{n}),
    \etaHatBarCP{n}\eye{J\Nz}\big)
\end{align}
obtained by concentration using the \gls{ml} estimators
\begin{align}
    \amplitudeBarHatCP{n}
    &\!=\!
    \frac{
    \max \Big(
    0,
    \Re\big( \observationn{n}^\herm\steerVecBar_n \big)
    \Big)
    }{\big\| 
    \steerVecBar_n 
    \big\|^2}\,
    \\
    \etaHatBarCP{n} 
    &\!=\! 
    \frac{
    \big\| 
    \observationn{n}
    \big\|^2
    }{J \Nz}
    -
    \frac{
        (\amplitudeBarHatCP{n})^2
        \big\| 
        \steerVecBar_n 
        \big\|^2
    }
    {J \Nz}\,.
\end{align}
\end{proposition}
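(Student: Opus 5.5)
The plan is to maximize the stacked Type-I likelihood \eqref{eq:likelihood-Type-I-stacked} with the common complex amplitude $\amplitudeBar{n}$ restricted to a nonnegative real modulus $\amplitudeBarCP{n}\geq 0$. Under the carrier-phase assumption, the phase term $\exp(-\mathrm{j}\frac{2\pi}{\lightspeed}\fc\lVert\acute{\V{r}}\rVert)$ in \eqref{eq:unit-modulus-array-response} already carries the full deterministic phase relation. Maximization proceeds in two stages: first over the modulus for fixed noise variance, then over the noise variance. Substituting both maximizers back yields \eqref{eq:likelihood-Type-I-CP}, in direct analogy to the coherent case \eqref{eq:amplitudeBarHat}.

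For fixed $\etan{n}>0$, the negative log-likelihood equals, up to constants, $J\Nz\log\etan{n}+\etan{n}^{-1}\lVert\observationn{n}-a\,\steerVecxBar{n}\rVert^2$ with $a\in\mathbb{R}_{\geq 0}$. Expanding the squared norm gives
\begin{align*}
\lVert\observationn{n}-a\,\steerVecxBar{n}\rVert^2=\lVert\observationn{n}\rVert^2-2a\,\Re\big(\observationn{n}^\herm\steerVecxBar{n}\big)+a^2\lVert\steerVecxBar{n}\rVert^2 ,
\end{align*}
which uses that $a$ is real, so the cross terms combine into a real part. This is a convex quadratic in $a$. Its unconstrained minimizer is $\Re(\observationn{n}^\herm\steerVecxBar{n})/\lVert\steerVecxBar{n}\rVert^2$. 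Projecting onto $[0,\infty)$ gives the minimizer over the constraint set, since for a one-dimensional convex quadratic the constrained optimum is the clipped vertex. This produces the $\max(0,\cdot)$ form, independent of $\etan{n}$.

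Next, substitute $\amplitudeBarHatCP{n}$. In both cases the residual simplifies to $\lVert\observationn{n}\rVert^2-(\amplitudeBarHatCP{n})^2\lVert\steerVecxBar{n}\rVert^2$:
\begin{itemize}
\item In the interior case, the optimality condition gives $\Re(\cdot)=\amplitudeBarHatCP{n}\lVert\steerVecxBar{n}\rVert^2$, so $-2a\Re(\cdot)+a^2\lVert\cdot\rVert^2=-a^2\lVert\cdot\rVert^2$.
\item In the clipped case $a=0$, the formula holds trivially.
\end{itemize}
Minimizing $J\Nz\log\etan{n}+R/\etan{n}$ over $\etan{n}$ yields $\etan{n}=R/(J\Nz)$, which is the stated $\etaHatBarCP{n}$.

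The only delicate point is the nonnegativity constraint. I will justify that the modulus, rather than a real signed amplitude, is the right parameterization: a negative real amplitude would reintroduce a phase ambiguity of $\pi$. I will then verify that the boundary case is handled by the convexity argument. A minor caveat to note is that $\etaHatBarCP{n}>0$ requires $\observationn{n}$ not to lie exactly on the ray spanned by $\steerVecxBar{n}$, which holds almost surely under \gls{awgn}.
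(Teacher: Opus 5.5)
Your proposal is correct and follows essentially the same route as the paper's proof in the Supplementary Material (Sec.~S-I): expand the residual norm into a quadratic in the real modulus, solve its stationarity condition and clip at zero, then solve the stationarity condition in the noise variance and substitute back. Your convexity argument for the clipping, the explicit check of the boundary case $\widehat{a}=0$, and the remark that the modulus estimate does not depend on $\eta_n$ make rigorous what the paper only states as ``enforcing the constraint''.
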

\endgroup
\begin{proof}
See Supplementary Material, Sec.\,S-I.
\end{proof} %

Apart from the likelihood and update message $\iota(\state{n};\observationn{n})\!=\!f_{\scriptscriptstyle\mathrm{I}}^{\text{\tiny{CP}}}\big(\observationn{n};\state{n}\big)$, the \gls{fg} in Fig.\,\ref{fig:main-I-C} remains the same as for the coherent Type-I filter.

The carrier-phase-based profile likelihood function depicted in Fig.\,\ref{fig:LHFs}\,c) is highly multimodal due to the distance-dependence of the carrier phase but has a very high curvature around the true \gls{mt} position.

Notably, for the presented Type-I filters, our \gls{bp} method reduces to Bayesian state filtering, and its particle-based implementation corresponds to a particle filter.

\section{Type-II Estimators}\label{sec:Type-II}  
In the case of the Type-II estimators, the amplitudes have been marginalized out analytically, i.e., $f_{\scriptscriptstyle\mathrm{II}}\big(\observation{n}{j}|\state{n},\PFphi{s}{n}{j},\etan{n}^{\scriptscriptstyle(j)}\big)\!=\!\int f_{\scriptscriptstyle\mathrm{I}}\big(\observation{n}{j}|\state{n},\amplitude{s}{n}{j},\etan{n}^{\scriptscriptstyle(j)}\big) f\big(\amplitude{s}{n}{j}|\PFmu{s}{n}{j},\PFgamma{s}{n}{j}\big) \mathrm{d}\amplitude{s}{n}{j}$.
However, the amplitude prior parameters $\PFphi{s}{n}{j}$ still need to be tracked together with the noise variance $\etan{n}^{\scriptscriptstyle(j)}$.
This is done by the \gls{bp} approach described in Section~\ref{sec:problem-formulation}.
We distinguish three versions of Type-II filters for the \gls{dmimo} state-filtering problem.

\subsection{ZM Type-II Filter}
The first version corresponds to the model described in Section~\ref{sec:problem-formulation} assuming a zero-mean amplitude prior, i.e., $\RVamplitude{s}{n}{j}|\RVPFgamma{s}{n}{j} \!\overset{\scriptscriptstyle\mathrm{i.i.d.}}{\sim}\! \CN(0,\RVPFgamma{s}{n}{j})$.
This leads to our zero-mean Type-II filter \estLabel{ZM}.
Its Type-II marginal likelihood function is
\begin{align}\label{eq:likelihood-Type-II-ZM}
    f_{\scriptscriptstyle\mathrm{II}}^{\text{\tiny{ZM}}}\big(\observation{n}{j}|\state{n},\PFphi{s}{n}{j},\etan{n}^{\scriptscriptstyle(j)}\big) 
    &= \CN\big(\observation{n}{j};
    \bm{0},
    \bm{C}_{\scriptscriptstyle n}^{\scriptscriptstyle (j)}\big)\,.
\end{align}
While it shares a common amplitude prior \textit{variance} across \glspl{pa}, it does not share a common complex amplitude realization and therefore does not preserve the phase relation (hence coherence) across \glspl{pa}.
Its likelihood function in Fig.\,\ref{fig:LHFs}\,d) resembles that of \estLabel{NC} in Fig.\,\ref{fig:LHFs}\,a).

\subsection{Stacked ZM Type-II Filter}
The second version differs from \estLabel{ZM} by assuming that all \glspl{pa} share the same \textit{realization} $\amplitude{s}{n}{j}=\amplitudeBar{n} ~\forall j\!\in\! \setAnchors$ of a zero-mean complex amplitude $\RVamplitudeBar{n}|\RVPFgamma{s}{n}{j} \!\sim\! \CN({0},\RVPFgamma{s}{n}{j})$ and one common noise variance $\etan{n}$, rather than having independent \gls{pa}-local amplitude realizations and noise variances.
It performs coherent data fusion in the data domain by stacking the observations and steering vectors, as in \estLabel{C}. 
The shared amplitude realization enforces a rigid relative phase relation (and coherence) across \glspl{pa}.
This leads to our \textit{stacked} zero-mean Type-II filter \estLabel{ZM-S}.
Its Type-II marginal likelihood function is
\begin{align}\label{eq:likelihood-Type-II-ZM-S}
    f_{\scriptscriptstyle\mathrm{II}}^{\text{\tiny{ZM-S}}}\big(\observationn{n}|\state{n},\PFphi{s}{n}{j},\etan{n}\big) 
    &= \CN\big(\observationn{n};
    \bm{0},
    \bm{C}_{\scriptscriptstyle n}\big)\,,
\end{align}
where $\bm{C}_{\scriptscriptstyle n}\!=\!
\PFgamma{s}{n}{j}
{\steerVecxBar{n}}
{\steerVecxBar{n}}^\herm
+
\etan{n} \eye{J\Nz}$.
This likelihood is depicted in Fig.\,\ref{fig:LHFs}\,e) and resembles that of \estLabel{C} in Fig.\,\ref{fig:LHFs}\,b).

\subsection{NZM Type-II Filter}
The third version corresponds exactly to the model described in Section~\ref{sec:problem-formulation} with a nonzero-mean amplitude prior  
$\RVamplitude{s}{n}{j}|\RVPFmu{s}{n}{j},\RVPFgamma{s}{n}{j} \!\overset{\scriptscriptstyle\mathrm{i.i.d.}}{\sim}\! \CN(\RVPFmu{s}{n}{j},\RVPFgamma{s}{n}{j})$, yielding our nonzero-mean Type-II filter \estLabel{NZM}.
Its Type-II marginal likelihood function is
\begin{align}\label{eq:likelihood-Type-II-NZM}
    f_{\scriptscriptstyle\mathrm{II}}^{\text{\tiny{NZM}}}\big(\observation{n}{j}|\state{n},\PFphi{s}{n}{j},\etan{n}^{\scriptscriptstyle(j)}\big) 
    &= \CN\big(\observation{n}{j};
    \bm{\mu}_{\scriptscriptstyle n}^{\scriptscriptstyle (j)},
    \bm{C}_{\scriptscriptstyle n}^{\scriptscriptstyle (j)}\big)\,.
\end{align}
It performs soft-coherent data fusion in the likelihood domain.
Despite allowing different \gls{pa}-local amplitude realizations, the common nonzero prior mean $\PFmu{s}{n}{j}$ promotes, without rigidly enforcing, a common phase relation among them.
Its likelihood function in Fig.\,\ref{fig:LHFs}\,f) resembles that of \estLabel{CP} in Fig.\,\ref{fig:LHFs}\,c) if most signal power is captured by the prior means $\PFmu{s}{n}{j}$, and it resembles that of \estLabel{NC} in Fig.\,\ref{fig:LHFs}\,a) if most signal power is captured by the prior variance $\PFgamma{s}{n}{j}$~(cf.\,\cite[Fig.\,5]{Deutschmann26TSP}).

\section{Particle-Based Implementation}\label{sec:particle-based-implementation}  

Since the particle-based implementation of our \gls{bp} method using the Type-I likelihood functions from Section~\ref{sec:Type-I} corresponds to a particle filter~\cite{Arulampalam02PFtutorial} on the \gls{mt} state, we present only the particle-based implementation of \estLabel{ZM} and \estLabel{NZM}.
The particle-based implementation of \estLabel{ZM-S} follows analogously by stacking the observations and steering vectors.

Because the beliefs are general non-Gaussian \glspl{pdf}, the integrals required by our \gls{bp} method do not generally admit analytical closed-form solutions. 
We therefore introduce a numerically feasible particle-based implementation of the proposed \gls{bp} method, in which \glspl{pdf} are approximated by \glspl{pr}~\cite{LeitMeyHlaWitTufWin:TWC2019,LiaLeiMey:TSP2025}, i.e., sets of $P$ weighted particles.
That is, we approximate the beliefs in~\eqref{eq:belief-state}--\eqref{eq:belief-eta} using their \glspl{pr}
$\belief(\state{n}) \approxcolon 
\sum_{\scriptscriptstyle p=1}^{\scriptscriptstyle P} \weight{\bm{x},n}{p} \delta(\state{n}\!- \particle{\state{n}}{p})$,
$\belief(\PFphi{s}{n}{j}) \approxcolon 
\sum_{\scriptscriptstyle p=1}^{\scriptscriptstyle P} \weight{\bm{\phi},n}{p} \delta(\PFphi{s}{n}{j}\!- \particle{\PFphi{s}{n}{j}}{p})$
and
$\belief(\etan{n}^{\scriptscriptstyle(j)}) \approxcolon 
\sum_{\scriptscriptstyle p=1}^{\scriptscriptstyle P}
\weight{\eta,n}{j,p} \delta(\etan{n}^{\scriptscriptstyle(j)}\!- \particle{\etan{n}}{j,p})$.
For these \glspl{pr} to represent \glspl{pdf}, we require their weights to be nonnegative and sum to one, i.e., 
$\sum_{\scriptscriptstyle p=1}^{\scriptscriptstyle P}\weight{\bm{x},n}{p}\!=\!1$,
$\sum_{\scriptscriptstyle p=1}^{\scriptscriptstyle P}\weight{\bm{\phi},n}{p}\!=\!1$,
and 
$\sum_{\scriptscriptstyle p=1}^{\scriptscriptstyle P}\weight{\eta,n}{j,p}\!=\!1$.

\subsection{Prediction Messages}
By approximating marginal posterior \glspl{pdf} at time $n\!-\!1$ with the \glspl{pr} of their beliefs, and evaluating the marginalization integral
\begin{align}
    f(\state{n}|\observationn{1:n-1}) 
    \!&= \!\!
    \int \!\! f(\state{n}|\state{n-1}) f(\state{n-1}|\observationn{1:n-1}) \mathrm{d}\state{n-1}
    \nn \\[-2pt]
    &\approx\!\!
    \int \!\! f(\state{n}|\state{n-1}) 
    \!
    \sum_{\scriptscriptstyle p=1}^{\scriptscriptstyle P} \weight{\bm{x},n-1}{p} \delta(\state{n-1}\!- \particle{\state{n-1}}{p})
    \mathrm{d}\state{n-1}
    \nn \\[-2pt]
    &=
    \sum\nolimits_{\scriptscriptstyle p=1}^{\scriptscriptstyle P} 
    \weight{\bm{x},n-1}{p} f(\state{n}|\particle{\state{n-1}}{p}) \,,\label{eq:PR-prediction-PDF}
\end{align}
we obtain a prediction \gls{pdf} parameterized by particles. 
As is done in \gls{sir}~\cite{Arulampalam02PFtutorial},
for each particle $\particle{\state{n-1}}{p}$ we now draw one particle 
$\particle{\state{n}}{p}$ 
from the state-transition \gls{pdf} $f(\state{n}|\particle{\state{n-1}}{p})$ used as the proposal \gls{pdf} 
to obtain a \gls{pr} for the prediction \gls{pdf}
$f(\state{n}|\observationn{1:n-1}) \!\approx\! \beta(\state{n}) \!\approx \!\sum\nolimits_{\scriptscriptstyle p=1}^{\scriptscriptstyle P}
\weight{\beta,n}{p} \delta\big( \state{n}-\particle{\state{n}}{p} \big)$,
where $\weight{\beta,n}{p}\!=\!\weight{\bm{x},n-1}{p}$ if we can directly sample from $f(\state{n}|\particle{\state{n-1}}{p})$.
The prediction message for the amplitude state is
$\alpha(\PFphi{s}{n}{j})\approx \sum\nolimits_{\scriptscriptstyle p=1}^{\scriptscriptstyle P}
\weight{\alpha,n}{p} \delta(\PFphi{s}{n}{j}\!-\! \particle{\PFphi{s}{n}{j,p}}{p}) $
with $\weight{\alpha,n}{p}\!=\!\weight{\bm{\phi},n-1}{p}$ 
and $\particle{\PFphi{s}{n}{1,p}}{p}$ sampled from $f(\PFphi{s}{n}{1}|\particle{\PFphi{s}{n-1}{J,p}}{p})$.
We likewise compute the noise variance prediction messages 
$\xi(\etan{n}^{\scriptscriptstyle(j)})\approx \sum\nolimits_{\scriptscriptstyle p=1}^{\scriptscriptstyle P}
\weight{\xi,n}{j,p} \delta\big( \etan{n}^{\scriptscriptstyle(j)}-\particle{\etan{n}}{j,p} \big)$
with $\weight{\xi,n}{j,p}\!=\!\weight{\eta,n-1}{j,p}$ 
and $\particle{\etan{n}}{j,p}$ sampled from $f(\etan{n}^{\scriptscriptstyle(j)}|\particle{\etan{n-1}}{j,p})$.

\subsection{Update Messages}\label{sec:PR-update-messages}
Now that \glspl{pr} of prediction messages have been introduced, we can compute the approximate update messages.
In particular, the moments in \eqref{eq:muiota}--\eqref{eq:Ckappa} are approximated using the terms in \eqref{eq:musnj1}--\eqref{eq:Csnj3},
where we insert \glspl{pr} of prediction messages to evaluate the marginalization integrals therein.
In particular, we evaluate
$\widetilde{\iota}(\particle{\state{n}}{p};\observation{n}{j}) \!=\! \mathcal{CN}\big(\observation{n}{j};\muiota(\particle{\state{n}}{p}),\Ciota(\particle{\state{n}}{p})\big)$,
$\widetilde{\kappa}(\particle{\PFphi{s}{n}{j,p}}{p};\observation{n}{j})\!=\! \mathcal{CN}\big(\observation{n}{j};\mukappa(\particle{\PFphi{s}{n}{j,p}}{p}),\Ckappa(\particle{\PFphi{s}{n}{j,p}}{p})\big)$, and
$\widetilde{\nu}(\particle{\etan{n}}{j,p};\observation{n}{j})\!=\! \mathcal{CN}\big(\observation{n}{j};\munu,\Cnu(\particle{\etan{n}}{j,p})\big)$, 
with \glspl{pr} of mean and covariance terms
from Supplementary Material, Sec.\,S-III. 

\subsection{Beliefs}
With \glspl{pr} of both prediction 
and update messages, \glspl{pr} of the beliefs~\eqref{eq:belief-state}--\eqref{eq:belief-eta} are 
$\belief(\state{n}) \!\approx \!
\sum_{\scriptscriptstyle p=1}^{\scriptscriptstyle P} \weight{\bm{x},n}{p} \delta(\state{n}\!- \particle{\state{n}}{p})$
for the \gls{mt} state,
$\belief(\PFphi{s}{n}{j}) \!\approx\!
\sum_{\scriptscriptstyle p=1}^{\scriptscriptstyle P}
\weight{\bm{\phi},n}{p} \delta(\PFphi{s}{n}{j}\!-\! \particle{\PFphi{s}{n}{j,p}}{p})$ for the amplitude state,
and 
$\belief(\etan{n}^{\scriptscriptstyle(j)}) \!\approx\!
\sum_{\scriptscriptstyle p=1}^{\scriptscriptstyle P}
\weight{\eta,n}{j,p} \delta(\etan{n}^{\scriptscriptstyle(j)}\!- \particle{\etan{n}}{j,p})$
for the noise variance with weights 
$\weight{\bm{x},n}{p}\!=\!\frac{\weightt{\bm{x},n}{p}}{\normConst{\bm{x},n}{}}$, 
$\weight{\bm{\phi},n}{p}\!=\! 
\frac{\weightt{\bm{\phi},n}{p}}{\normConst{\bm{\phi},n}{}}$, and
$\weight{\eta,n}{j,p}\!=\!\frac{\weightt{\eta,n}{j,p}}{\normConst{\eta,n}{(j)}}$
with
\begin{align}
    \weightt{\bm{x},n}{p} 
    &\coloneqq 
    \weight{\beta,n}{p}
    \prod_{\scriptscriptstyle j\in \setAnchors}
    \mathcal{CN}\Big(\observation{n}{j};\muiota(\particle{\state{n}}{p}),\Ciota(\particle{\state{n}}{p})\Big)
    \\
    \weightt{\bm{\phi},n}{p}
    &\coloneqq 
    \weight{\alpha,n}{p}
    \prod_{\scriptscriptstyle j\in \setAnchors}
    \mathcal{CN}\big(\observation{n}{j};\mukappa(\particle{\PFphi{s}{n}{j,p}}{p}),\Ckappa(\particle{\PFphi{s}{n}{j,p}}{p})\big)
    \label{eq:weightt-phi}
    \\
    \weightt{\eta,n}{j,p}
    &\coloneqq 
    \weight{\xi,n}{j,p}
    \mathcal{CN}\Big(\observation{n}{j};\munu,\Cnu(\particle{\etan{n}}{j,p})\Big)
\end{align}
derived 
in Supplementary Material, Sec.\,S-II, and normalized with constants
$\normConst{\bm{x},n}{}\!=\!{\sum_{\scriptscriptstyle p=1}^{\scriptscriptstyle P} \weightt{\bm{x},n}{p}}$, 
$\normConst{\eta,n}{(j)}\!=\!{\sum_{\scriptscriptstyle p=1}^{\scriptscriptstyle P}
\weightt{\eta,n}{j,p}}$, and
$\normConst{\bm{\phi},n}{}\!=\!\sum_{\scriptscriptstyle p=1}^{\scriptscriptstyle P}
\weightt{\bm{\phi},n}{p}$.
Inner products in the update messages derived in our Extended Derivations~\cite{ThisPaperDerivations} and summarized in Supplementary Material, Sec.\,S-III, dominate the total computational complexity (i.e., \textit{work}) of our algorithm, which scales as $\mathcal{O}(P\,J\,\Nz)$ per time step.
Parallelization can greatly reduce runtime in a \gls{gpu}-accelerated implementation, cf.\,Section~\ref{sec:runtime}.
Following the work--depth model~\cite{Blelloch96parallelComputing}, we distinguish the total computational \textit{work} from the \textit{depth}, defined as the longest chain of sequentially dependent operations.
Under perfect parallelization, the 
\textit{depth} per time step would be 
$\mathcal{O}(\log P + \log J + \log \Nz)$, dominated by
(i) sums over $P$ particles during state estimation,
(ii) data fusion message products of $J$ \glspl{pa}, implemented as sums in the log domain, and
(iii) sums over $\Nz$ observation bins in inner products of \gls{bp} update messages.

After computing the belief of each state, we perform systematic resampling~\cite[Alg.\,2]{Arulampalam02PFtutorial} which reduces particle degeneracy and implies equal weights of resampled particles.
To counteract particle impoverishment, the resampled \gls{mt} state \gls{pr} is convolved with a zero-mean Gaussian regularization kernel with covariance matrix moment-matched to the second central moment of the belief
and scaled by the squared optimal kernel bandwidth $\mathrm{h}_{\text{\tiny opt}}$~\cite[p.\,253]{Musso2001PFCH12}.

\section{Posterior Cram\'er--Rao Lower Bound}\label{sec:PCRLB}  
We derive one classic \gls{crlb} and three \glspl{pcrlb} for the \gls{dmimo} positioning problem.
First, we decompose amplitudes%
\footnote{For deriving the \gls{pcrlb}, instead of the steering vectors $\steerVec{j}$, amplitudes $\widetilde{\unslant[-.25]{\varrho}}_{\!\scriptscriptstyle n}^{\scriptscriptstyle(j)}$ absorb the path loss as well as the carrier-phase term $\exp \!(\!-\mathrm{j}\frac{2\pi}{\lightspeed}\fc \lVert \acute{\V{r}}\rVert )$ from~\eqref{eq:unit-modulus-array-response}. For details, see Supplementary Material, Sec.\,S-IV.} 
into moduli $\rv{a}_{\!\scriptscriptstyle n}^{\scriptscriptstyle(j)}\!\in\!\realsetone{\,}_{\scriptscriptstyle\geq0}$ and phases $\unslant[-.25]{\varphi}_{\!\scriptscriptstyle n}^{\scriptscriptstyle(j)}$ and stack them into vectors
$\RVmodulivec{n}{}\!\coloneqq\!
\big[
{\rv{a}_{\!\scriptscriptstyle n}^{\scriptscriptstyle(1)}}\!
\ist\hdots\ist
{\rv{a}_{\!\scriptscriptstyle n}^{\scriptscriptstyle(J)}}
\big]^{\!\trp}\!\in\!\realsetone{J}_{\scriptscriptstyle\geq0}$
and 
$\RVphasevec{n}{}\!\coloneqq\!
\!\in\!\realsetone{\dimPhase}$.
We find that the level of coherence is fundamentally tied to the number $\dimPhase$ of nuisance phases to be estimated, leading to three different \glspl{pcrlb}:

(i) The 
\textit{noncoherent} \gls{pcrlb} is obtained by treating each component phase as a separate \gls{rv} per \gls{pa}, i.e., 
$\RVphasevec{n}{}\!=\!
\big[
\unslant[-.25]{\varphi}_{\!\scriptscriptstyle n}^{\scriptscriptstyle(1)}
\ist\hdots\ist
\unslant[-.25]{\varphi}_{\!\scriptscriptstyle n}^{\scriptscriptstyle(J)}
\big]^{\!\trp}$ and $\dimPhase\!=\!J$.
(ii) The \textit{coherent} \gls{pcrlb} is obtained by treating the phase 
$\unslant[-.25]{\varphi}_{\!\scriptscriptstyle n}^{\scriptscriptstyle(j)}
\!\eqqcolon\!
\unslant[-.25]{\varphi}_{\!\scriptscriptstyle n}\,\forall j\!\in\!\setAnchors$
as a single \gls{rv} common to the distributed \glspl{pa} $j,j'\!\in\!\setAnchors$, i.e., 
$\RVphasevec{n}{}\!=\!
\unslant[-.25]{\varphi}_{\!\scriptscriptstyle n}$
and $\dimPhase\!=1$.
(iii) The \textit{carrier-phase-based} \gls{pcrlb} is obtained by treating all phases as known, hence $\RVphasevec{n}{}\!=\![~]$ is an empty vector and $\dimPhase\!=\!0$.

We stack all parameters into a joint state vector
$\RVetaglobal{n}\!\coloneqq\!
\big[ \RVstate{n}^\trp \iist {\RVphasevec{n}{}}^\trp \iist {\RVmodulivec{n}{}}^{\trp} \iist \RVetan{n}\big]^\trp \! \in \!\realsetone{\dimGlobal}$
of dimension $\dimGlobal\!=\!6+\dimPhase\!+\!J\!+\!1$.
For this \gls{pcrlb} derivation and the following experiments, we assume a common noise variance across all \glspl{pa}, i.e., $\etan{n}^{\scriptscriptstyle(j)}\!\eqqcolon\!\etan{n}\, \forall j \!\in\! \setAnchors$.

We are ultimately interested in obtaining the global~\gls{pcrlb}
\begin{align}\label{eq:posteriorCRLB}
	\PCRLB = 
	\big(
		\FIMglobal{n} + \FIMstep{n}{n\!-\!1}
	\big)^{-1} %
	\quad \in  \realset{\dimGlobal}{\dimGlobal}
\end{align}
that is a lower bound on the \gls{mse} matrix~\cite[eq.\,(29)]{VanTrees2007PCRLB} 
of any estimator%
\footnote{The expectation is to be taken under the joint \gls{pdf} $f(\etaglobal{n}, \observationn{n}|\observationn{1:n-1})$.
The notation $\M{X} \succeq \M{0}$ is to be interpreted as $\M{X}$ being positive semidefinite~\cite{kay1993estimation}.} 
$\mathbb{E}
\!\big((\RVetaglobalHat{n}\!-\RVetaglobal{n} ) (\RVetaglobalHat{n}\!-\RVetaglobal{n} )^\trp\big) \succeq \PCRLB$.
The \gls{pcrlb} matrix $\PCRLB$ is the inverse of the posterior information matrix 
$\FIMstep{n}{n}\!\coloneqq\!\FIMglobal{n} \!+\! \FIMstep{n}{n\!-\!1}$
that is computed through the information fusion of the information matrix $\FIMglobal{n}$ about the global parameters of interest $\RVetaglobal{n}$ obtained from a snapshot of observations $\observationn{n}$ at the current time step $n$ with the predicted information matrix 
$\FIMstep{n}{n\!-\!1}$. 
Under a linear Gaussian state-transition \gls{pdf} 
$f(\etaglobal{n}|\etaglobal{n-1})\!=\!\mathcal{N}(\etaglobal{n};\transitionmatrix \etaglobal{n-1},\processNoiseCov)$,
the predicted 
information matrix is~\cite[eq.\,(16)]{Hernandez02PCRLB}
\begin{align}\label{eq:priorFIM}
	\FIMstep{n}{n\!-\!1} = 
	\left(
		\transitionmatrix \, \FIMstep{n\!-\!1}{n\!-\!1}^{-1} \, \transitionmatrix^\trp + \processNoiseCov
	\right)^{-1} \,,
\end{align}
with 
state-transition matrix $\transitionmatrix$ and process noise covariance matrix $\processNoiseCov$.
Using the \gls{pcrlb} matrix in~\eqref{eq:posteriorCRLB}, we define the \textit{Bayesian} (i.e., posterior) \gls{peb} as $\PEB\!\coloneqq\!\sqrt{\tr\big(\big[\PCRLB\big]_{\scriptscriptstyle1:3,1:3}\big)}$. 

\subsection{Global Snapshot FIM}\label{sec:FIMg}  
The Bayesian snapshot information matrix $\FIMglobal{n} \!=\! \mathbb{E}_{\RVetaglobalSmall{n}|\RVobservationn{1:n-1}}\!(\FIMclassic{n})$ is computed as the expectation under the prior \gls{pdf} $f(\etaglobal{n}|\observationn{1:n-1})$ of the \textit{classic} snapshot \gls{fim}~\cite{Tichavsky98PCRLB}.
Assuming that each anchor $j$ contributes independent information about $\RVetaglobal{n}$, i.e., assuming conditionally independent observations $\observation{n}{j}\!$, 
the classic snapshot \gls{fim}~\cite{Fascista23RadioStripesICC}
\begin{align}\label{eq:FIMclassic}
    \FIMclassic{n} = 
    \sum\nolimits_{\scriptscriptstyle j=1}^{\scriptscriptstyle J} \jacobgn{j} \FIMch{n}{j} {\jacobgn{j}}^\trp
    \quad \in  \realset{\dimGlobal}{\dimGlobal}
\end{align}
is computed as the 
sum of the local channel-\gls{fim} $\FIMch{n}{j} \!\in\! \realset{\dimLocal}{\dimLocal}$ contributions from all $J$ \glspl{pa}, propagated via the Jacobian matrices $\jacobgn{j}\!\coloneqq\! \nicefrac{\partial {\etach{n}{j}}^{\!\!\trp}}{\partial \etaglobal{n}} \!\in\!  \realset{\dimGlobal}{\dimLocal}$ from local channel parameter level to global parameter level.
The Jacobian matrices in~\eqref{eq:FIMclassic} are derived in Supplementary Material, Sec.\,S-IV-B.
The \textit{classic} snapshot \gls{crlb} $\classicCRLB$ is the inverse of the submatrix $\big[\FIMclassic{n}\big]_{\bm{i}_{\text{\tiny{S}}},{\bm{i}_{\text{\tiny{S}}}}}$ of the snapshot \gls{fim} in \eqref{eq:FIMclassic} with ${\bm{i}_{\text{\tiny{S}}}}\!=\![1\!:\!3,7\!:\!\dimGlobal]$, evaluated at the true parameters $\etaglobal{n}$ that generated the observed data.
Because no prior information (i.e., $\FIMstep{n}{n\!-\!1}$) is sequentially injected, it lower-bounds the estimation error attainable from a snapshot of data observed at time $n$.
The \textit{classic} snapshot \gls{peb} is then defined as
$\PEB^{\text{\tiny F}}\!\coloneqq\!\sqrt{\tr\big(\big[\classicCRLB\big]_{\scriptscriptstyle1:3,1:3}\big)}$.

\begin{figure*}
    \vskip 0pt	
    \begin{center}
        \def\datapath{.}
        \setlength{\figurewidth}{0.97\linewidth}
        \setlength{\figureheight}{0.3\linewidth}
        \tikzexternaldisable    
           \input{figures/results/PEB}
        \tikzexternalenable
    \end{center}
    \vspace{-4mm}
    \captionof{figure}{Position RMSEs 
    vs. \glspl{peb} evaluated on synthetic data (left). 
    Cumulative frequency of the position errors (right).
    The \gls{mc} analysis used \num{5000} runs.
    The derived (P)\glspl{crlb} reveal three stages (i)--(iii) of information gain.
    }
    \label{fig:PEB}
\end{figure*}

\subsection{Local Per-Anchor Channel FIM}\label{sec:FIMch}  

The \gls{pa}-local channel parameter vector 
$\RVetach{n}{j}\!\coloneqq\!\big[
{\RVelVecx{n}{j}}\iist 
{\RVazVecx{n}{j}} \iist  
{\RVdelayVecx{n}{j}} \iist 
\unslant[-.25]{\varphi}_{\!\scriptscriptstyle n}^{\scriptscriptstyle(j)} \iist 
{\rv{a}_{\!\scriptscriptstyle n}^{\scriptscriptstyle(j)}} \iist
\RVetan{n}\big]^\trp \! \in \! \realsetone{\dimLocal}$
has dimension
$\dimLocal\!=\!6$
and contains the elevation and azimuth angles, delay, and amplitude phase and modulus of the \gls{los} component at each \gls{pa} $j$, together with the noise variance $\RVetan{n}$.
The local channel \gls{fim} at \gls{pa} $j$ is defined as~\cite[eq.\,(10)]{VanTrees2007PCRLB} 
$\FIMch{n}{j}\!\coloneqq\!-\mathbb{E}_{\RVobservation{n}{j}|\RVetach{n}{j}}\!
\big( 
\nabla_{\!\etach{n}{j}} 
(
	\nabla_{\!\etach{n}{j}}
	\! \ln f(\observation{n}{j}|\etach{n}{j})
)^\trp
\big)$, which corresponds to 
\begin{align}\label{eq:FIMch}
	\hspace{-2mm}\big[\FIMch{n}{j}\big]_{\scriptstyle \grave{\imath},\acute{\imath}} \! = 
	\begin{cases}
		\!
		\frac{\Nz}{{\etan{n}}^2},
		\qquad\qquad\qquad~
		[\etach{n}{j}]_{\scriptscriptstyle \grave{\imath}}
		\equiv 
		[\etach{n}{j}]_{\scriptscriptstyle \acute{\imath}}
		\equiv 
		\etan{n}
		\\
		\frac{2}{\etan{n}}\Re \left(
			\frac{\partial {\widetilde{{\varrho}}_{\!\scriptscriptstyle n}^{\scriptscriptstyle(j)^{\!\ast}}} {\signalatomnl{k}{n}{j}}^{\!\herm} }{\partial [\etach{n}{j}]_{\scriptscriptstyle \grave{\imath}}}
			\frac{\partial {\widetilde{{\varrho}}_{\!\scriptscriptstyle n}^{\scriptscriptstyle(j)}} \signalatomnl{k'}{n}{j} }{\partial [\etach{n}{j}]_{\scriptscriptstyle \acute{\imath}}}
		\right),
		\quad
		\text{else}
	\end{cases}\hspace{-2mm}
\end{align}
under a complex Gaussian likelihood $f(\observation{n}{j}|\etach{n}{j})$~\cite[Sec. 15.7]{kay1993estimation}.
The individual local channel \gls{fim} entries in~\eqref{eq:FIMch} are derived in Supplementary Material, Sec.\,S-IV-A.

\section{Experiment and Results}\label{sec:Results}  

The geometric scenario (apart from the surfaces), system parameters, prior \glspl{pdf}, and state-transition \glspl{pdf} follow~\cite{Deutschmann26TSP}.

\subsection{Simulation Setup}

\paragraph*{Initial Prior PDFs}
For state filters, initial \gls{mt} state particles
$\{\particle{\state{0}}{p}\}_{p=1}^{P}$ are drawn from
$
f(\state{0})
=
\mathcal{N}(\state{0};\mathbf{x}_{\scriptscriptstyle 0},\mathbf{C}_{\scriptscriptstyle 0}),
$
with
$
\mathbf{C}_{\scriptscriptstyle 0}
=
\operatorname{blkdiag}
\big(
\sigma_{\scriptscriptstyle\mathrm{p},0}^2\eye{3},
\sigma_{\scriptscriptstyle\mathrm{v},0}^2\eye{3}
\big),
$
where $\mathbf{x}_{\scriptscriptstyle 0}$ is the mean of the initial prior
at a fictional time $n\!=\!0$.
The Gaussian prior \gls{pdf} also provides the corresponding initialization of the \gls{pcrlb} recursion, detailed in Supplementary Material, Sec.\,S-IV.
Except for \estLabel{CP}, all estimators can alternatively be initialized from a broad uniform \gls{mt} state prior \gls{pdf}.

Noise variance particles are drawn from a uniform prior 
$f(\etan{0}^{\scriptscriptstyle(j)})\!=\!\mathcal{U}(\etan{0}^{\scriptscriptstyle(j)};\eta_{\text{\tiny min}},\eta_{\text{\tiny max}})$ with $\eta_{\text{\tiny min}}\!=\!10^{-9}$ and $\eta_{\text{\tiny max}}\!=\!10^{-4}$.
For the snapshot-based estimator \estLabel{S}, we draw from a prior \gls{pdf} $f(\state{n})
\!=\!
\mathcal{U}(\pos{n};\pos{n}^\star\!-\!\mathbf{p}_{\text{\tiny S}},\pos{n}^\star\!+\!\mathbf{p}_{\text{\tiny S}})
\,\mathcal{U}(\vel{n};-\mathbf{v}_{\text{\tiny max}},\mathbf{v}_{\text{\tiny max}})$ centered at the true position $\pos{n}^\star$ 
of the current time step, with $\mathbf{p}_{\text{\tiny S}}\!=\! [0.2 \iist 0.2 \iist 0.2]^\trp \SI{}{\metre}$ chosen small to mitigate particle-weight degeneracy.

\paragraph*{\Gls{mt} state evolution}
We choose a linear Gaussian \gls{mt} state-transition \gls{pdf} 
$f(\state{n}|\state{n-1})=\mathcal{N}(\state{n};\transitionmatrix_{\text{\tiny a}}\state{n-1},\processNoiseCov_{\text{\tiny a}})$ defined according to a \gls{ncv} state-transition model~\cite[Sec.\,6.3.2]{BarShalom04Tracking}
with state transition matrix
\begin{align*}
\left[\transitionmatrix_{\text{\tiny a}}\right]_{\scriptscriptstyle \grave{\imath},\acute{\imath}} = 
    \begin{cases}
        1, & \grave{\imath}=\acute{\imath} \\
        \mathrm{T}, & 
        (\grave{\imath},\acute{\imath}) \!=\! (1,4)
        \lor 
        (\grave{\imath},\acute{\imath}) \!=\! (2,5)
        \lor 
        (\grave{\imath},\acute{\imath}) \!=\! (3,6)
        \\
        0, & \text{else}
    \end{cases}
\end{align*}
where $\mathrm{T}$ is the time interval between time steps.
The kinematic process noise covariance matrix is
$\processNoiseCov_{\text{\tiny a}}\!=\!\sigma_{\scriptscriptstyle \mathrm{v}}^2\mathbf{\Gamma}\mathbf{\Gamma}^\trp$
with gain matrix $\mathbf{\Gamma}\!=\!\big[\frac{\mathrm{T}^2}{2}\eye{3},\mathrm{T}\eye{3}\big]^\trp$\!\! and \gls{mt} state acceleration-noise variance ${\sigma}_{\scriptscriptstyle \mathrm{v}}^2$.

\paragraph*{Noise variance evolution}
Following~\cite{LiaLeiMey:TSP2025}, the noise variance evolution is described by a Gamma \gls{pdf} $f(\etan{n}^{\scriptscriptstyle(j)}|\etan{n-1}^{\scriptscriptstyle(j)})\!=\!\mathcal{G}\big(\etan{n}^{\scriptscriptstyle(j)};\mathrm{c}_\eta,\frac{\etan{n-1}^{\scriptscriptstyle(j)}}{\mathrm{c}_\eta}\big)$ with mean $\etan{n-1}^{\scriptscriptstyle(j)}$ and variance $\frac{(\etan{n-1}^{\scriptscriptstyle(j)})^2}{\mathrm{c}_\eta}$, parameterized by a chosen constant $\mathrm{c}_\eta\!=\!\num{10}$.

\paragraph*{Amplitude state evolution}
For \estLabel{NZM}, we assume that the amplitude state-transition \gls{pdf} factorizes as $f(\PFphi{s}{n}{1}|\PFphi{s}{n\!-\!1}{J})\!=\!f(\PFmu{s}{n}{1}|\PFmu{s}{n-1}{J})f(\PFgamma{s}{n}{1}|\PFgamma{s}{n-1}{J})$.
For both \estLabel{ZM} and \estLabel{NZM}, the prior variance evolves according to a Gamma \gls{pdf} $f(\PFgamma{s}{n}{1}|\PFgamma{s}{n-1}{J})\!=\!\mathcal{G}\big(\PFgamma{s}{n}{1};\mathrm{c}_\gamma,\nicefrac{\PFgamma{s}{n-1}{J}}{\mathrm{c}_\gamma}\big)$, 
parameterized by a constant $\mathrm{c}_\gamma\!=\!\num{100}$.
The prior mean of \estLabel{NZM} evolves according to a complex Gaussian \gls{pdf} $f(\PFmu{s}{n}{1}|\PFmu{s}{n-1}{J})\!=\!\mathcal{CN}(\PFmu{s}{n}{1};\PFmu{s}{n-1}{J},\sigma_\mu^{2})$ with chosen standard deviation $\sigma_\mu\!=\!\num{0.03}$.

\paragraph*{System Parameters}
We choose a bandwidth $B\!=\!\SI{100}{\mega\hertz}$ with $\Nfrequency\!=\!10$ equally spaced frequency bins centered at $\fc \!=\!\SI{3.5}{\giga\hertz}$.
We use $J\!=\!4$ \glspl{pa}, each equipped with a \gls{ura} with $(\Nantennasy\!\times\!\Nantennasz)\!=\!(4\!\times\!4)$ antennas equally spaced at $\nicefrac{\lambda}{2}$.
This leads to an observation length $\Nz\!=\!160$.
\Glspl{pa} are arranged according to the scenario depicted in Fig.\,\ref{fig:scenario}.
We choose a time-invariance noise variance $\etan{n}$ s.t. $\mathrm{SNR}\!=\!10\log_{10}\big(\frac{P^{\scriptscriptstyle\mathrm{ch}}_{\scriptscriptstyle 1}}{\etan{1}}\big)$ is \SI{6}{\dB} at time step $n=1$ with channel power $P^{\scriptscriptstyle\mathrm{ch}}_{\scriptscriptstyle n}\!\coloneqq\!
\frac{1}{\Nz J}
\sum_{j=1}^{J}
\lVert
\varrho_{\scriptscriptstyle n}^{\scriptscriptstyle(j)}\steerVecx{k}{n}{j}
\rVert^2$.
For each estimator, we conduct an \gls{mc} analysis of \num{5000} estimation runs with independently realized trajectories of $N\!=\!200$ time steps using $P\!=\!\num{45000}$ particles.
Random measurements are generated according to \eqref{eq:observation} and realizations of \textit{random} states $\{\RVstate{n}\}_{n=0}^{N}$ are 
generated according to the specified state-transition \glspl{pdf} (see Supplementary Material, Sec.\,S-IV for details) with ${\sigma}_{\scriptscriptstyle \mathrm{v}}\!=\!\SI{0.5}{\metre\per\second\squared}$, yielding the \gls{ncv} trajectory realizations shown in Fig.\,\ref{fig:scenario}.

\subsection{Results}
Estimator \glspl{rmse} and \glspl{peb} are shown in Fig.\,\ref{fig:PEB}\,a), while the cumulative frequencies of position errors are shown in Fig.\,\ref{fig:PEB}\,b).
For all estimators, $\interval{50\%}$ denotes the interquartile range of the empirical error distribution.

\paragraph*{Type-I Estimators}
The \gls{rmse} \lineref{pgf:RMSE-snapshot} of the noncoherent snapshot-based estimator \estLabel{S} approaches the \textit{classic} noncoherent snapshot \gls{peb} \lineref{pgf:PEB-classic} in Fig.\,\ref{fig:PEB}\,a).
It is the least accurate of all introduced estimators.

The \gls{rmse} \lineref{pgf:RMSE-Type-I-coh0} of the noncoherent Type-I filter \estLabel{NC} approaches the noncoherent \textit{Bayesian} \gls{peb} \lineref{pgf:PEB-nc} in Fig.\,\ref{fig:PEB}\,a).
Despite using the same unimodal likelihood function from Fig.\,\ref{fig:LHFs} a), the noncoherent Type-I filter gains accuracy over the snapshot-based estimator through state filtering, where the prediction message $\beta(\state{n})$ injects prior information from the previous time step $n-1$ into the current time step $n$.

The \gls{rmse} \lineref{pgf:RMSE-Type-I-coh2} of the coherent Type-I filter \estLabel{C} approaches the coherent Bayesian \gls{peb} \lineref{pgf:PEB-c} in Fig.\,\ref{fig:PEB}\,a).
It gains accuracy over the noncoherent Type-I filter as it leverages the combined aperture of the infrastructure of distributed \glspl{pa}, leading to high curvature---hence Fisher information---around the true mode of its likelihood function from Fig.\,\ref{fig:LHFs}\,b).
However, the resulting synthetic array is aliased,\!\footnote{The synthetic array spatially violates the Nyquist--Shannon sampling theorem.} causing multimodality in the likelihood function, which the state filter has to accommodate.
To leverage the combined aperture, the distributed \glspl{pa} must be frequency-synchronized and phase-calibrated, imposing additional requirements beyond those of the noncoherent estimators.

The \gls{rmse} \lineref{pgf:RMSE-Type-I-coh1} of the carrier-phase-based Type-I filter \estLabel{CP} approaches the carrier-phase-based Bayesian \gls{peb} \lineref{pgf:PEB-cp} in Fig.\,\ref{fig:PEB}\,a).
It gains accuracy over the coherent Type-I filter because it leverages the distance-dependent carrier-phase information, which results in a very high curvature at the true mode of the likelihood function from Fig.\,\ref{fig:LHFs} c), but also strong multimodality. 
To leverage this carrier-phase information, all \glspl{pa} \textit{and} the \gls{mt} must be jointly frequency-synchronized and phase-calibrated, a strong assumption in practice.

\paragraph*{Type-II Estimators}
The \gls{rmse} \lineref{pgf:RMSE-Type-II-coh0} of the zero-mean Type-II filter \estLabel{ZM} approaches the noncoherent Bayesian \gls{peb} \lineref{pgf:PEB-nc} in Fig.\,\ref{fig:PEB}\,a), with performance almost identical to that of the noncoherent Type-I filter. 
However, it has to estimate noise variance and amplitude states as nuisance parameters, slightly affecting accuracy and convergence behavior.
This demonstrates that commonly used zero-mean Type-II likelihood models fail to gain aperture in \gls{dmimo}, as a common amplitude prior variance across distributed \glspl{pa} is insufficient to restore coherence.

As for \estLabel{C}, stacking of observations and steering vectors makes the \gls{rmse} \lineref{pgf:RMSE-Type-II-coh0-S} of \estLabel{ZM-S} approach the coherent Bayesian \gls{peb} \lineref{pgf:PEB-c}, even under a zero-mean prior model.
Stacking again restores coherence by enforcing a common amplitude (and phase) realization across distributed \glspl{pa}.

That contrasts with the nonzero-mean Type-II filter \estLabel{NZM} whose 
\gls{rmse} \lineref{pgf:RMSE-Type-II-coh1} falls below the noncoherent and coherent \glspl{peb} on \textit{coherent} data.
However, it does not tightly approach the carrier-phase-based \gls{peb}.
The \gls{rmse} \lineref{pgf:RMSE-Type-II-coh1-nc-data} of the nonzero-mean Type-II filter run on \textit{noncoherent} data%
\footnote{We generate this data by multiplying the noise-free term in each observation $\observation{n}{j}$ in \eqref{eq:observation} by a phasor $\exp\big(\mathrm{j} \varphi_{\scriptscriptstyle\mathrm{NC},n}^{\scriptscriptstyle(j)}\big)$ with phases $\unslant[-.25]{\varphi}_{\scriptscriptstyle\mathrm{NC},n}^{\scriptscriptstyle(j)}\!\sim\!\mathcal{U}(0,2\pi)$ i.i.d. across time steps and \glspl{pa}.}
\estLabel{NZM-NC} lies close to the noncoherent \gls{peb}.
Being able to automatically adapt to the coherence of the observed data is a property of this estimator that we term ``soft coherence''~\cite{Deutschmann26TSP}.
Hence, this estimator does not strictly rely on joint frequency synchronization and phase calibration.

\subsection{Runtime Analysis}\label{sec:runtime}

All estimators are run in single precision in \textsc{Matlab} on both an Intel\textsuperscript{\textregistered} Xeon\textsuperscript{\textregistered} w7-3565X \gls{cpu} with \num{32} physical cores and an NVIDIA RTX
PRO\textsuperscript{\texttrademark} 4000 Blackwell \gls{gpu} with a peak
memory bandwidth of \SI{672}{\giga\byte\per\second}.
Our \gls{gpu}-accelerated implementation 
exploits parallelization across
particles, \glspl{pa}, and observation bins.
Due to the batched inner product operations, it is memory-bandwidth-bound rather than compute-bound, a common behavior in algorithms with batched small-matrix operations~\cite{Haidar18smallMatrix}.
In practice, runtime therefore scales linearly with the memory traffic and, hence, the system parameters $\{P,J,\Nz\}$.
Runtimes of the individual estimators are summarized in Table~\ref{tab:algorithm-runtimes}.
The \gls{gpu}-accelerated implementations achieve speedups of up to \num{14} compared with their \gls{cpu} counterparts.

\begin{table}
    \centering
    \setlength{\tabcolsep}{3pt} 
    \caption{Estimator Runtimes per Time Step $n$}%
    \label{tab:algorithm-runtimes}%
    \vspace{-2mm}
    \begin{tabular}{@{}lr|cccc|ccc@{}}
        \toprule
        \multirow{2}{*}{Estimator} & ~ & \multicolumn{4}{c|}{Type-I} & \multicolumn{3   }{c}{Type-II} \\
         & & \estLabel{S} & \estLabel{NC} & \estLabel{C} & \estLabel{CP} & \estLabel{ZM} & \estLabel{ZM-S} & \estLabel{NZM}   \\
        \midrule
        \multirow{2}{*}{Runtime (\SI{}{\milli\second})} & CPU &
        \num{222} & \num{224} & \num{231} & \num{223} & \num{825} & \num{887} & \num{815}\\
         & GPU &
        \num{26} & \num{32} & \num{25} & \num{32} & \num{67} & \num{61} & \num{68}\\
        \midrule
        \multicolumn{2}{l|}{Speedup} & 
         \num{8.5} & \num{7} & \num{9.2} & \num{7} & \num{12.3} & \num{14.5} & \num{12} \\
        \bottomrule
    \end{tabular}
\end{table}

\section{Discussion}\label{sec:discussion}  
We summarize the findings of our experiment.
\begin{itemize}
    \item Noncoherent filtering: \estLabel{NC} and \estLabel{ZM} allow $J$ independently varying nuisance phase realizations, failing to gain aperture in \gls{dmimo}.
    \item Coherent filtering: Stacking the observations and steering vectors in \estLabel{C} and \estLabel{ZM-S} permits only one common nuisance phase realization, gaining aperture.
    \item Carrier-phase-based filtering: With no unknown amplitude phase, \estLabel{CP} can exploit distance-dependent carrier-phase information, albeit requiring perfect frequency synchronization and phase calibration.
    \item Soft-coherent filtering: The nonzero-mean amplitude prior model of \estLabel{NZM} allows the inference model to automatically operate either noncoherently or coherently depending on whether the data are coherent.
\end{itemize}

The (P)\glspl{crlb} in Fig.\,\ref{fig:PEB} show multiple stages of information gain:

Stage (i) is the information gain through Bayesian state filtering from the classic noncoherent \gls{crlb} \lineref{pgf:PEB-classic} to the noncoherent (Bayesian) \gls{pcrlb} \lineref{pgf:PEB-nc}. 
Both are based on the noncoherent likelihood function in Fig.\,\ref{fig:LHFs}\,a), where the least informative direction---meaning the direction with the lowest curvature---is in the direction of the beam, i.e., range direction.
Estimation errors along this direction dominate both the classic \gls{peb} and the \gls{rmse} of the snapshot-based estimator.

Stage (ii) is information gain through aperture gain, leading to the coherent \gls{peb} \lineref{pgf:PEB-c}. 
Although the least informative direction remains the beam direction, as shown in Fig.\,\ref{fig:LHFs}\,b), near-field (spherical-wavefront) processing across \glspl{pa} strongly reduces the width of the true mode.
Results for both our coherent Type-I filter \estLabel{C} and the coherent \gls{peb} show that the information gain is fundamentally related to estimating only one nuisance phase common to all \glspl{pa} rather than a separate nuisance phase per \gls{pa} as in the noncoherent case.
The loss of position information due to jointly estimating nuisance parameters---particularly nuisance phases in \gls{dmimo} positioning---can be quantified using the equivalent \gls{fim} (EFIM)~\cite{Fascista25RadioStripes,VanTrees2002optimumASP,YuanShen10EFIM,YanjunHan16EFIM}, whose Schur complement reveals the information cost caused by information coupling between the nuisance parameters and the parameters of interest.

Stage (iii) is the information gain from leveraging carrier-phase information, which leads to the carrier-phase-based \gls{peb} \lineref{pgf:PEB-cp}. 
This is possible only if no nuisance phases need to be estimated. The resulting likelihood function in Fig.\,\ref{fig:LHFs}\,c) is strongly multimodal.
From a two-dimensional point of view, the least informative direction is now in the cross-range direction, i.e., orthogonal to the beam, due to the shape of the carrier wavefronts.

\begin{figure}
    \vskip 0pt	
    \begin{center}
        \def\datapath{.}
        \setlength{\figurewidth}{0.369\linewidth}
        \setlength{\figureheight}{0.23\linewidth}
        \setlength{\verticalSpace}{2pt}
        \tikzexternaldisable    
           \input{figures/scenario_100}
        \tikzexternalenable
    \end{center}
    \vspace{-5mm}
    \captionof{figure}{Classic \glspl{peb} evaluated at time step $n\!=\!100$ for two different geometries: a) vertically and horizontally distributed \glspl{pa} with \glspl{peb} b), and c) vertically colocated and horizontally distributed \glspl{pa} with \glspl{peb} d).
    }
    \label{fig:PEB-classic}
\end{figure}

Up until this point, our experiments may seem to suggest that carrier-phase-based positioning offers significant gains over noncoherent processing. 
However, this is not always the case.
To illustrate the main phenomenology, we compare two different \gls{pa} placements in Figs.\,\ref{fig:PEB-classic} a) and c) with the resulting classic \glspl{crlb} evaluated at time step $n\!=\!100$ over a range of different frequencies in Figs.\,\ref{fig:PEB-classic} b) and d).
The first \gls{pa} placement in Fig.\,\ref{fig:PEB-classic} a) is identical to the placement used in the scenario in Fig.\,\ref{fig:scenario}.
Here, coherent processing still shows significant accuracy gains.
A red dot \lineref{pgf:red-dot} marks the corresponding points in Figs.\,\ref{fig:PEB}\,a) and~\ref{fig:PEB-classic}\,b).
In the second \gls{pa} placement in Fig.\,\ref{fig:PEB-classic} c), the \glspl{pa} $j\!=\!1$ and $j\!=\!3$ are vertically colocated with the other \glspl{pa}, which changes the snapshot \glspl{peb} significantly:
The least informative Cartesian direction dominating the \gls{peb} in Fig.\,\ref{fig:PEB-classic}\,d) is the vertical $z$-direction. 
Neither coherent nor carrier-phase-based processing can increase the joint aperture in the vertical direction, hence both the coherent \lineref{pgf:snapshot-PEB-c} and carrier-phase-based \glspl{crlb} \lineref{pgf:snapshot-PEB-cp} lie very close.
While the noncoherent \gls{peb} \lineref{pgf:snapshot-PEB-nc} lies initially higher---the least informative direction is initially in the beam direction---as the bandwidth increases, the least informative direction likewise becomes the vertical direction, and even the noncoherent \gls{peb} approaches the coherent and carrier-phase-based \glspl{peb} because its vertical aperture is similar to that of the other two.
If and how much information can potentially be gained through coherent processing is clearly dependent on the geometry of the \gls{dmimo} deployment w.r.t. the \gls{mt}.

Most crucially, however, it depends on the coherence of the data.
Notably, apart from \estLabel{NZM-NC}, all estimators in Fig.\,\ref{fig:PEB} have been run on the same data, i.e., identical realizations of \gls{mt} state trajectories and noise vectors. 
This highlights that the available information gains can be realized only by an estimator that leverages the coherence present in the data.
possible information gains may be
While benefiting from better robustness, a priori use of a noncoherent approach will often sacrifice aperture and hence estimation accuracy.

\section{Conclusion}\label{sec:conclusion}  
In this work, we proposed a unified framework for scalable direct localization across varying coherence levels in \gls{dmimo}/\gls{xlmimo} systems, using particle-based \gls{bp}.
Motivated by the increasing availability of spatially distributed arrays in future wireless infrastructures, the methods enable noncoherent, coherent, or even carrier-phase-based data fusion for Bayesian state estimation directly from noisy channel observations.
For all three coherence levels, the corresponding proposed estimators closely approach their respective (P)\glspl{crlb}.
We find that both the coherence level of \gls{dmimo} processing and the corresponding (Fisher) information 
are fundamentally tied to the number of phase parameters that are either actively used for positioning or treated as nuisance parameters.
In our Type-I filters, this relationship is explicit in the number of nuisance phases concentrated out: One per \gls{pa} for noncoherent processing, one phase common to all \glspl{pa} for coherent processing, and none for carrier-phase-based processing.
For our Type-II filters, we find that distributed processing with a zero-mean likelihood model is inherently noncoherent, whereas centralized observation stacking restores coherence by enforcing one common phase realization across \glspl{pa}.
Our recently proposed nonzero-mean Type-II likelihood model can automatically adapt to the coherence available in the data, a property that we term \textit{soft coherence}, albeit it does not allow a clear classification in terms of the three defined coherence levels.

Regarding computational complexity, the \textit{work} of our presented algorithms scales linearly with the observed data, while perfect parallelization theoretically admits logarithmic scaling of the \textit{depth} with the data, making them well suited for future radio infrastructures with massive bandwidth and massive antenna deployments.
Our \gls{gpu}-accelerated implementation parallelizes over particles and \glspl{pa}, achieving runtimes of only tens of milliseconds per time step.

In~\cite{Deutschmann26TSP}, we have shown that the presented framework readily extends to challenging multipath channels with spatial nonstationarity.
Promising directions for future work include (i) online joint estimation of timing, frequency, and phase-calibration parameters~\cite{Ngo26DMIMOphaseCal} to accommodate partial and time-varying coherence, and (ii) neural enhancement of \gls{bp} for robustness to model mismatch~\cite{GarciaSatorras21NEBP,Mingchao24TSP_MOT,VenLeiTerWit:JSP2023,VenDeuFucKnoLei:FUSION2026}.

\bibliographystyle{IEEEtran}
\balance
\bibliography{IEEEabrv,bibliography,ThisPaper}

\ifthenelse{\equal{\IEEEversion}{true}}
{
}%
{%
    \includepdf[pages=-]{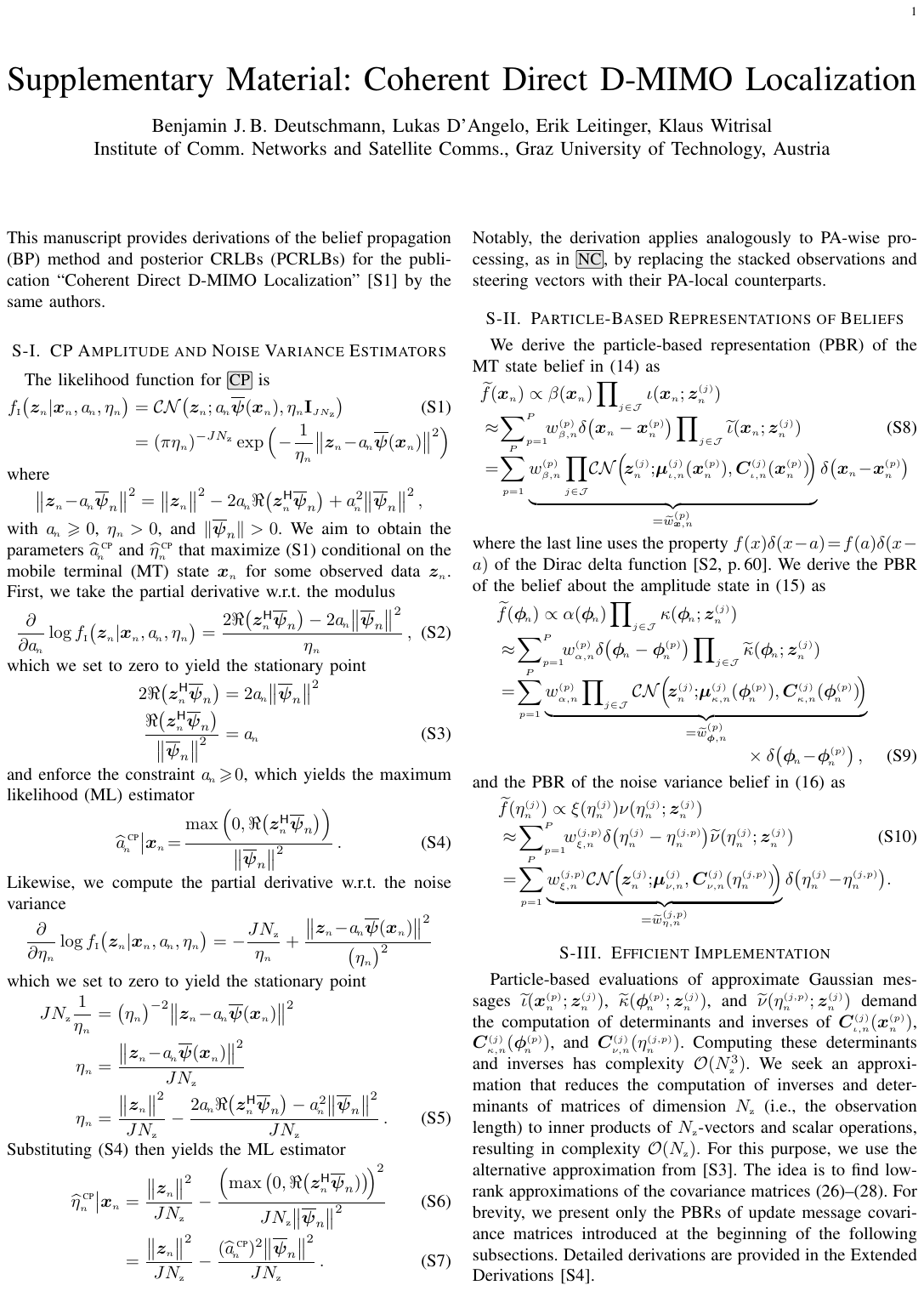}
    \includepdf[pages=-]{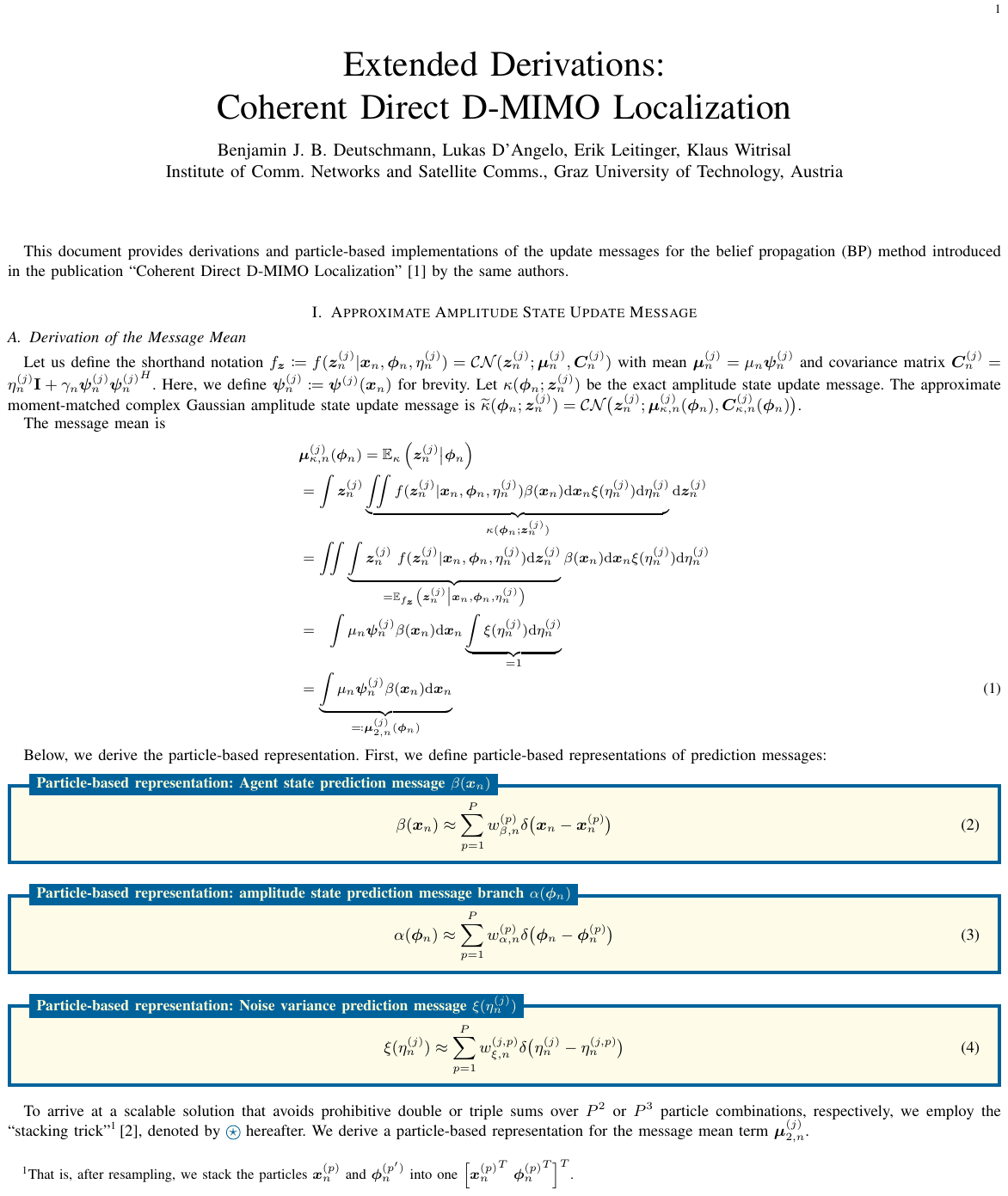}
}

\end{document}


\title{\resizebox{\textwidth}{!}{Supplementary Material: \PaperTitle}}

\author{\IEEEauthorblockN{Benjamin J.\,B. Deutschmann, Lukas D'Angelo, Erik Leitinger, Klaus Witrisal}

\IEEEauthorblockA{Institute of Comm. Networks and Satellite Comms., Graz University of Technology, Austria}}

\maketitle

\noindent%
This manuscript provides derivations of the \gls{bp} method and \glspl{pcrlb} for the publication ``\PaperTitle''~\cite{ThisPaper} by the same authors.


\section{CP Amplitude and Noise Variance Estimators}\label{Ssec:CP-ML-estimators}

The likelihood function for \estLabel{CP} is
\begin{align}\label{Seq:conditional-CP-likelihood}
    f_{\scriptscriptstyle\mathrm{I}}\big(\observationn{n}|\state{n},\amplitudeBarCP{n},\etan{n}\big) 
    &= 
    \CN\big(\observationn{n};
    \amplitudeBarCP{n}
    \steerVecBar(\state{n}),
    \etan{n}\eye{J\Nz}\big)
    \\
    &=
    (\pi\etan{n})^{-J \Nz}
    \exp
    \Big(  
    \!-\!\frac{1}{\etan{n}}
    \big\|
        \observationn{n} 
        \!-\!
        \amplitudeBarCP{n}
        \steerVecBar(\state{n})
    \big\|^2
    \Big)
    \nonumber
\end{align}
where 
\begin{align*}
    \big\|
        \observationn{n} 
        \!-\!
        \amplitudeBarCP{n}
        \steerVecBar_n
    \big\|^2
    &=
    \big\| 
    \observationn{n}
    \big\|^2
    -
    2
    \amplitudeBarCP{n}
    \Re\big( \observationn{n}^\herm\steerVecBar_n \big)
    +
    \amplitudeBarCP{n}^2
    \big\| 
    \steerVecBar_n 
    \big\|^2\,,
\end{align*}
with $\amplitudeBarCP{n}\!\geq\!0$, $\etan{n}\!>\!0$, and $\|\steerVecBar_n\|\!>\!0$.
We aim to obtain the parameters $\amplitudeBarHatCP{n}$ and $\etaHatBarCP{n}$ that maximize~\eqref{Seq:conditional-CP-likelihood} conditional on the \gls{mt} state $\state{n}$ for some observed data $\observationn{n}$.
First, we take the partial derivative w.r.t. the modulus
\begin{align}
    \frac{\partial}{\partial \amplitudeBarCP{n}}
    \log
    f_{\scriptscriptstyle\mathrm{I}}\big(\observationn{n}|\state{n},\amplitudeBarCP{n},\etan{n}\big) 
    &= 
    \frac{
    2
    \Re\big( \observationn{n}^\herm\steerVecBar_n \big)
    -
    2\amplitudeBarCP{n}
    \big\| 
    \steerVecBar_n 
    \big\|^2
    }
    {\etan{n}}\,,
\end{align}
which we set to zero to yield the stationary point
\begin{align}
    2
    \Re\big( \observationn{n}^\herm\steerVecBar_n \big)
    &= 
    2\amplitudeBarCP{n}
    \big\| 
    \steerVecBar_n 
    \big\|^2
    \nonumber
    \\
    \frac{
    \Re\big( \observationn{n}^\herm\steerVecBar_n \big)
    }{\big\| 
    \steerVecBar_n 
    \big\|^2}
    &= 
    \amplitudeBarCP{n}
\end{align}
and enforce the constraint $\amplitudeBarCP{n}\!\geq\!0$, which yields the \gls{ml} estimator
\begin{align}\label{Seq:amplitudeBarHatCP}
    \amplitudeBarHatCP{n}\big|\state{n}
    \!=\!
    \frac{
    \max \Big(
    0,
    \Re\big( \observationn{n}^\herm\steerVecBar_n \big)
    \Big)
    }{\big\| 
    \steerVecBar_n 
    \big\|^2}\,.
\end{align}
Likewise, we compute the partial derivative w.r.t. the noise variance
\begin{align*}
    \frac{\partial}{\partial \etan{n}}
    \log
    f_{\scriptscriptstyle\mathrm{I}}\big(\observationn{n}|\state{n},\amplitudeBarCP{n},\etan{n}\big) 
    &= 
    - \frac{J \Nz}{\etan{n}}
    +
    \frac{
    \big\|
        \observationn{n} 
        \!-\!
        \amplitudeBarCP{n}
        \steerVecBar(\state{n})
    \big\|^2
    }{\big( \etan{n} \big)^2}
\end{align*}
which we set to zero to yield the stationary point
\begin{align}
    J \Nz \frac{1}{\etan{n}}
    &=
    \big( \etan{n} \big)^{-2}
    \big\|
        \observationn{n} 
        \!-\!
        \amplitudeBarCP{n}
        \steerVecBar(\state{n})
    \big\|^2
    \nonumber
    \\
    \etan{n}
    &=
    \frac{
    \big\|
        \observationn{n} 
        \!-\!
        \amplitudeBarCP{n}
        \steerVecBar(\state{n})
    \big\|^2
    }{
        J \Nz
    }
    \nonumber
    \\
    \etan{n}
    &=
    \frac{
    \big\| 
    \observationn{n}
    \big\|^2
    }{J \Nz}
    -
    \frac{2
    \amplitudeBarCP{n}
    \Re\big( \observationn{n}^\herm\steerVecBar_n \big)
    -
    \amplitudeBarCP{n}^2
    \big\| 
    \steerVecBar_n 
    \big\|^2}
    {J \Nz}\,.
\end{align}
Substituting \eqref{Seq:amplitudeBarHatCP} then yields the \gls{ml} estimator
\begin{align}\label{Seq:etaHatBarCP}
    \etaHatBarCP{n} \big|\state{n}
    &= 
    \frac{
    \big\| 
    \observationn{n}
    \big\|^2
    }{J \Nz}
    -
    \frac{
    \Big(\!
        \max \big(
        0,
        \Re\big( \observationn{n}^\herm\steerVecBar_n )
        \big)
    \!\Big)^2
    }{
    J \Nz
    \big\| 
    \steerVecBar_n 
    \big\|^2}
    \\
    &= 
    \frac{
    \big\| 
    \observationn{n}
    \big\|^2
    }{J \Nz}
    -
    \frac{
        (\amplitudeBarHatCP{n})^2
        \big\| 
        \steerVecBar_n 
        \big\|^2
    }
    {J \Nz}\,.
\end{align}
%
Notably, the derivation applies analogously to PA-wise processing, as in \estLabel{NC}, by replacing the stacked observations and steering vectors with their PA-local counterparts. 

\section{Particle-Based Representations of Beliefs}\label{Ssec:PR-beliefs}

We derive the \gls{pr} of the \gls{mt} state belief in \eqref{eq:belief-state} as
\begin{align}
    &\belief(\state{n}) 
    \propto \beta(\state{n}) \prod\nolimits_{\scriptscriptstyle j\in \setAnchors} \iota(\state{n};\observation{n}{j}) 
    \nn\\[-2pt]
    &\approx
    \!\sum\nolimits_{\scriptscriptstyle p=1}^{\scriptscriptstyle P}
    \!\!\weight{\beta,n}{p} \delta\big( \state{n}-\particle{\state{n}}{p} \big)
    \prod\nolimits_{\scriptscriptstyle j\in \setAnchors} 
    \widetilde{\iota}(\state{n};\observation{n}{j})
    \\[-4pt]
    &
    =
    \!\sum_{\scriptscriptstyle p=1}^{\scriptscriptstyle P}
    \underbrace{
        \weight{\beta,n}{p} 
        \prod_{\scriptscriptstyle j\in \setAnchors}\! 
        \mathcal{CN}\Big(\!\observation{n}{j};\!\muiota(\particle{\state{n}}{p}),\Ciota(\particle{\state{n}}{p})\!\Big)
    }_{= \weightt{\bm{x},n}{p}}
    \delta\big(\state{n}\!-\!\particle{\state{n}}{p} \big)
    \nn
\end{align}
where the last line 
uses the property 
$f(x)\delta(x-a) \!=\! f(a)\delta(x-a)$ of the Dirac delta function~\cite[p.\,60]{Dirac1947QuantumMechanics}.
%
We derive the \gls{pr} of the belief about the amplitude state in \eqref{eq:belief-PFphi} as
\begingroup             
\begin{align}
    &\belief(\PFphi{s}{n}{j}) \propto \alpha(\PFphi{s}{n}{j}) 
    \prod\nolimits_{\scriptscriptstyle j\in \setAnchors}
    \kappa(\PFphi{s}{n}{j};\observation{n}{j}) 
    \nn\\[-1pt]
    &\approx
    \!\sum\nolimits_{\scriptscriptstyle p=1}^{\scriptscriptstyle P}
    \!\!\weight{\alpha,n}{p}
    \delta\big( \PFphi{s}{n}{j}-\particle{\PFphi{s}{n}{j,p}}{p} \big)
    \prod\nolimits_{\scriptscriptstyle j\in \setAnchors}
    \widetilde{\kappa}(\PFphi{s}{n}{j};\observation{n}{j}) 
    \nn\\[-3pt]
    &
    =
    \!\sum_{\scriptscriptstyle p=1}^{\scriptscriptstyle P}
    \underbrace{
        \weight{\alpha,n}{p}
        \prod\nolimits_{\scriptscriptstyle j\in \setAnchors}
        \mathcal{CN}\Big(\!\observation{n}{j};\!
        \mukappa(\particle{\PFphi{s}{n}{j,p}}{p}),
        \Ckappa(\particle{\PFphi{s}{n}{j,p}}{p})
        \!\Big)
    }_{= \weightt{\bm{\phi},n}{p}}
    \nn\\[-2pt]
    &~\hspace{4.7cm}~\times\delta\big(\PFphi{s}{n}{j}\!-\!\particle{\PFphi{s}{n}{j,p}}{p} \big)\,,
\end{align}
\endgroup
and the \gls{pr} of the noise variance belief in \eqref{eq:belief-eta} as
\begin{align}
    &\belief(\etan{n}^{\scriptscriptstyle(j)}) \propto \xi(\etan{n}^{\scriptscriptstyle(j)}) \nu(\etan{n}^{\scriptscriptstyle(j)};\observation{n}{j})
    \nn\\[-2pt]
    &\approx
    \!\sum\nolimits_{\scriptscriptstyle p=1}^{\scriptscriptstyle P}
    \!\!\weight{\xi,n}{j,p} \delta\big( \etan{n}^{\scriptscriptstyle(j)}-\particle{\etan{n}}{j,p} \big)
    \widetilde{\nu}(\etan{n}^{\scriptscriptstyle(j)};\observation{n}{j})
    \\[-3pt]
    &
    =
    \!\sum_{\scriptscriptstyle p=1}^{\scriptscriptstyle P}
    \underbrace{
        \weight{\xi,n}{j,p} 
        \mathcal{CN}\Big(\!\observation{n}{j};\!\munu,\Cnu(\particle{\etan{n}}{j,p}) \!\Big)
    }_{= \weightt{\eta,n}{j,p}}
    \delta\big(\etan{n}^{\scriptscriptstyle(j)}\!-\!\particle{\etan{n}}{j,p} \big).
    \nn
\end{align}

\section{Efficient Implementation}\label{Ssec:efficient-implementation}

Particle-based evaluations of approximate Gaussian messages
$\widetilde{\iota}(\particle{\state{n}}{p};\observation{n}{j})$,
$\widetilde{\kappa}(\particle{\PFphi{s}{n}{j,p}}{p};\observation{n}{j})$, and
$\widetilde{\nu}(\particle{\etan{n}}{j,p};\observation{n}{j})$ demand the computation of determinants and inverses of 
$\Ciota(\particle{\state{n}}{p})$,
$\Ckappa(\particle{\PFphi{s}{n}{j,p}}{p})$, 
and
$\Cnu(\particle{\etan{n}}{j,p})$.
Computing these determinants and inverses has complexity $\mathcal{O}(\Nz^3)$.
We seek an approximation that reduces the computation of inverses and determinants of matrices of dimension $\Nz$ (i.e., the observation length) to inner products of $\Nz$-vectors and scalar operations, resulting in complexity $\mathcal{O}(\Nz)$.
For this purpose, we use the alternative approximation from \cite{LiaLeiMey:TSP2025-SuppDoc}.
The idea is to find low-rank approximations of the covariance matrices \eqref{eq:Ciota}--\eqref{eq:Ckappa}.
%
For brevity, we present only the \glspl{pr} of update message covariance matrices introduced at the beginning of the following subsections. 
Detailed derivations are provided in the Extended Derivations~\cite{ThisPaperDerivations}.

As shorthand notation, we define the noise variance prior mean $\etaXit{n}\!\coloneqq\!\sum_{\scriptscriptstyle p=1}^{\scriptscriptstyle P} \weight{\xi,n}{j,p} \particle{\etan{n}}{j,p}$.

\subsection{Amplitude State Update Message}
A \gls{pr} of the covariance matrix parameterizing the amplitude state update message is
\begin{align}
    \Ckappa\!(\particle{\PFphi{s}{n}{j,p}}{p})
    \!\approx\! 
    \eye{\Nz} \etaXit{n}
    \!+\!
    \pnjp
    {\pnjp}^\herm
    \!\eqqcolon\!
    \Ckappap
\end{align}
where 
$\pnjp
\!\coloneqq\!
\sqrt{\particle{\PFgamma{s}{n}{j,p}}{p}} \steerVecx{s}{n}{j,p}$
and 
$\steerVecx{s}{n}{j,p} \!\coloneqq\! \steerVec{j}(\particle{\state{n}}{p})$.
The 
matrix inversion lemma~\cite[eq.\,(156)]{Cookbook} 
yields
\begin{align}
    {\Ckappap}^{\!-1}
    =
    \frac{1}{\etaXit{n}}
    \Bigg(
        \eye{\Nz}
        -
        \frac{
           \pnjp
           {\pnjp}^\herm
        }{
            \etaXit{n} 
            + 
            \big\|
            \pnjp
            \big\|^2
        }
    \Bigg)
    \,,
\end{align}
and the generalized determinant lemma~\cite[p.\,420]{harville2008matrixAlgebra}
gives
\begin{align}
    \det (\Ckappap)
    =
    \big(\etaXit{n}\big)^{\Nz}
    \Bigg(
        1
        \!+\!
        \frac{
            \big\|
            \pnjp
            \big\|^2
        }{
           \etaXit{n} 
        }
    \Bigg)
    .
\end{align}
Let $\ekappa{(j,p)}\!\coloneqq\!\observation{n}{j}\!-\!\particle{\PFmu{s}{n}{j,p}}{p}\steerVecx{s}{n}{j,p}$, where 
$\mukappa(\particle{\PFphi{s}{n}{j,p}}{p})\!=\! \musnj{2}(\particle{\PFphi{s}{n}{j,p}}{p})\approx\particle{\PFmu{s}{n}{j,p}}{p}\steerVecx{s}{n}{j,p}$.
The approximate complex Gaussian amplitude state update message evaluated at particle $\particle{\PFphi{s}{n}{j,p}}{p}$ is
\begin{align}
    &\widetilde{\kappa}(\particle{\PFphi{s}{n}{j,p}}{p};\observation{n}{j}) 
    =
    \mathcal{CN}\big(\observation{n}{j};\mukappa\!(\particle{\PFphi{s}{n}{j,p}}{p}),\Ckappa\!(\particle{\PFphi{s}{n}{j,p}}{p})\big)
    \\
    &\!\approx\!
    \frac{
        \exp\big(
            -
            {\ekappa{(j,p)}}^\herm
                {\Ckappap}^{\!-1}
            \ekappa{(j,p)}
        \big)
    }{
        \pi^{\Nz}
        \det(\Ckappap)
    }
    \nn\\
    &\!=\!
    \frac{
    \exp\!
    \bigg(
            \frac{
                \particle{\PFgamma{s}{n}{j,p}}{p}
                \left|
                    {\steerVecx{s}{n}{j,p}}^\herm
                    \!
                    \big(
                        \observation{n}{j} 
                        -
                        \particle{\PFmu{s}{n}{j,p}}{p}
                        \steerVecx{s}{n}{j,p}
                    \big)
                \right|^2
            }{
                \etaXit{n}
                \big(
                    \etaXit{n}
                    +
                    \big\|
                        \pnjp
                    \big\|^2
                \big)
            }
            \!-\!
            \frac{
            \big\|
                \observation{n}{j} 
                -
                \particle{\PFmu{s}{n}{j,p}}{p}
                \steerVecx{s}{n}{j,p}
            \big\|^2
            }{
                \etaXit{n}
            }
        \!
    \bigg)
    }
    {
    \pi^{\Nz}
    \big(
        \etaXit{n}
    \big)^{\Nz}
    \bigg(
        1
        +
        \frac{
            \particle{\PFgamma{s}{n}{j,p}}{p}
            \left\|
                \steerVecx{s}{n}{j,p}
            \right\|^2
        }{
            \etaXit{n}
        }
    \bigg)
    }
    \nn
\end{align}
which involves inner products that are evaluated with complexity $\mathcal{O}(\Nz)$ per parallelized \gls{pa}-particle pair $(j,p)$.
This brings the total computational complexity of computing this update message to $\mathcal{O}(P\,J\,\Nz)$.

\subsection{\gls{mt} State Update Message}
As for the amplitude state update message, we find a \gls{pr} of the covariance matrix parameterizing the \gls{mt} state update message as 
\begin{align}
    \Ciota(\particle{\state{n}}{p}) 
    \!\approx\! 
    \eye{\Nz} \etaXit{n}
    \!+\!
    \pnjp
    {\pnjp}^\herm
    \!\eqqcolon\!\Ciotap
\end{align}
Again, the matrix inversion lemma
gives
\begin{align}
    {\Ciotap}^{\!-1}=
    \frac{1}{\etaXit{n}}
    \Bigg(
        \eye{\Nz}
        -
        \frac{
           \pnjp
           {\pnjp}^\herm
        }{
            \etaXit{n} 
            + 
            \big\|
            \pnjp
            \big\|^2
        }
    \Bigg)
    \,,
\end{align}
and the generalized determinant lemma
gives
\begin{align}
    \det (\Ciotap)\!=\!
    \big(\etaXit{n}\big)^{\Nz}
    \Bigg(
        1
        \!+\!
        \frac{
            \big\|
            \pnjp
            \big\|^2
        }{
           \etaXit{n} 
        }
    \Bigg)
    .
\end{align}
Again, let $\eiota{(j,p)}\!\coloneqq\!\observation{n}{j}\!-\!\particle{\PFmu{s}{n}{j,p}}{p}\steerVecx{s}{n}{j,p}$, where 
$\muiota(\particle{\state{n}}{p})\!=\! \musnj{1}(\particle{\state{n}}{p})\approx\particle{\PFmu{s}{n}{j,p}}{p}\steerVecx{s}{n}{j,p}$.
The approximate complex Gaussian \gls{mt} state update message evaluated at particle $\particle{\state{n}}{p}$ is
\begin{align}
    &\widetilde{\iota}(\particle{\state{n}}{p};\observation{n}{j}) 
    =
    \mathcal{CN}\big(\observation{n}{j};\muiota\!(\particle{\state{n}}{p}),\Ciota\!(\particle{\state{n}}{p})\big)
    \\
    &\!\approx\!
    \frac{
        \exp\big(
            -
            {\eiota{(j,p)}}^\herm
                {\Ciotap}^{\!-1}
            \eiota{(j,p)}
        \big)
    }{
        \pi^{\Nz}
        \det(\Ciotap)
    }
    \nn\\
    &\!=\!
    \frac{
    \exp\!
    \bigg(
            \frac{
                \particle{\PFgamma{s}{n}{j,p}}{p}
                \left|
                    {\steerVecx{s}{n}{j,p}}^\herm
                    \!
                    \big(
                        \observation{n}{j} 
                        -
                        \particle{\PFmu{s}{n}{j,p}}{p}
                        \steerVecx{s}{n}{j,p}
                    \big)
                \right|^2
            }{
                \etaXit{n}
                \big(
                    \etaXit{n}
                    +
                    \big\|
                        \pnjp
                    \big\|^2
                \big)
            }
            \!-\!
            \frac{
            \big\|
                \observation{n}{j} 
                -
                \particle{\PFmu{s}{n}{j,p}}{p}
                \steerVecx{s}{n}{j,p}
            \big\|^2
            }{
                \etaXit{n}
            }
        \!
    \bigg)
    }
    {
    \pi^{\Nz}
    \big(
        \etaXit{n}
    \big)^{\Nz}
    \bigg(
        1
        +
        \frac{
            \particle{\PFgamma{s}{n}{j,p}}{p}
            \left\|
                \steerVecx{s}{n}{j,p}
            \right\|^2
        }{
            \etaXit{n}
        }
    \bigg)
    }
    \,.
    \nn
\end{align}
This expression involves inner products that are again evaluated with complexity $\mathcal{O}(\Nz)$ per parallelized \gls{pa}-particle pair $(j,p)$,
leaving the total computational complexity of computing this update message at $\mathcal{O}(P\,J\,\Nz)$.

\subsection{Noise Variance Update Message}
We find a \gls{pr} of the covariance matrix parameterizing the noise-variance update message as
\begin{align}
    \Cnu(\particle{\etan{n}}{j,p}) 
    \approx 
    \eye{\Nz}\particle{\etan{n}}{j,p} 
    \!+\!
    \mnj
    {\mnj}^\herm
    \!\eqqcolon\!
    \Cnup\,,
\end{align}
where 
$\mnj\!\coloneqq\!
\sum_{\scriptscriptstyle p=1}^{\scriptscriptstyle P}
\weight{\beta,n}{p}
\sqrt{\particle{\PFgamma{s}{n}{j}}{p}}
\steerVecx{s}{n}{j,p}$.
%
The matrix inversion lemma yields
\begin{align}
    {\Cnup}^{\!-1}=
    \frac{1}{\particle{\etan{n}}{j,p}}
    \Bigg(
        \eye{\Nz}
        -
        \frac{
           \mnj
           {\mnj}^\herm
        }{
            \particle{\etan{n}}{j,p} 
            + 
            \big\|
            \mnj
            \big\|^2
        }
    \Bigg)
    \,,
\end{align}
and the generalized determinant lemma
gives
\begin{align}
    \det (\Cnup)\!=\!
    \big(\particle{\etan{n}}{j,p}\big)^{\Nz}
    \Bigg(
        1
        \!+\!
        \frac{
            \big\|
            \mnj
            \big\|^2
        }{
           \particle{\etan{n}}{j,p} 
        }
    \Bigg)
    .
\end{align}
Let $\enu{(j)}\!\coloneqq\!\observation{n}{j}\!-\!\munu$, where 
$\munu\!=\! 
\musnj{3}
\approx
\sum_{\scriptscriptstyle p=1}^{\scriptscriptstyle P}
\weight{\beta,n}{p}
\particle{\PFmu{s}{n}{j,p}}{p}
\steerVecx{s}{n}{j,p}$.
The approximate complex Gaussian noise variance update message evaluated at particle $\particle{\etan{n}}{j,p}$ is
\begin{align}
    \widetilde{\nu}(\particle{\etan{n}}{j,p};\observation{n}{j}) 
    &=
    \mathcal{CN}\big(\observation{n}{j};\munu,\Cnu\!(\particle{\etan{n}}{j,p})\big)
    \\
    &\approx\!
    \frac{
        \exp\big(
            -
            {\enu{(j)}}^\herm
                {\Cnup}^{\!-1}
            \enu{(j)}
        \big)
    }{
        \pi^{\Nz}
        \det(\Cnup)
    }
    \nn\\
    &=\!
    \frac{
    \exp\!
    \bigg(
            \frac{
                \left|
                    {\enu{(j)}}^\herm
                    \!
                    \mnj
                \right|^2
            }{
                \particle{\etan{n}}{j,p}
                \big(
                    \particle{\etan{n}}{j,p}
                    +
                    \big\|
                        \mnj
                    \big\|^2
                \big)
            }
            \!-\!
            \frac{
            \big\|
                \enu{(j)}
            \big\|^2
            }{
                \particle{\etan{n}}{j,p}
            }
        \!
    \bigg)
    }
    {
    \pi^{\Nz}
    \big(
        \particle{\etan{n}}{j,p}
    \big)^{\Nz}
    \bigg(
        1
        +
        \frac{
            \left\|
                \mnj
            \right\|^2
        }{
            \particle{\etan{n}}{j,p}
        }
    \bigg)
    }
    \,. 
    \nn
\end{align}
While the $\mathcal{O}(\Nz)$ inner products in this expression are particle-independent, computing $\mnj$ and $\musnj{3}$ requires $\mathcal{O}(P\,J\,\Nz)$, which also determines the total complexity of the update message.
Again, operations across $J$ \glspl{pa} and $P$ particles are parallelized in our \gls{gpu}-accelerated implementation.

\section{Posterior Cram\'er--Rao Lower Bound}\label{Ssec:PCRLB}  
The (i) noncoherent, (ii) coherent, and (iii) carrier-phase-based bounds derived in the following are \glspl{pcrlb} for the statistical models underlying the respective Type-I estimators. 
For the Type-II estimators, they merely serve as performance benchmarks.
The derivations in Sections~\ref{Ssec:FIMch} and~\ref{Ssec:Jacobian} follow the approach of~\cite{Deutschmann26TSP}, specialized to \gls{los} channels. 

Note that a closed-form solution for (i) the noncoherent \gls{pcrlb} of the Type-II model can be found under the respective statistical model of the \gls{zm} method, however, its derivation is involved and therefore omitted from this Supplementary Material.
%
As mentioned in Sec.\,\ref{sec:PCRLB}, throughout the \gls{pcrlb} derivation, the amplitudes 
$\widetilde{{\varrho}}_{\!\scriptscriptstyle n}^{\scriptscriptstyle(j)}
\coloneqq
\varrho_{\!\scriptscriptstyle n}^{\scriptscriptstyle(j)}
g_{\scriptscriptstyle n}^{\scriptscriptstyle (j)}
\exp (\!-\mathrm{j}\frac{2\pi}{\lightspeed}\fc \lVert \rangep{k}{n}{j}\rVert )$
absorb the path loss $g_{\scriptscriptstyle n}^{\scriptscriptstyle (j)}\!=\!\frac{\lambda}{4 \pi \lVert \acute{\V{r}} \rVert}$ and the carrier-phase term $\exp (\!-\mathrm{j}\frac{2\pi}{\lightspeed}\fc \lVert \rangep{k}{n}{j}\rVert)$ from~\eqref{eq:unit-modulus-array-response} instead of the steering vectors $\steerVecx{s}{n}{j}$. 

{\slshape Approximations.} 
In the following derivation, these amplitudes $\widetilde{\unslant[-.25]{\varrho}}_{\!\scriptscriptstyle n}^{\scriptscriptstyle(j)}$ at time $n$ and $n\!-\!1$ are treated as independent \glspl{rv}, i.e., $\widetilde{\unslant[-.25]{\varrho}}_{\!\scriptscriptstyle n}^{\scriptscriptstyle(j)} \perp \widetilde{\unslant[-.25]{\varrho}}_{\!\scriptscriptstyle n-1}^{\scriptscriptstyle(j)}$. 
For the polar form of the amplitudes 
$\widetilde{\unslant[-.25]{\varrho}}_{\!\scriptscriptstyle n}^{\scriptscriptstyle(j)}=
\rv{a}_{\!\scriptscriptstyle n}^{\scriptscriptstyle(j)}
\exp(\mathrm{j} \unslant[-.25]{\varphi}_{\!\scriptscriptstyle n}^{\scriptscriptstyle(j)})$ and the noise variance $\RVetan{n}$, we introduce pseudo state-transition \glspl{pdf} 
$f(\modulivec{n}{}|\modulivec{n-1}{})=\mathcal{N}(\modulivec{n}{};\mathbf{0},\sigma_a^2 \eye{\J})$,
$f(\phasevec{n}{}|\phasevec{n-1}{})=\mathcal{N}(\phasevec{n}{};\mathbf{0},\sigma_\varphi^2 \eye{\dimPhase})$,
and
$f(\etan{n}|\etan{n-1})=\mathcal{N}(\etan{n};0,\sigma_\eta^2)$, with $\{\sigma_a^2,\sigma_\varphi^2,\sigma_\eta^2\}$ sufficiently large that the information about $\{{\RVphasevec{n}{}}, {\RVmodulivec{n}{}}, \RVetan{n}\}$ injected through $\FIMstep{n}{n\!-\!1}$ into $\FIMstep{n}{n}$ is negligible compared to the information injected through $\FIMglobal{n}$, while still permitting the computation of $\FIMstep{n}{n\!-\!1}$ and the inverses involved in its computation~\eqref{eq:priorFIM} without rank deficiency.
%
Furthermore, we make the simplifying approximation that the amplitude moduli 
${\rv{a}_{\!\scriptscriptstyle n}^{\scriptscriptstyle(j)}}$
do not contribute information about the parameters of interest
$\RVstate{n}$ and hence the respective Jacobians are zero vectors, i.e., 
$\jacobP{\modulus} = \frac{\partial {\modulivec{n}{(j)}}^\trp}{\partial \pos{n}} \approx 
\mathbf{0}$.
This approximation is valid because the \textit{other} local channel parameters
$\big\{
\RVelVecx{n}{j}, 
\RVazVecx{n}{j},
\RVdelayVecx{n}{j}, 
\unslant[-.25]{\varphi}_{\!\scriptscriptstyle n}^{\scriptscriptstyle(j)}
\big\}$
generally contribute substantially more information about the parameters of interest, i.e., the \gls{mt} state.

The \gls{pcrlb} matrix $\PCRLB$ is a lower bound on the 
\gls{mse} matrix 
$\mathbb{E}_{\,\RVetaglobalSmall{n}, \RVobservationn{n}|\RVobservationn{1:n-1}}
\!\big((\RVetaglobalHat{n}\!-\RVetaglobal{n} ) (\RVetaglobalHat{n}\!-\RVetaglobal{n} )^\trp\big)$
of any \textit{Bayesian} estimator~\cite[eq.\,(29)]{VanTrees2007PCRLB}, implying that \textit{both} the observations $\RVobservationn{n}$ and the state $\RVetaglobal{n}$ are \glspl{rv}. 
For the \gls{pcrlb} $\PCRLB$ to lower-bound the estimation error, the \gls{mt} positions $\pos{n}$ cannot be deterministic unknowns as in a ``classic'' estimation problem, but they must become \glspl{rv} $\RVpos{n}$ not only in the estimator, but also in the data-generating process~\cite{Fritsche16ParametricCRLB}.

{\slshape Synthetic Data Generation:} 
That is, for each \gls{mc} run, at each time $n\!\in\!\{1\ist\dots\ist N\}$, an 
\gls{mt} state $\state{n} \!\coloneqq\! [\pos{n}^\trp \iist \vel{n}^\trp]^\trp$
realization is drawn from its state-transition \gls{pdf} $f(\state{n}|\state{n-1})=\mathcal{N}(\state{n};\transitionmatrix_{\text{\tiny a}}\state{n-1},\processNoiseCov_{\text{\tiny a}})$, thereby
leading to a different \gls{ncv} trajectory $\big\{\pos{n}\big\}_{\scriptscriptstyle n=1}^{\scriptscriptstyle N}$, as shown in Fig.\,\ref{fig:scenario}.

We draw $\state{0}$ 
at time $n\!=\!0$ from 
$
f(\state{0})
=
\mathcal{N}(\state{0};\mathbf{x}_{\scriptscriptstyle 0},\mathbf{C}_{\scriptscriptstyle 0})$.
This Gaussian prior yields the initial \gls{mt} state information
submatrix $\big[\FIMstep{0}{0}\big]_{\scriptscriptstyle1:6,1:6}\!=\!\mathbf{C}_{\scriptscriptstyle 0}^{\scriptscriptstyle-1}$.
Propagation through the state-transition model gives
$
f(\state{1})
=
\mathcal{N}\!\left(
\state{1};
\transitionmatrix_{\text{\tiny a}}\mathbf{x}_{\scriptscriptstyle 0},
\transitionmatrix_{\text{\tiny a}}\mathbf{C}_{\scriptscriptstyle 0}
\transitionmatrix_{\text{\tiny a}}^\trp
+\processNoiseCov_{\text{\tiny a}}
\right)
$,
which defines the \gls{mt} state information submatrix
$\big[\FIMstep{1}{0}\big]_{\scriptscriptstyle1:6,1:6}\!=\!\big(\transitionmatrix_{\text{\tiny a}}\mathbf{C}_{\scriptscriptstyle 0}
\transitionmatrix_{\text{\tiny a}}^\trp
+\processNoiseCov_{\text{\tiny a}}\big)^{\scriptscriptstyle-1}$ 
of the initial predicted information matrix $\FIMstep{1}{0}$ serving as
the starting point of the recursion in~\eqref{eq:posteriorCRLB}--\eqref{eq:priorFIM}.

We use the \gls{mc} analysis both to compute the \gls{mse} matrices of our estimators and to numerically implement the expectation $\FIMglobal{n} \!=\! \mathbb{E}_{\RVetaglobalSmall{n}|\RVobservationn{1:n-1}}\!(\FIMclassic{n})$ in the Bayesian snapshot information matrix from Sec.\,\ref{sec:FIMg} by means of \gls{mc} integration (cf.\,\cite{Hernandez02PCRLB}).
%
Under the above approximations, the linear Gaussian state-transition \gls{pdf} factorizes as
$f(\etaglobal{n}|\etaglobal{n-1})\!=\!
f(\state{n}|\state{n-1})
f(\phasevec{n}{}|\phasevec{n-1}{})
f(\modulivec{n}{}|\modulivec{n-1}{})
f(\etan{n}|\etan{n-1})$
such that
$f(\etaglobal{n}|\etaglobal{n-1})\!=\!\mathcal{N}(\etaglobal{n};\transitionmatrix \etaglobal{n-1},\processNoiseCov)$ is parameterized by the state-transition matrix and the process-noise covariance matrix
\begin{align*}
    \transitionmatrix \!=\! 
    \begin{bmatrixs}
        \transitionmatrix_{\text{\tiny a}}     & \mathbf{0} & \mathbf{0} & 0 \\
        \mathbf{0}  & \mathbf{0}       & \mathbf{0} & 0 \\
        \mathbf{0}  & \mathbf{0} & \mathbf{0}   & 0      \\
        {0}  & {0} & {0}   & 0
    \end{bmatrixs}
\text{,} \,
    \processNoiseCov \!=\! 
    \begin{bmatrixs}
        \processNoiseCov_{\text{\tiny a}}      & \mathbf{0} & \mathbf{0} & 0 \\
        \mathbf{0}  & \sigma_\varphi^2 \eye{\dimPhase}       & \mathbf{0} & 0 \\
        \mathbf{0}  & \mathbf{0} & \sigma_a^2 \eye{\J}   & 0      \\
        {0}  & {0} & {0}   & \sigma_\eta^2 
    \end{bmatrixs} \,
\end{align*}
respectively.
Under this (approximate) linear Gaussian state-transition \gls{pdf}, the predicted information matrix from \eqref{eq:priorFIM} takes the form
$\FIMstep{n}{n\!-\!1} = \big(\transitionmatrix \, \FIMstep{n\!-\!1}{n\!-\!1}^{-1} \, \transitionmatrix^\trp + \processNoiseCov\big)^{-1} ,$
as shown by Hernandez et al.\,\cite[eq.\,(16)]{Hernandez02PCRLB}.

With the prior information matrix defined, the remaining quantities needed to complete the recursion in~\eqref{eq:posteriorCRLB}--\eqref{eq:priorFIM} are the entries of the local channel \glspl{fim} $\FIMch{n}{j}$ from~\eqref{eq:FIMclassic} and the Jacobian matrices $\jacobgn{j}= \nicefrac{\partial {\etach{n}{j}}^{\!\!\!\trp}}{\partial \etaglobal{n}}$ from~\eqref{eq:FIMclassic} mapping from \gls{pa}-local channel parameters $\etach{n}{j}$ to global (infrastructure) parameters $\etaglobal{n}$.

Recall the definitions of the delay, elevation, and azimuth
\begin{align}\label{eq:delay}
    \delay(\acute{\V{r}}) &= \nicefrac{\lVert \acute{\V{r}} \rVert}{\lightspeed} \,,
    \\
    \label{eq:elevation}
    \elevation(\acute{\V{r}}) &= \arccos (\rz/ \lVert \acute{\V{r}} \rVert)\,,
    \\
    \label{eq:azimuth}
    \azimuth(\acute{\V{r}}) &= \arctantwo (\ry,\rx)\,,
\end{align}
representing local spherical \gls{pa}-coordinates parameterized by 
\begin{align}\label{eq:rangepSupp}
    \rangep{k}{n}{j}(\state{n},\posPA{j},\rotM{j}) 
    &\coloneqq \rotM{j}^{-1} \range{k}{n}{j} 
    = \rotM{j}^{\trp} \range{k}{n}{j} \,,
\\
\label{eq:range}
    \range{k}{n}{j}(\state{n},\posPA{j}) &\coloneqq \pos{n} - \posPA{j}\,.
\end{align}
Here, the vector $\range{k}{n}{j}$ points from a \gls{pa} to the \gls{mt} in global coordinates, and $\rangep{k}{n}{j}$ points from the \gls{pa} to the \gls{mt} in local \gls{pa}-coordinates.

\subsection{Derivation of Local Channel FIM Terms}\label{Ssec:FIMch}  
The \gls{pa}-local channel parameter vector 
$\RVetach{n}{j}\!\coloneqq\!\big[
{\RVelVecx{n}{j}}\iist 
{\RVazVecx{n}{j}} \iist  
{\RVdelayVecx{n}{j}} \iist 
\unslant[-.25]{\varphi}_{\!\scriptscriptstyle n}^{\scriptscriptstyle(j)} \iist 
{\rv{a}_{\!\scriptscriptstyle n}^{\scriptscriptstyle(j)}} \iist
\RVetan{n}\big]^\trp \! \in \! \realsetone{\dimLocal}$
contains the elevation and azimuth angles, delay, amplitude phase and modulus, and noise variance.
Here, 
$\delayx{k}{n}{j}\coloneqq \tau(\rangep{k}{n}{j})$ is defined through \eqref{eq:delay},
$\elx{k}{n}{j}\coloneqq \elevation(\rangep{k}{n}{j})$ is defined through \eqref{eq:elevation}, and
$\azx{k}{n}{j}\coloneqq \azimuth(\rangep{k}{n}{j})$ is defined through \eqref{eq:azimuth}.
%
For the derivation of the channel \gls{fim} terms in \eqref{eq:FIMch}, we assume the Type-I likelihood with spatially and temporally uncorrelated, circular \gls{awgn} $\RVnoise{n}{j}$ leading to
$f(\observation{n}{j}|\etach{n}{j})\!\coloneqq\!\CN\big(\observation{n}{j};  \widetilde{\varrho}_{\!\scriptscriptstyle n}^{\scriptscriptstyle(j)} \signalatomnl{k}{n}{j},\etan{n} \eye{\Nz}\big)$
where the amplitudes are decomposed into polar form, i.e., moduli $\modulivec{n}{(j)}\!\in\!\realsetone{}_{\scriptscriptstyle\geq0}$ and phases $\phasevec{n}{(j)}\!\in\!\realsetone{}$.
Evaluation of~\eqref{eq:FIMch} gives the local channel \gls{fim} entries listed in Table~\ref{Stab:channel-fim}.
%
\begin{table}[t]
    \setlength{\tabcolsep}{3.5pt} 
    \renewcommand{\arraystretch}{1.05} 
    \centering \small
    \caption
    {Individual entries $[\FIMch{n}{j}]_{\scriptstyle \grave{\imath},\acute{\imath}}$ of the channel \gls{fim} in~\eqref{eq:FIMch}.
    The listed row and column indices follow the parameter ordering of $\etach{n}{j}$ and describe the upper triangular matrix of $\FIMch{n}{j}$.
        }%
\label{Stab:channel-fim}
    \begin{tabularx}{1\columnwidth}{@{}cc|c|c|c}
        \toprule
        $[\etach{n}{j}]_{\scriptscriptstyle \grave{\imath}}$ & $[\etach{n}{j}]_{\scriptscriptstyle \acute{\imath}}$
        & Term & \scalebox{0.9}{Row\,$\grave{\imath}$} & \scalebox{0.9}{Column\,$\acute{\imath}$}  \\
        \midrule 
        $\elx{k}{n}{j}$&$\elx{k'\!}{n}{j}$ &
            $\frac{2}{\etan{n}}\Re\big(
                     \left.\complexamplitude{\ncomponent}\right.^{\!\ast}\complexamplitude{\ncomponent'\!}
                    {\atomelnl{(j) \herm}{n}}
                    \atomelnl{(j)}{n}
                \big)$  &
            $1$ &  
            $1 $ \\ 
        $\elx{k}{n}{j}$&$\azx{k'\!}{n}{j}$ &
            $\frac{2}{\etan{n}}\Re\big(
                    \left.\complexamplitude{\ncomponent}\right.^{\!\ast}\complexamplitude{\ncomponent'\!} 
                    {\atomelnl{(j) \herm}{n}}
                    \atomaznl{(j)}{n}
                \big)$  &
            $1$ &  
            $2$ \\ 
        $\elx{k}{n}{j}$&$\delayx{k'\!}{n}{j}$ & 
            $\frac{2}{\etan{n}}\Re\big(
                    \left.\complexamplitude{\ncomponent}\right.^{\!\ast}\complexamplitude{\ncomponent'\!} 
                    {\atomelnl{(j) \herm}{n}} 
                    \atomdelaynl{(j)}{n}
                \big)$  &
            $1$ &  
            $3$ \\ 
        $\elx{k}{n}{j}$&$\phasex{k'\!}{n}{j}$ & 
            $\frac{2}{\etan{n}}\Re\big(
                    \mathrm{j} \, \left.\complexamplitude{\ncomponent}\right.^{\!\ast}\complexamplitude{\ncomponent'\!}
                    {\atomelnl{(j) \herm}{n}}
                    {\signalatomnl{k'\!}{n}{j}}
                \big)$  &
            $1$ &  
            $4$ \\ 
        $\elx{k}{n}{j}$&$\modulusx{k'\!}{n}{j}$ & 
            $\frac{2}{\etan{n}}\Re\big(
                    \left.\complexamplitude{\ncomponent}\right.^{\!\ast} e^{\mathrm{j}\phasex{\ncomponent'\!}{n}{j}} 
                    {\atomelnl{(j) \herm}{n}}
                    {\signalatomnl{k'\!}{n}{j}}
                \big)$  &
            $1$ &  
            $5$ \\ 
        $\elx{k}{n}{j}$&$\etan{n}$ & 
            $0$  &
            $1$ &  
            $6$ \\ 
   \midrule
        $\azx{k}{n}{j}$&$\azx{k'\!}{n}{j}$ & 
            $\frac{2}{\etan{n}}\Re\big(
                    \left.\complexamplitude{\ncomponent}\right.^{\!\ast} \complexamplitude{\ncomponent'\!} 
                    {\atomaznl{(j) \herm}{n}} 
                    \atomaznl{(j)}{n}
                \big)$  &
            $2$ &  
            $2$ \\ 
        $\azx{k}{n}{j}$&$\delayx{k'\!}{n}{j}$ & 
            $\frac{2}{\etan{n}}\Re\big(
                    \left.\complexamplitude{\ncomponent}\right.^{\!\ast} \complexamplitude{\ncomponent'\!} 
                    {\atomaznl{(j) \herm}{n}}
                    \atomdelaynl{(j)}{n}
                \big)$  &
            $2$ &  
            $3$ \\ 
        $\azx{k}{n}{j}$&$\phasex{k'\!}{n}{j}$ & 
            $\frac{2}{\etan{n}}\Re\big(
                    \mathrm{j} \, \left.\complexamplitude{\ncomponent}\right.^{\!\ast} \complexamplitude{\ncomponent'\!} 
                    {\atomaznl{(j) \herm}{n}}
                    {\signalatomnl{k'\!}{n}{j}}
                \big)$  &
            $2$ &  
            $4$ \\ 
        $\azx{k}{n}{j}$&$\modulusx{k'\!}{n}{j}$ & 
            $\frac{2}{\etan{n}}\Re\big(
                    \left.\complexamplitude{\ncomponent}\right.^{\!\ast} e^{\mathrm{j}\phasex{\ncomponent'\!}{n}{j}} 
                    {\atomaznl{(j) \herm}{n}}
                    {\signalatomnl{k'\!}{n}{j}}
                \big)$  &
            $2$ &  
            $5$ \\ 
        $\azx{k}{n}{j}$&$\etan{n}$ & 
            $0$  &
            $2$ &  
            $6$ \\ 
   \midrule
        $\delayx{k}{n}{j}$&$\delayx{k'\!}{n}{j}$ & 
            $\frac{2}{\etan{n}}\Re\big(
                    \left.\complexamplitude{\ncomponent}\right.^{\!\ast} \complexamplitude{\ncomponent'\!} 
                    {\atomdelaynl{(j) \herm}{n}}
                    \atomdelaynl{(j)}{n}
                \big)$  &
            $3$ &  
            $3$ \\ 
        $\delayx{k}{n}{j}$&$\phasex{k'\!}{n}{j}$ & 
            $\frac{2}{\etan{n}}\Re\big(
                    \mathrm{j} \, \left.\complexamplitude{\ncomponent}\right.^{\!\ast} \complexamplitude{\ncomponent'\!} 
                    {\atomdelaynl{(j) \herm}{n}}
                    {\signalatomnl{k'\!}{n}{j}}
                \big)$  &
            $3$ &  
            $4$ \\ 
        $\delayx{k}{n}{j}$&$\modulusx{k'\!}{n}{j}$ & 
            $\frac{2}{\etan{n}}\Re\big(
                    \left.\complexamplitude{\ncomponent}\right.^{\!\ast} e^{\mathrm{j}\phasex{\ncomponent'\!}{n}{j}} 
                    {\atomdelaynl{(j) \herm}{n}}
                    {\signalatomnl{k'\!}{n}{j}}
                \big)$  &
            $3$ &  
            $5$ \\ 
        $\delayx{k}{n}{j}$&$\etan{n}$ & 
            $0$  &
            $3$ &  
            $6$ \\ 
    \midrule
        $\phasex{k}{n}{j}$&$\phasex{k'\!}{n}{j}$ & 
            $\frac{2}{\etan{n}}\Re\big(
                    \left.\complexamplitude{\ncomponent}\right.^{\!\ast} \complexamplitude{\ncomponent'\!} 
                    \left.\signalatomnl{k}{n}{j}\right.^{\!\herm}
                    {\signalatomnl{k'\!}{n}{j}}
                \big)$  &
            $4$ &  
            $4$ \\ 
        $\phasex{k}{n}{j}$&$\modulusx{k'\!}{n}{j}$ & 
            $\frac{2}{\etan{n}}\Re\big(
                    - \mathrm{j}  \!\left.\complexamplitude{\ncomponent}\right.^{\!\ast} \!\!e^{\mathrm{j}\phasex{\ncomponent'\!}{n}{j}}
                    \!\left.\signalatomnl{k}{n}{j}\right.^{\!\herm}
                    \!{\signalatomnl{k'\!}{n}{j}}
                \big)$  &
            $4$ &  
            $5$ \\ 
            $\phasex{k}{n}{j}$&$\etan{n}$ & 
            $0$  &
            $4$ &  
            $6$ \\ 
            \midrule
        $\modulusx{k}{n}{j}$&$\modulusx{k'\!}{n}{j}$ & 
            $\frac{2}{\etan{n}}\Re\big(
                    e^{-\mathrm{j}\phasex{k}{n}{j}}  e^{\mathrm{j}\phasex{k'\!}{n}{j}}
                    \left.\signalatomnl{k}{n}{j}\right.^{\!\herm}
                    {\signalatomnl{k'\!}{n}{j}}
                \big)$  &
            $5 $ &  
            $5 $ \\ 
            $\modulusx{k}{n}{j}$&$\etan{n}$ & 
            $0$  &
            $5 $ &  
            $6$ \\ 
            \midrule
            $\etan{n}$&$\etan{n}$ & 
            $\frac{\Nz }{\etan{n}^2}$  &
            $6$ &  
            $6$ \\
            \bottomrule
    \end{tabularx}
    \vspace{-8mm}
\end{table}
%
We use $\grave{\imath}$ and $\acute{\imath}$ to index the rows and columns, respectively, of $\FIMch{n}{j}$. 
Table~\ref{Stab:channel-fim} specifies its upper triangular part, while its lower triangular part follows from the symmetry of \glspl{fim},
i.e., 
$\left[\FIMch{n}{j}\right]_{\grave{\imath},\acute{\imath}}= \left[\FIMch{n}{j}\right]_{\acute{\imath},\grave{\imath}}$. 
%
Unlike the unit-modulus array response in~\eqref{eq:unit-modulus-array-response}, 
with the carrier-phase term $\exp (\!-\mathrm{j}\frac{2\pi}{\lightspeed}\fc \lVert \acute{\V{r}}\rVert)$ absorbed into the amplitudes $\widetilde{{\varrho}}_{\!\scriptscriptstyle n}^{\scriptscriptstyle(j)}$, the array response used for deriving the \glspl{pcrlb} factorizes as 
\begin{align}\label{Seq:signalatom}
    \signalatom 
    \big(\delay,\elevation,\azimuth\big) \coloneqq 
        \bm{b}(\delay)\otimes 
        \bm{a}_y(\elevation,\azimuth) \otimes \bm{a}_z(\elevation)
        \quad \in \complexsetone{\Nz} \,,
\end{align}
again, under the assumption of a \gls{ura} and per-anchor plane-wave propagation.
For notational brevity, the dependencies of $\bm{b}(\delay)$, 
$\bm{a}_y(\elevation,\azimuth)$, and $\bm{a}_z(\elevation)$ on their parameters are occasionally omitted.
To completely describe the terms in Table~\ref{Stab:channel-fim}, we further derive the following partial derivatives of 
the array response w.r.t. the channel parameters in $\etach{n}{j}$:
%
The derivatives of the array response in~\eqref{Seq:signalatom} w.r.t. incidence angles $(\elevation, \azimuth)$ and delay $\delay$ evaluate to 
\begin{align}
    \atomel\big(\delay,\elevation,\azimuth\big) \coloneqq \frac{\partial \signalatom}{\partial \elevation}
    &= \bm{b}  \otimes \dot{\bm{a}}_{y,\elevation} \!\otimes \bm{a}_z + 
     \bm{b} \otimes \bm{a}_y \!\otimes \dot{\bm{a}}_{z,\elevation} 
     \\
    \atomaz\big(\delay,\elevation,\azimuth\big) \coloneqq \frac{\partial \signalatom}{\partial \azimuth}   
    &=  \bm{b}  \otimes \dot{\bm{a}}_{y,\azimuth} \otimes \bm{a}_z 
    {+ \underbrace{ \bm{b}  \otimes \bm{a}_y \otimes \dot{\bm{a}}_{z,\azimuth}}_{=\mathbf{0}} } 
    \\[-13pt]
    \atomdelay\big(\delay,\elevation,\azimuth\big) \coloneqq \frac{\partial \signalatom}{\partial \delay}   
    &= \dot{\bm{b}} \otimes \bm{a}_y \otimes \bm{a}_z 
\end{align}
%
where the individual derivatives of temporal (denoted $\dot{\bm{b}}$) and spatial (denoted $\dot{\bm{a}}$) array responses are given below.
%
The partial derivative of the horizontal spatial array response in~\eqref{eq:spatial-response-y} w.r.t. elevation angle $\elevation$ becomes
\begin{align*}
    \dot{\bm{a}}_{y,\elevation}(\elevation,\azimuth)  = \frac{\partial \bm{a}_y}{\partial \elevation } 
    = 
    \mathrm{j} 
    \frac{2 \pi}{\lambda} 
    \cos (\elevation) 
    \sin (\azimuth) 
    \,\mathbf{p}_y 
    \!\odot\!
    \bm{a}_y(\elevation,\azimuth)  
    \in \complexsetone{\Nantennasy} ,
\end{align*}
and the partial derivative w.r.t. azimuth angle $\azimuth$ becomes
\begin{align*}
    \dot{\bm{a}}_{y,\azimuth}(\elevation,\azimuth)  = \frac{\partial \bm{a}_y}{\partial \azimuth }
    = 
    \mathrm{j} \frac{2 \pi}{\lambda} 
    \sin (\elevation) 
    \cos (\azimuth) 
    \,\mathbf{p}_y
    \!\odot\! 
    \bm{a}_y(\elevation,\azimuth)  
    \in \complexsetone{\Nantennasy}.
\end{align*}
%
The partial derivative of the vertical spatial array response in \eqref{eq:spatial-response-z} w.r.t. elevation angle $\elevation$ becomes
\begin{align*}
    \dot{\bm{a}}_{z,\elevation}(\elevation) = \frac{\partial \bm{a}_z}{\partial \elevation }
      =  -\mathrm{j} \frac{2 \pi}{\lambda}  \sin (\elevation) ~ \mathbf{p}_z 
    \odot \bm{a}_z(\elevation)    
    \,,
\end{align*}
and the partial derivative w.r.t. azimuth angle $\azimuth$ becomes
\begin{align}\label{Seq:derivative-vertical-response-wrt-azimuth}
    \dot{\bm{a}}_{z,\azimuth} = \frac{\partial \bm{a}_z}{\partial \azimuth }
    =  \bm{0} \quad \in \complexsetone{\Nantennasz} \,,
\end{align}
%
For the temporal response in \eqref{eq:delay-array-response}, the derivative w.r.t. delay is
\begin{align}
    \dot{\bm{b}}(\delay) = \frac{\partial \bm{b}}
    {\partial \delay }
    =  -\mathrm{j} 2 \pi \mathbf{f} \odot \bm{b}(\delay) \quad \in \complexsetone{\Nfrequency} \,.
\end{align}
%
We further define the shorthand notation
$\signalatomnl{k}{n}{j}\coloneqq\signalatom\big(\delayx{k}{n}{j},\elx{k}{n}{j},\azx{k}{n}{j}\big)$,
$\atomdelaynl{(j)}{n}\coloneqq\atomdelay\big(\delayx{k}{n}{j},\elx{k}{n}{j},\azx{k}{n}{j}\big)$,
$\atomelnl{(j)}{n}\coloneqq\atomel\big(\delayx{k}{n}{j},\elx{k}{n}{j},\azx{k}{n}{j}\big)$, and
$\atomaznl{(j)}{n}\coloneqq\atomaz\big(\delayx{k}{n}{j},\elx{k}{n}{j},\azx{k}{n}{j}\big)$.

\subsection{Derivation of Jacobian Matrices}\label{Ssec:Jacobian}  

As mentioned in Sec.\,\ref{sec:PCRLB}, the differences among (i) the noncoherent, (ii) the coherent, and (iii) the carrier-phase-based \glspl{pcrlb} are fundamentally tied to the number $\dimPhase$ of nuisance phases that need to be estimated.
%
%
For (i) the noncoherent \gls{pcrlb}, the inversion in~\eqref{eq:posteriorCRLB} eventually costs the information on the global parameters of interest that the $J$ phases of all $J$ \glspl{pa} in $\RVetach{n}{j}$ could otherwise have contributed, because they map to $J$ distinct nuisance phases in $\RVetaglobal{n}$ that are different across the $J$ \glspl{pa} (cf.\,\cite[eq.\,(8.43)]{VanTrees2002optimumASP}).
%
By contrast, for (ii) the coherent \gls{pcrlb}, the $J$ \gls{pa} phases in $\RVetach{n}{j}$ map to only one nuisance phase in $\RVetaglobal{n}$, so the information not consumed by estimating this nuisance phase remains available for positioning.
Notably, this information gain comes from retaining the aperture of a jointly coherent \gls{dmimo} infrastructure of distributed \glspl{pa}.
Finally, in the case of (iii) the carrier-phase-based \gls{pcrlb}, no nuisance phases need to be estimated and hence all phase information contributes to the \gls{mt} position.

The parameter \textit{mapping} from local channel parameters $\RVetach{n}{j}$ 
to global parameters $\RVetaglobal{n}$
is contained in the Jacobian matrices $\jacobgn{j}$ that we derive in the following.
%
Mapping from local channel parameters $\etach{n}{j}$ to global parameters $\etaglobal{n}$, the $(\dimGlobal\!\times\!\dimLocal)$ Jacobian matrices are defined as\footnote{Note that the unusual definition of the Jacobian matrices $\jacobgn{j}\!=\! \nicefrac{\partial {\etach{n}{j}}^{\!\!\trp}}{\partial \etaglobal{n}} \!\in\!  \realset{\dimGlobal}{\dimLocal}$ comes from the fact that they are defined to propagate Fisher information (i.e., precision) rather than covariance.}
\begin{align}\label{eq:jacobian-main}
    \jacobgn{j} \!\coloneqq\!   
    \frac{\partial \etach{n}{j}\!^\trp }{\partial \etaglobal{n}} \!=\!
    \begin{bmatrixs}
        \jacobP{\elevation} & \jacobP{\azimuth}& \jacobP{\delay}& \jacobP{\varphi}& \bm{0} & 0 \\
        \bm{0} & \bm{0} & \bm{0}            &\bm{0}                 & \bm{0}    & 0 \\
        \bm{0} & \bm{0} & \bm{0}            & \jacobC{\varphi}      & \bm{0}    & 0 \\
        \bm{0} & \bm{0} & \bm{0}            & \bm{0}                & \jacobA{a}& 0 \\
        0 & 0 & 0            & 0            & 0    & 1 \\
    \end{bmatrixs}
\end{align}
with the Jacobian submatrices $\jacobP{\cdot}$ mapping information from the respective channel parameters to the \gls{mt} position $\pos{n}$, the submatrix $\jacobC{\varphi}$ mapping information to the nuisance phases $\phasevec{n}{(j)}$, and $\jacobA{a}$ mapping information to the nuisance amplitude moduli $\modulivec{n}{(j)}$ in $\etaglobal{n}$.
%
The zero rows belonging to the \gls{mt} velocity $\vel{n}$ in the Jacobian $\jacobgn{j}$ make it rank-deficient and the respective global per-snapshot \gls{fim} $\FIMclassic{n}$ in~\eqref{eq:FIMclassic} singular. 
Information about the \gls{mt} velocity $\RVvel{n}$ in the posterior \gls{fim} $\FIMstep{n}{n}$ of~\eqref{eq:posteriorCRLB}---restoring its full rank---is then obtained only by propagating information about the position over the state-space model in~\eqref{eq:priorFIM} (via off-diagonal elements in $\transitionmatrix_{\text{\tiny a}}$).


\subsubsection{Fundamental Jacobian Matrices}
As a result of the chain rule, the Jacobian submatrices in~\eqref{eq:jacobian-main} are products of more fundamental Jacobian building blocks, which we derive first:
%

Abbreviating $\rx:=[\rangep{k}{n}{j}]_{\scriptscriptstyle 1}$, $\ry:=[\rangep{k}{n}{j}]_{\scriptscriptstyle 2}$, and $\rz:=[\rangep{k}{n}{j}]_{\scriptscriptstyle 3}$ for notational brevity, we derive the mapping of Fisher information in \textit{local} spherical \gls{pa} coordinates $\{\delayx{k}{n}{j},\elx{k}{n}{j},\azx{k}{n}{j}\}$ to \textit{local} Cartesian \gls{pa} coordinates $\rangep{k}{n}{j}\in\realsetone{3}$.
%
The mapping in elevation $\elx{k}{n}{j} = \arccos (\rz/ \lVert \rangep{k}{n}{j} \rVert)$ from \eqref{eq:elevation} is 
\begin{align}
    \frac{\partial \elx{k}{n}{j} }{\partial\rangep{k}{n}{j} } = \frac{1}{\lVert \rangep{k}{n}{j} \rVert^2 \, \sqrt{ \rx^2 + \ry^2} } 
    \begin{bmatrix}
        \rx \rz \\
        \ry \rz \\
        -\!\left(\rx^2 + \ry^2\right) 
    \end{bmatrix}
    \quad \in \realset{3}{1} \,,
\end{align}
and in azimuth $\azx{k}{n}{j} = \arctantwo(\ry,\rx)$ from \eqref{eq:azimuth} 
\begin{align}
     \frac{\partial \azx{k}{n}{j} }{\partial\rangep{k}{n}{j} } = 
    \frac{1}{\rx^2 + \ry^2}
    \begin{bmatrix}
        -\ry \\
        \rx \\
        0
    \end{bmatrix} 
    \quad \in \realset{3}{1} \,.
\end{align}
%
The mapping of Fisher information in \textit{local} spherical \gls{pa} coordinates to \textit{local} Cartesian \gls{pa} coordinates in perceived delay $\delayx{k}{n}{j} =  \frac{\lVert \rangep{k}{n}{j} \rVert}{\lightspeed} $ from~\eqref{eq:delay} is 
\begin{align}
     \frac{\partial \delayx{k}{n}{j} }{\partial\rangep{k}{n}{j} } = \frac{\rangep{k}{n}{j}}{\lightspeed ~ \lVert \rangep{k}{n}{j} \rVert} \quad \in \realset{3}{1} \,,
\end{align}
and in carrier phase 
$\phasex{k}{n}{j} = -\frac{2\pi}{\lightspeed}\fc \lVert \acute{\V{r}}\rVert + \angle \varrho_{\!\scriptscriptstyle n}^{\scriptscriptstyle(j)}$
is 
\begin{align}
     \frac{\partial \phasex{k}{n}{j} }{\partial\rangep{k}{n}{j} } = 
    \frac{-2 \pi ~ \rangep{k}{n}{j}}{\lambda ~ \lVert \rangep{k}{n}{j} \rVert} \quad \in \realset{3}{1} \,,
\end{align}
 with wavelength $\lambda = \frac{\lightspeed}{\fc}$.
%
As mentioned above, we assume no mapping from amplitude moduli to \gls{mt} position, i.e., 
$\jacobP{\modulus} = \frac{\partial {\modulivec{n}{(j)}}^\trp}{\partial \pos{n}} \approx \mathbf{0}$ under the approximation $\frac{\partial g_{\scriptscriptstyle n}^{\scriptscriptstyle (j)}}{\partial \rangep{k}{n}{j}}\approx \mathbf{0}$, because their information contribution to the \gls{mt} position is negligible compared to the contributions from the other channel parameters 
$\big\{
\RVelVecx{n}{j}, 
\RVazVecx{n}{j},
\RVdelayVecx{n}{j}, 
\unslant[-.25]{\varphi}_{\!\scriptscriptstyle n}^{\scriptscriptstyle(j)}
\big\}$.

The following Jacobian terms map between coordinate systems:
The mapping from $\rangep{k}{n}{j}$ defined in~\eqref{eq:rangepSupp} in \textit{local} Cartesian \gls{pa} coordinates to \textit{global} Cartesian coordinates is 
\begin{align}
    \frac{\partial\rangep{k}{n}{j}\!^\trp}{\partial\range{k}{n}{j}}  = \rotM{j} 
    \quad \in SO(3) \,,
\end{align}
due to the orthogonality of rotation matrices, i.e., $\rotM{j}^{-1} = \rotM{j}^\trp$.
%
Through~\eqref{eq:range}, information about the vector $\range{k}{n}{j}$ maps to the \gls{mt} position $\pos{n}$ via
\begin{align}\label{eq:Jacob-ranget-pos}
    \frac{\partial\range{k}{n}{j}\!^\trp}{\partial\pos{n}}  = 
    \eye{3}
    \quad \in \realset{3}{3} \,.
\end{align}

These fundamental Jacobian building blocks will reappear in the submatrices of the Jacobians $\jacobgn{j}$ in~\eqref{eq:jacobian-main}, which we define next.

\subsubsection{Positioning Submatrices}       
Using the chain rule, the Jacobians mapping to the \gls{mt} position are assembled from these fundamental building blocks.
\begingroup             
\allowdisplaybreaks[4]
\begin{align}
    \jacobP{\elevation} 
    &= 
    \frac{\partial\range{k}{n}{j}\!^\trp}{\partial\pos{n}}
    \frac{\partial\rangep{k}{n}{j}\!^\trp}{\partial\range{k}{n}{j}}
    \frac{\partial \elx{k}{n}{j} }{\partial\rangep{k}{n}{j} } \quad \in \realset{3}{1}  &&\\
    &= 
    \eye{3} ~ \rotM{j} ~ 
    \frac{1}{\lVert \rangep{k}{n}{j} \rVert^2 \, \sqrt{ \rx^2 + \ry^2} } 
    \begin{bmatrix}
        \rx \rz \\
        \ry \rz \\
        -\!\left(\rx^2 + \ry^2\right) 
    \end{bmatrix}  & 
    \nonumber&\\
    \jacobP{\azimuth} 
    &= 
    \frac{\partial\range{k}{n}{j}\!^\trp}{\partial\pos{n}}
    \frac{\partial\rangep{k}{n}{j}\!^\trp}{\partial\range{k}{n}{j}}
     \frac{\partial \azx{k}{n}{j} }{\partial\rangep{k}{n}{j} }\nonumber &&\\
    &= 
    \eye{3} ~ \rotM{j} ~ 
    \frac{1}{\rx^2 + \ry^2}
    \begin{bmatrix}
        -\ry \\
        \rx \\
        0
    \end{bmatrix} \quad \in \realset{3}{1} & 
    &\\
    \jacobP{\delay} 
    &= 
    \frac{\partial\range{k}{n}{j}\!^\trp}{\partial\pos{n}}
    \frac{\partial\rangep{k}{n}{j}\!^\trp}{\partial\range{k}{n}{j}}
    \frac{\partial \delayx{k}{n}{j} }{\partial\rangep{k}{n}{j} } \nonumber &&\\
    &= 
    \eye{3} ~ \rotM{j} ~ 
    \frac{\rangep{k}{n}{j}}{\lightspeed ~ \lVert \rangep{k}{n}{j} \rVert} \quad \in \realset{3}{1}     \, & 
    &\\
    \jacobP{\varphi} 
    &= 
    \frac{\partial\range{k}{n}{j}\!^\trp}{\partial\pos{n}}
    \frac{\partial\rangep{k}{n}{j}\!^\trp}{\partial\range{k}{n}{j}}
    \frac{\partial \phasex{k}{n}{j} }{\partial\rangep{k}{n}{j} } \nonumber &&\\
    &= 
    \eye{3} ~ \rotM{j} ~ 
    \frac{-2 \pi ~ \rangep{k}{n}{j}}{\lambda ~ \lVert \rangep{k}{n}{j} \rVert} \quad \in \realset{3}{1}  \,.   
    &
\end{align}
\endgroup


\subsubsection{Nuisance Parameter Submatrices}       
Let $\binsetone{}\!\coloneqq\!\{0,1\}$ denote the binary set.
The Jacobian matrices mapping from phases in $\etach{n}{j}$ to phases in $\etaglobal{n}$ are
\begin{align*}
    \jacobC{\varphi} = 
    \frac{
	\partial{\varphi}_{\!\scriptscriptstyle n}^{\scriptscriptstyle(j)}
	}{\partial \phasevec{n}{}} = 
    \begin{cases}
        \unitvector{j}   \quad \in \binset{J}{1} & \text{noncoherent}
        \\
        \,1\, \quad\in \binsetone{} & \text{coherent}
        \\
         [~] \quad \in \binset{0}{1} & \text{carrier-phase-based}
    \end{cases}
\end{align*}
and the Jacobian matrices mapping from moduli are
\begin{align}
    \jacobA{a} = 
    \frac{
	\partial{{a}_{\!\scriptscriptstyle n}^{\scriptscriptstyle(j)}}
    }{\partial \modulivec{n}{}} 
    = \unitvector{j}  \quad \in \binset{J}{1}\,,
\end{align}
with $\unitvector{j}\!\in\!\binsetone{J}$ denoting a unit vector whose $j$\textsuperscript{th} entry is one.

\bibliographystyle{IEEEtran}
\balance
\bibliography{IEEEabrv,bibliography,ThisPaper}

%% file: definitions.tex
\newcommand{\PaperTitle}{Coherent Direct D-MIMO Localization}

\newcommand{\IEEEversion}{false}  

\usepackage[normalem]{ulem}      
\makeatletter
\font\redwavefont=lasyb10 scaled 850 
\newcommand{\revise}{%
  \bgroup
  \markoverwith{\lower3.5\p@\hbox{\sixly\textcolor{red}{\redwavefont\char58}}}%
  \ULon
}
\makeatother

\newcommand{\estLabel}[1]{%
  \tikzexternaldisable%
  \tikz[baseline=(char.base)]{%
    \node[shape=rectangle,
          draw=black,
          fill=gray!20,
          rounded corners=0.5pt,
          inner sep=1pt,
          minimum size=1.0em,
          text=black] (char) {\adjustbox{max height=0.65em}{#1}};%
  }%
  \tikzexternalenable%
}

\usepackage{tikz}
\usepackage{tkz-euclide}
\usetikzlibrary{arrows.meta}

\usepackage{setspace}
\usepackage{bm}
\usepackage{mathtools}
\usepackage{graphicx}
\usepackage{relsize}
\usepackage{cite}
\usepackage{amsmath}
\usepackage[dvipsnames]{xcolor}
\usepackage{amssymb}
\usepackage{dblfloatfix}
\usepackage{tabularx}
\usepackage{makecell}
\usepackage{lipsum}
\usepackage{float}
\usepackage{subfiles}
\usepackage{epstopdf}
\usepackage{verbatim}
\usepackage{amsmath}
\usepackage{ifthen}
\usepackage{pgfplots}
\usepackage{pgfplotstable}
\usepackage{tikzscale}
\usepackage{acronym}
\usepackage{amsfonts}
\usepackage{tikz-3dplot}
\usepackage{cuted}

\usepackage{capt-of}

\usetikzlibrary{intersections}  
\usepgfplotslibrary{fillbetween}

\usetikzlibrary{spy,backgrounds}
\usetikzlibrary{fit}
\usetikzlibrary{calc}

\usepackage{placeins}
\usepackage{tikzscale}		
\usepackage{tcolorbox}

\usepackage{ifthen}

\usepackage[framemethod=TikZ]{mdframed}	

\usepackage{pgfplots}
  \tcbset{shield externalize} 
  \pgfplotsset{compat=newest}
  \usetikzlibrary{plotmarks}
  \usetikzlibrary{arrows.meta}
  
  \usetikzlibrary{arrows}
  \usetikzlibrary{mindmap,trees}
  \usetikzlibrary{positioning,spy}
  \usetikzlibrary{calc,fadings,decorations.pathreplacing}
  \usetikzlibrary{decorations.pathmorphing,decorations.text}
  \usetikzlibrary{datavisualization.polar}      
  \usepackage{pgfplotstable}
  \usepgfplotslibrary{polar}
  \usetikzlibrary{patterns}

  \usetikzlibrary{plotmarks}
  \usepgfplotslibrary{patchplots}
  \usepackage{grffile}
  \usepackage{amsmath}
  \usepackage{tikzscale}		
  \usepackage{tikz-3dplot} 
  \usetikzlibrary{positioning}
  \usetikzlibrary{arrows}
  \usetikzlibrary{fit,backgrounds}  

\pgfdeclarelayer{behindBackground}
\pgfdeclarelayer{background}
\pgfdeclarelayer{foreground}
\pgfsetlayers{behindBackground,background,main,foreground}   

\tikzset{%
  >=latex,
  inner sep=0pt,%
  outer sep=2pt,%
  mark coordinate/.style={inner sep=0pt,outer sep=0pt,minimum size=3pt,
  fill=black,circle}%
}

\renewcommand{\baselinestretch}{1} 

\colorlet{veccol}{green!50!black}
\colorlet{projcol}{blue!70!black}
\colorlet{myblue}{blue!80!black}
\colorlet{myred}{red!90!black}
\colorlet{mydarkblue}{blue!50!black}
\tikzset{>=latex} 
\tikzstyle{proj}=[projcol!80,line width=0.08] 
\tikzstyle{area}=[draw=veccol,fill=veccol!80,fill opacity=0.6]
\tikzstyle{vector}=[-stealth,myblue,thick,line cap=round]
\tikzstyle{unit vector}=[->,veccol,thick,line cap=round]
\tikzstyle{dark unit vector}=[unit vector,veccol!70!black]
\usetikzlibrary{angles,quotes} 

\definecolor{RDlightgreen}{RGB}{141 192 69}
\definecolor{RDgreen}{rgb}{0.3647, 0.4275, 0.2667}
\definecolor{RDdarkgreen}{rgb}{0.2196, 0.2196, 0.2196}
\definecolor{RDmaroon}{rgb}{.522,.22,.353} %

\definecolor{IEEEblue}{RGB}{0 98 155}
\definecolor{IEEElightblue}{RGB}{0 181 226}
\definecolor{IEEEturquoise}{RGB}{0 156 166}
\definecolor{IEEEred}{RGB}{186 12 47}
\definecolor{IEEEgreen}{RGB}{0 132 61}
\definecolor{IEEElightgreen}{RGB}{120 190 32}
\definecolor{IEEEorange}{RGB}{225 163 0}
\definecolor{IEEEyellow}{RGB}{255 209 0}
\definecolor{IEEEviolet}{RGB}{152 29 151}
\definecolor{IEEEdarkmaroon}{RGB}{134 31 65}
\definecolor{IEEEdarkorange}{RGB}{232 119 34}

\colorlet{red}{IEEEred}
\colorlet{blue}{IEEEblue}
\colorlet{orange}{IEEEorange}
\colorlet{green}{IEEEgreen}
\definecolor{violet}{RGB}{152 29 151}

\definecolor{sand}{rgb}{0.88235,0.63922,0.00000}
\colorlet{IEEEyellow}{sand}
\colorlet{SFVcolor}{IEEElightblue}

\newcommand{\ist}{\hspace*{.3mm}}

\newcommand{\iist}{\hspace*{1mm}}

\newcommand{\nn}{\nonumber}

\DeclareMathAlphabet{\mathsfbr}{OT1}{cmss}{m}{n}
\SetMathAlphabet{\mathsfbr}{bold}{OT1}{cmss}{bx}{n}
\DeclareRobustCommand{\msf}[1]{%
	\ifcat\noexpand#1\relax\msfgreek{#1}\else\mathsfbr{#1}\fi
}

\makeatletter
\newcommand{\msfgreek}[1]{\csname s\expandafter\@gobble\string#1\endcsname}
\makeatother

\DeclareRobustCommand{\mcal}[1]{%
	\ifcat\noexpand#1\relax\mathnormal{#1}\else\cal{#1}\fi
}
\DeclareRobustCommand{\BM}[1]{%
	\ifcat\noexpand#1\relax\bm{\boldUppercaseItalicGreek{#1}}\else\bm{#1}\fi
}
\makeatletter
\newcommand{\boldUppercaseItalicGreek}[1]{\csname var\expandafter\@gobble\string#1\endcsname}
\makeatother

\newcommand{\rv}[1]{\msf{#1}}
\newcommand{\RV}[1]{\bm{\msf{#1}}}

\newcommand{\V}[1]{\bm{#1}}
\newcommand{\M}[1]{\BM{#1}}

\let\geq\geqslant

\let\succeq\succcurlyeq

\providecommand{\IEEEQED}{\IEEEQEDopen}

\newcommand{\E}[0]{\mathbb{E}}

\DeclareMathVersion{timesmath}
\SetSymbolFont{letters}{timesmath}{OML}{ntxmi}{m}{it}
\DeclareMathVersion{timesmathbold}
\SetSymbolFont{letters}{timesmathbold}{OML}{ntxmi}{b}{it}

\makeatletter
\newif\ifAC@uppercase@first%
\def\Aclp#1{\AC@uppercase@firsttrue\aclp{#1}\AC@uppercase@firstfalse}%
\def\AC@aclp#1{%
	\ifcsname fn@#1@PL\endcsname%
	\ifAC@uppercase@first%
	\expandafter\expandafter\expandafter\MakeUppercase\csname fn@#1@PL\endcsname%
	\else%
	\csname fn@#1@PL\endcsname%
	\fi%
	\else%
	\AC@acl{#1}s%
	\fi%
}%
\def\Acp#1{\AC@uppercase@firsttrue\acp{#1}\AC@uppercase@firstfalse}%
\def\AC@acp#1{%
	\ifcsname fn@#1@PL\endcsname%
	\ifAC@uppercase@first%
	\expandafter\expandafter\expandafter\MakeUppercase\csname fn@#1@PL\endcsname%
	\else%
	\csname fn@#1@PL\endcsname%
	\fi%
	\else%
	\AC@ac{#1}s%
	\fi%
}%
\def\Acfp#1{\AC@uppercase@firsttrue\acfp{#1}\AC@uppercase@firstfalse}%
\def\AC@acfp#1{%
	\ifcsname fn@#1@PL\endcsname%
	\ifAC@uppercase@first%
	\expandafter\expandafter\expandafter\MakeUppercase\csname fn@#1@PL\endcsname%
	\else%
	\csname fn@#1@PL\endcsname%
	\fi%
	\else%
	\AC@acf{#1}s%
	\fi%
}%
\def\Acsp#1{\AC@uppercase@firsttrue\acsp{#1}\AC@uppercase@firstfalse}%
\def\AC@acsp#1{%
	\ifcsname fn@#1@PL\endcsname%
	\ifAC@uppercase@first%
	\expandafter\expandafter\expandafter\MakeUppercase\csname fn@#1@PL\endcsname%
	\else%
	\csname fn@#1@PL\endcsname%
	\fi%
	\else%
	\AC@acs{#1}s%
	\fi%
}%
\edef\AC@uppercase@write{\string\ifAC@uppercase@first\string\expandafter\string\MakeUppercase\string\fi\space}%
\def\AC@acrodef#1[#2]#3{%
	\@bsphack%
	\protected@write\@auxout{}{%
		\string\newacro{#1}[#2]{\AC@uppercase@write #3}%
	}\@esphack%
}%
\def\Acl#1{\AC@uppercase@firsttrue\acl{#1}\AC@uppercase@firstfalse}
\def\Acf#1{\AC@uppercase@firsttrue\acf{#1}\AC@uppercase@firstfalse}
\def\Ac#1{\AC@uppercase@firsttrue\ac{#1}\AC@uppercase@firstfalse}
\def\Acs#1{\AC@uppercase@firsttrue\acs{#1}\AC@uppercase@firstfalse}
\robustify\Aclp
\robustify\Acfp
\robustify\Acp
\robustify\Acsp
\robustify\Acl
\robustify\Acf
\robustify\Ac
\robustify\Acs
\makeatother

\makeatletter
\def\underbracex#1#2{\mathop{\vtop{\m@th\ialign{##\crcr
				$\hfil\displaystyle{#2}\hfil$\crcr
				\noalign{\kern3\p@\nointerlineskip}%
				#1\crcr\noalign{\kern3\p@}}}}\limits}

\def\upbracefilla{$\m@th \setbox\z@\hbox{$\braceld$}%
	\bracelu\leaders\vrule \@height\ht\z@ \@depth\z@\hfill 
	\leaders\vrule \@height\ht\z@ \@depth\z@\hfill\bracerd
	\braceld\leaders\vrule \@height\ht\z@ \@depth\z@\hfill
	\kern\p@\vrule \@width\p@\kern\p@\vrule \@width\p@\kern\p@\vrule \@width\p@ 
	$}

\def\upbracefilld{$\m@th \setbox\z@\hbox{$\braceld$}%
	\vrule \@width\p@\kern\p@\vrule \@width\p@\kern\p@\vrule \@width\p@\kern\p@
	\leaders\vrule \@height\ht\z@ \@depth\z@\hfill\braceru$}

\makeatother


%% file: tikzStyles.tex
\def\colorNoncohVA{IEEEred}         
\def\colorNoncoh{IEEEblue}          
\def\colorCoh{IEEEgreen}            

\newcommand{\markRepeat}{15}
\newcommand{\markSize}{1}
\newcommand{\LWbound}{1.0pt}
\newcommand{\LWestimates}{0.5pt}
\def\colorSnapshot{IEEEturquoise}        
\def\markSnapshot{*}
\def\colorNC{IEEEblue}        
\def\markNC{square*}
\def\colorC{IEEElightgreen}        
\def\markC{triangle*}
\def\colorCP{IEEEred}        
\def\markCP{diamond*}
\def\colorNCtypeII{IEEElightblue}        
\def\markNCtypeII{pentagon*}
\def\colorNCtypeIIS{IEEEgreen}        
\def\markNCtypeIIS{pentagon}
\def\colorCtypeII{IEEEyellow}        
\def\markCtypeII{Mercedes star flipped}

\tikzstyle{LineCohNC} = [color=gray, 
            line cap = round, 
            line join=round, 
            line width=\LWestimates,
            mark=Mercedes star,
            mark repeat=\markRepeat, 
            mark size=\markSize,
            mark options={solid, line width=0.3pt,fill=gray}]

\tikzstyle{LineNoncohVA} = [        
            color=\colorNoncohVA, 
            line cap = round, 
            line join=round, 
            line width=\LWestimates,
            mark=*, 
            mark repeat=\markRepeat, 
            mark size=0.7\markSize,
            mark options={solid, line width=0.3pt,fill=\colorNoncohVA}]

\tikzstyle{LineNoncoh} = [color=\colorNoncoh, 
            line cap = round, 
            line join=round, 
            line width=\LWestimates,
            mark=triangle*, 
            mark repeat=\markRepeat, 
            mark size=0.7\markSize,
            mark options={solid, line width=0.3pt,fill=\colorNoncoh}]

\tikzstyle{LineCoh} = [color=\colorCoh, 
            line cap = round, 
            line join=round, 
            line width=\LWestimates,
            mark=diamond*, 
            mark repeat=\markRepeat, 
            mark size=\markSize,
            mark options={solid, line width=0.3pt,fill=\colorCoh}]

\tikzstyle{LineCohNC} = [color=gray, 
            line cap = round, 
            line join=round, 
            line width=\LWestimates,
            mark=Mercedes star,
            mark repeat=\markRepeat, 
            mark size=\markSize,
            mark options={solid, line width=0.3pt,fill=gray}]


%% file: abbr.tex
\newacronym{2d}{2D}{two-dimensional}
\newacronym{3d}{3D}{three-dimensional}
\newacronym{aoa}{AoA}{angle of arrival}
\newacronym{aod}{AoD}{angle of departure}
\newacronym{awgn}{AWGN}{additive white Gaussian noise}
\newacronym{ap}{AP}{access point}
\newacronym{bp}{BP}{belief propagation}
\newacronym{cdf}{CDF}{cumulative distribution function}
\newacronym{csi}{CSI}{channel state information}
\newacronym{crlb}{CRLB}{Cram\'er--Rao lower bound}
\newacronym{cpu}{CPU}{central processing unit}
\newacronym{pcrlb}{PCRLB}{posterior CRLB}
\newacronym{dm}{DM}{diffuse multipath}
\newacronym{dmc}{DMC}{diffuse multipath component}
\newacronym{efim}{EFIM}{equivalent FIM}
\newacronym{fg}{FG}{factor graph}
\newacronym[longplural={Fisher information matrices}]{fim}{FIM}{Fisher information matrix}
\newacronym{gpu}{GPU}{graphics processing unit}
\newacronym{iot}{IoT}{Internet of Things}
\newacronym{isac}{ISAC}{integrated sensing and communications}
\newacronym{los}{LoS}{line-of-sight}
\newacronym{mc}{MC}{Monte Carlo}
\newacronym{meb}{MEB}{mapping error bound}
\newacronym{mimo}{MIMO}{multiple-input multiple-output}
\newacronym{miso}{MISO}{multiple-input single-output}
\newacronym{ml}{ML}{maximum likelihood}
\newacronym{dmimo}{D-MIMO}{distributed MIMO}
\newacronym{xlmimo}{XL-MIMO}{extremely large-scale multiple-input multiple-output}
\newacronym{mmse}{MMSE}{minimum mean square error}
\newacronym{mmwave}{mmWave}{millimeter wave}
\newacronym{mpc}{MPC}{multipath component}
\newacronym{mpslam}{MP-SLAM}{multipath-based simultaneous localization and mapping}
\newacronym{mrt}{MRT}{maximum ratio transmission}
\newacronym{mse}{MSE}{mean squared error}
\newacronym{mt}{MT}{mobile terminal}
\newacronym{mva}{MVA}{master virtual anchor}
\newacronym{nlos}{NLoS}{non-LoS}
\newacronym{ncv}{NCV}{nearly constant velocity}
\newacronym{oeb}{OEB}{orientation error bound}
\newacronym{olos}{OLoS}{obstructed LoS}
\newacronym{ota}{OTA}{over-the-air}
\newacronym{pa}{PA}{physical anchor}
\newacronym{bs}{BS}{base station}
\newacronym{peb}{PEB}{position error bound}
\newacronym{pdf}{PDF}{probability density function}
\newacronym{pm}{PM}{physical mobile}
\newacronym{pmf}{PMF}{probability mass function}
\newacronym{pf}{PF}{potential feature}
\newacronym{pg}{PG}{path gain}
\newacronym{pl}{PL}{path loss}
\newacronym{pr}{PBR}{particle-based representation}
\newacronym{ppr}{PR}{potential ray}
\newacronym{prop}{PROP}{proposed algorithm}
\newacronym{rcs}{RCS}{radar cross section}
\newacronym{ref}{REF}{reference algorithm}
\newacronym{rf}{RF}{radio frequency}
\newacronym{rfid}{RFID}{radio frequency identification}
\newacronym{rmse}{RMSE}{root mean square error}
\newacronym{roi}{ROI}{region of interest}
\newacronym{rssi}{RSSI}{received signal strength indicator}
\newacronym{rv}{RV}{random variable}
\newacronym{s-parameter}{S-parameter}{scattering parameter}
\newacronym{sfv}{SFV}{surface feature vector}
\newacronym{siso}{SISO}{single-input single-output}
\newacronym{sir}{SIR}{sequential importance resampling}
\newacronym{sis}{SIS}{sequential importance sampling}
\newacronym{slam}{SLAM}{simultaneous localization and mapping}
\newacronym{smc}{SMC}{specular multipath component}
\newacronym{snr}{SNR}{signal-to-noise ratio}
\newacronym{spa}{SPA}{sum-product algorithm}
\newacronym{tdoa}{TDOA}{time-difference-of-arrival}
\newacronym{toa}{TOA}{time-of-arrival}
\newacronym{tosm}{TOSM}{through – open – short – match}
\newacronym{ue}{UE}{user equipment}
\newacronym{ura}{URA}{uniform rectangular array}
\newacronym{uwb}{UWB}{ultrawideband}
\newacronym{va}{VA}{virtual anchor}
\newacronym{vm}{VM}{virtual mobile}
\newacronym{vna}{VNA}{vector network analyzer}
\newacronym{wb}{WB}{wideband}
\newacronym{wpt}{WPT}{wireless power transfer}
\newacronym{xets}{XETS}{cross exponentially tapered slot}

\newacronym{nzm}{NZM}{nonzero-mean}
\newacronym{zm}{ZM}{zero-mean}

\newacronym{sb}{SB}{single-bounce}
\newacronym{db}{DB}{double-bounce}

\newacronym{nc}{NC}{noncoherent}
\newacronym{c}{C}{coherent}
\newacronym{cp}{CP}{carrier-phase-based}

%% file: math-notation.tex
\newcommand{\Nz}[0]{{N}_{\scriptscriptstyle \mathrm{z}}}             
\newcommand{\Nfrequency}[0]{{N}_{\scriptscriptstyle \mathrm{f}}}     
\newcommand{\Nantennas}[0]{{M}}                                      
\newcommand{\Nantennasy}[0]{{M}_{\scriptscriptstyle \mathrm{y}}}     
\newcommand{\Nantennasz}[0]{{M}_{\scriptscriptstyle \mathrm{z}}}     
\newcommand{\setAnchors}[0]{\mathcal{J}}                                            
\newcommand{\RVstate}[1]{\RV{x}_{\scriptscriptstyle #1}}        
\newcommand{\state}[1]{\V{x}_{\scriptscriptstyle #1}}           
\newcommand{\stateHat}[1]{\widehat{\V{x}}_{\scriptscriptstyle #1}}  
\newcommand{\RVpos}[1]{\RV{p}_{\!\scriptscriptstyle #1}}        
\newcommand{\pos}[1]{\V{p}_{\!\scriptscriptstyle #1}}           
\newcommand{\posHat}[1]{\widehat{\V{p}}_{\!\scriptscriptstyle #1}^{\text{\tiny MMSE}}}           
\newcommand{\RVvel}[1]{\RV{v}_{\!\scriptscriptstyle #1}}        
\newcommand{\vel}[1]{\V{v}_{\!\scriptscriptstyle #1}}        
\newcommand{\RVPFphi}[2]{\unslant[-.25]{\V{\phi}}_{\!\scriptscriptstyle #2}}        
\newcommand{\PFphiHat}[3]{\widehat{\V{\phi}}_{\!\scriptscriptstyle #2}}                  
\newcommand{\PFphi}[3]{\V{\phi}_{\!\scriptscriptstyle #2}}  

\newcommand{\amplitude}[3]{\rho_{\!\scriptscriptstyle #2}^{\scriptscriptstyle (#3)}}         
\newcommand{\RVamplitude}[3]{\unslant[-.25]{\rho}_{\!\scriptscriptstyle #2}^{\scriptscriptstyle (#3)}}         

\newcommand{\amplitudeHat}[2]{\widehat{\rho}_{\!\scriptscriptstyle #1}^{\,\text{\tiny{ML}}\scriptscriptstyle (#2)}}         

\newcommand{\etaHat}[2]{\widehat{\eta}_{\scriptscriptstyle #1}^{\,\text{\tiny{ML}}\scriptscriptstyle (#2)}}         
\newcommand{\RVamplitudeBar}[1]{\unslant[-.25]{\rho}_{\!\scriptscriptstyle #1}}         
\newcommand{\amplitudeBar}[1]{\rho_{\!\scriptscriptstyle #1}}         
\newcommand{\amplitudeBarHat}[1]{\widehat{\rho}_{\!\scriptscriptstyle #1}^{\,\text{\tiny{C}}}}         
\newcommand{\etaHatBar}[1]{\widehat{\eta}_{\scriptscriptstyle #1}^{\,\text{\tiny{C}}}}         

\newcommand{\amplitudeBarCP}[1]{a_{\!\scriptscriptstyle #1}}         
\newcommand{\amplitudeBarHatCP}[1]{\widehat{a}_{\!\scriptscriptstyle #1}^{\,\text{\tiny{CP}}}}         
\newcommand{\etaHatBarCP}[1]{\widehat{\eta}_{\scriptscriptstyle #1}^{\,\text{\tiny{CP}}}}

\newcommand{\PFmu}[3]{\mu_{\!\scriptscriptstyle #2}}                              
\newcommand{\RVPFmu}[3]{\unslant[-.25]{\mu}_{\!\scriptscriptstyle #2}}           
\newcommand{\PFgamma}[3]{\gamma_{\!\scriptscriptstyle #2}}                       
\newcommand{\RVPFgamma}[3]{\unslant[-.25]{\gamma}_{\!\scriptscriptstyle #2}}     
\newcommand{\observation}[2]{\V{z}_{\scriptscriptstyle #1}^{\scriptscriptstyle(#2)}}         
\newcommand{\RVobservation}[2]{\RV{z}_{\scriptscriptstyle #1}^{\scriptscriptstyle(#2)}}      
\newcommand{\observationn}[1]{\V{z}_{\scriptscriptstyle #1}}                    
\newcommand{\RVobservationn}[1]{\RV{z}_{\scriptscriptstyle #1}}                 
\newcommand{\RVetan}[1]{\unslant[-.25]{\eta}_{\scriptscriptstyle #1}}           
\newcommand{\etan}[1]{{\eta}_{\scriptscriptstyle #1}}                           
\newcommand{\RVetann}[1]{\unslant[-.25]{\bm{\eta}}_{\scriptscriptstyle #1}}                           
\newcommand{\etann}[1]{\bm{\eta}_{\scriptscriptstyle #1}}                       
\newcommand{\etanHat}[1]{\widehat{\eta}_{\scriptscriptstyle #1}}                
\newcommand{\etaXi}[1]{{\eta}_{\scriptscriptstyle\xi,#1}^{\scriptscriptstyle (j)}}                       
\newcommand{\RVnoise}[2]{\RV{n}_{\scriptscriptstyle #1}^{\scriptscriptstyle(j)}}   
\newcommand{\noise}[2]{\V{n}_{\scriptscriptstyle #1}^{\scriptscriptstyle(j)}}   
\newcommand{\elevation}{\theta}                                     
\newcommand{\azimuth}{\vartheta}                                    
\newcommand{\delay}{\tau}     

\newcommand{\steerVec}[1]{\V{\psi}^{\scriptscriptstyle (#1)}}      
\newcommand{\steerVecx}[3]{\V{\psi}_{\!\scriptscriptstyle #2}^{\scriptscriptstyle (#3)}} 
\newcommand{\steerVecBar}[0]{\overline{\V{\psi}}}      
\newcommand{\steerVecxBar}[1]{\overline{\V{\psi}}_{\!\scriptscriptstyle #1}} 

\newcommand{\RVelVecx}[2]{{\unslant[-.25]{\elevation}}_{\scriptscriptstyle #1}^{\!\scriptscriptstyle (#2)}}           
\newcommand{\RVazVecx}[2]{{\unslant[-.25]{\azimuth}}_{\scriptscriptstyle #1}^{\!\scriptscriptstyle (#2)}}             
\newcommand{\RVdelayVecx}[2]{{\unslant[-.25]{\tau}}_{\scriptscriptstyle #1}^{\!\scriptscriptstyle (#2)}}                               

\newcommand{\range}[3]{\bm{r}_{\scriptscriptstyle#2}^{\scriptscriptstyle (#3)}} 
\newcommand{\ranget}[3]{\widetilde{\bm{r}}_{\scriptscriptstyle#2}^{\scriptscriptstyle (#3)}} 
\newcommand{\rangep}[3]{\acute{\bm{r}}_{\scriptscriptstyle#2}^{\scriptscriptstyle (#3)}} 
\newcommand{\rx}[0]{\acute{r}_{\scriptscriptstyle x}}
\newcommand{\ry}[0]{\acute{r}_{\scriptscriptstyle y}}
\newcommand{\rz}[0]{\acute{r}_{\scriptscriptstyle z}}
\newcommand{\pwk}[0]{ \bm{p}^\text{\tiny{w}}_{\scriptscriptstyle k} }           
\newcommand{\nw}[0]{ {\bm{n}^\text{\tiny{w}}_{\scriptscriptstyle k}} }          
\newcommand{\fc}{\mathrm{f}_{\text{\tiny c}}}                       
\newcommand{\lightspeed}{\mathrm{c}}                                
\newcommand{\posPA}[1]{ \mathbf{p}_{\scriptscriptstyle\mathrm{pa}}^{\scriptscriptstyle(#1)}  }           
\newcommand{\rotM}[1]{ \mathbf{M}_{\scriptscriptstyle#1}  }         
\newcommand{\origin}[0]{ \mathbf{0}  }                              

\newcommand{\psfvDet}[1]{\mathbf{p}_{\scriptscriptstyle\mathrm{sfv},\scriptscriptstyle#1} }           
\newcommand{\Rknj}[3]{\bm{R}_{\scriptscriptstyle #1,#2}^{\scriptscriptstyle (#3)}}    

\newcommand{\transitionmatrix}{{\mathbf{\Phi}}}                         
\newcommand{\processNoiseCov}{{\mathbf{Q}}}                             

\newcommand{\belief}[0]{\widetilde{f}}                                          
\newcommand{\Mbeta}[0]{\beta_{\scriptscriptstyle }}                             
\newcommand{\Miota}[1]{\iota_{\scriptscriptstyle }^{\scriptscriptstyle (#1)}}   
\newcommand{\Mkappa}[2]{\kappa^{\scriptscriptstyle (#2)}} 
\newcommand{\Mxi}[0]{\xi}                                 
\newcommand{\Mnu}[1]{\nu^{\scriptscriptstyle (#1)}}       
\newcommand{\muiota}[0]{\bm{\mu}_{\scriptscriptstyle \iota,n}^{\scriptscriptstyle (j)}}      
\newcommand{\Kiota}[0]{\bm{K}_{\scriptscriptstyle \iota,n}^{\scriptscriptstyle (j)}}      
\newcommand{\Ciota}[0]{\bm{C}_{\scriptscriptstyle \iota,n}^{\scriptscriptstyle (j)}}         
\newcommand{\mukappa}[0]{\bm{\mu}_{\scriptscriptstyle \kappa,n}^{\scriptscriptstyle (j)}}      
\newcommand{\Kkappa}[0]{\bm{K}_{\scriptscriptstyle \kappa,n}^{\scriptscriptstyle (j)}}      
\newcommand{\Ckappa}[0]{\bm{C}_{\scriptscriptstyle \kappa,n}^{\scriptscriptstyle (j)}}         
\newcommand{\munu}[0]{\bm{\mu}_{\scriptscriptstyle \nu,n}^{\scriptscriptstyle (j)}}          
\newcommand{\Knu}[0]{\bm{K}_{\scriptscriptstyle \nu,n}^{\scriptscriptstyle (j)}}      
\newcommand{\Cnu}[0]{\bm{C}_{\scriptscriptstyle \nu,n}^{\scriptscriptstyle (j)}}             
\newcommand{\musnj}[1]{\bm{\mu}_{\scriptscriptstyle #1,n}^{\scriptscriptstyle (j)}}       
\newcommand{\Csnj}[1]{\bm{C}_{\scriptscriptstyle #1,n}^{\scriptscriptstyle (j)}}        
\newcommand{\Csnjp}[1]{\bm{C}_{\scriptscriptstyle #1,n}^{\scriptscriptstyle (j)}}        
\newcommand{\weight}[2]{w_{\scriptscriptstyle#1}^{\scriptscriptstyle(#2)}}
\newcommand{\weightt}[2]{\widetilde{w}_{\scriptscriptstyle#1}^{\scriptscriptstyle(#2)}}
\newcommand{\particle}[2]{#1^{\scriptscriptstyle(#2)}}
\newcommand{\normConst}[2]{C_{\scriptscriptstyle#1}^{\scriptscriptstyle#2}} 
\usepackage{dsfont} 

\newcommand{\dimLocal}[0]{ \mathrm{D}_\text{\tiny{ch}} }            
\newcommand{\dimGlobal}[0]{ \mathrm{D}_\text{\tiny{g}} }            
\newcommand{\dimPhase}[0]{ \mathrm{D}_{\scriptscriptstyle \varphi} }            
\newcommand{\parvec}{\theta}                                     
\newcommand{\etaglobal}[1]{\V{\parvec}^{\text{\tiny{g}}}_{\!\scriptscriptstyle#1}}            
\newcommand{\RVetaglobal}[1]{\unslant[-.25]{\V{\parvec}}^{\text{\!\tiny{g}}}_{\!\scriptscriptstyle#1}}            
\newcommand{\RVetaglobalSmall}[1]{\unslant[-.25]{\scriptstyle\V{\parvec}}^{\text{\!\tiny{g}}}_{\!\scriptscriptstyle#1}}            
\newcommand{\RVetaglobalHat}[1]{\hat{\unslant[-.25]{\bm{\parvec}}}^{\text{\!\tiny{g}}}_{\!\scriptscriptstyle#1}}            
\newcommand{\etach}[2]{\V{\parvec}_{\!\text{\tiny{ch}}\scriptscriptstyle,#1}^{\scriptscriptstyle(#2)}}            
\newcommand{\RVetach}[2]{\unslant[-.25]{\V{\parvec}}_{\text{\tiny{ch}}\scriptscriptstyle,#1}^{\!\scriptscriptstyle(#2)}}           
\newcommand{\RVphasevec}[2]{\unslant[-.25]{\bm{\varphi}}_{\!\scriptscriptstyle#1}^{\!\scriptscriptstyle#2}}   
\newcommand{\RVmodulivec}[2]{\RV{a}_{\scriptscriptstyle#1}^{\scriptscriptstyle#2}}   

\newcommand{\PCRLB}[0]{\CRLB_{\scriptscriptstyle n|n}}                 
\newcommand{\classicCRLB}[0]{\CRLB_{\scriptscriptstyle n}^{\text{\tiny F}}}             
\newcommand{\FIMstep}[2]{\FIM_{\scriptscriptstyle #1|#2}}     
\newcommand{\FIM}[0]{\V{I}}                              
\newcommand{\CRLB}[0]{\bm{P}}                      
\newcommand{\FIMch}[2]{\FIM_{\scriptscriptstyle \text{\tiny ch},#1}^{\scriptscriptstyle(#2)}}   
\newcommand{\FIMglobal}[1]{\FIM_{\scriptscriptstyle #1}^{\text{\tiny g}}}   
\newcommand{\FIMclassic}[1]{\FIM_{\scriptscriptstyle #1}^{\text{\tiny F}}}   

\newcommand{\jacobian}{ \boldsymbol{J} }
\newcommand{\jacobgn}[1]{ \jacobian_{\!\scriptscriptstyle n}^{\scriptscriptstyle(#1)}}             
\newcommand{\jacobMVAblock}[2]{ \jacobian_{\scriptscriptstyle k,n}^{\scriptscriptstyle \text{\tiny sb},j}}   
\newcommand{\jacobMVAsb}[1]{ \jacobian_{\scriptscriptstyle n,j}^{\scriptscriptstyle \text{\tiny sb},s}}   

\newcommand{\PEB}[0]{\sigma_{\bm{p}_n}}                                      

\newcommand{\complexamplitude}[1]{{\widetilde{{\varrho}}_{\scriptscriptstyle n}^{\scriptscriptstyle\,(j)}}}
\newcommand{\signalatom}[0]{\V{\uppsi}}                                 
\newcommand{\signalatomnl}[3]{\signalatom_{\!\scriptscriptstyle #2}^{\scriptscriptstyle (#3)}} 

\newcommand{\interval}[1]{\hat{I}_{\scriptscriptstyle#1}}				

%% file: figures/LHFs/LHFs.tex
\pgfplotsset{every axis/.append style={
  label style={font=\footnotesize},
  legend style={font=\scriptsize},
  tick label style={font=\footnotesize},
}}

\definecolor{mycolor1}{rgb}{0.00000,0.51765,0.23922}%
\definecolor{mycolor2}{rgb}{0.72941,0.04706,0.18431}%
\definecolor{mycolor3}{rgb}{0.49400,0.18400,0.55600}%
\begin{tikzpicture}

\begin{axis}[%
name=boundaryLeft,  
width=0.878\figurewidth,
height=0.625\figurewidth,
at={(0\figurewidth,0\figurewidth)},
axis line style = thick,	
line cap = round,	
line join = round,	
scale only axis,
point meta min=-5,
point meta max=0,
axis on top,
xmin=-3.5,
xmax=4.487327631,
ylabel={$y$ in \SI{}{\metre}},
ylabel style={yshift=-1.5mm},
ymin=-1.1,
ymax=4.481919232,
tick align=inside,
xtick = {-3,-2,...,4},
ytick = {-1,0,...,4},
    minor x tick num = 1, 
    minor y tick num = 1, 
    xticklabel=\empty,
]

\addplot [forget plot] graphics [xmin=-3.50535343675, xmax=4.50338794125, ymin=-1.20535343675, ymax=4.50141013875] {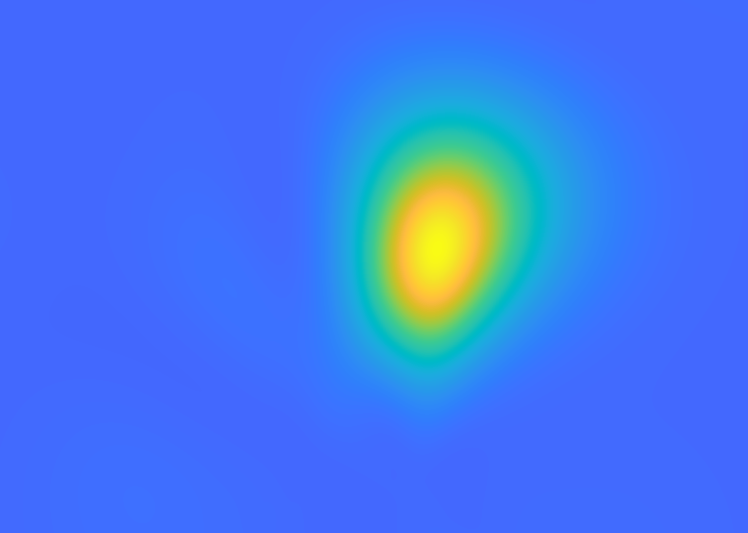};

\input{\datapath/trajectory.tex}

\addplot [color=blue, line width=0.5pt, only marks, mark size=0.9pt, mark=o, mark options={solid, IEEEred}, forget plot]
  table[row sep=crcr]{%
1.18592964824121	1.8675879396984\\
};

\input{\datapath/arrays2D.tex}
\end{axis}

\node[above left, align=right,font={\footnotesize},fill=white,
opacity=0.9,
inner sep=0.75mm, 
xshift=0.65mm, 
yshift=-0.65mm, 
draw, 
line width = 0.8pt,
minimum width = 3mm,
minimum height = 3mm,
] at %
(boundaryLeft.south east)
{a)};


\node[
    below left,
    align=right,
    shape=rectangle,
    draw=black,
    fill=gray!20!white,
    rounded corners=0.5pt,
    inner sep=1pt,
    minimum size=1.0em,
    text=black
] at (boundaryLeft.north east)
{\adjustbox{max height=0.65em}{NC}};

\begin{axis}[%
name=boundaryCenter,  
width=0.878\figurewidth,
height=0.625\figurewidth,
at={(0.9\figurewidth,0\figurewidth)},
axis line style = thick,	
line cap = round,	
line join = round,	
scale only axis,
point meta min=-5,
point meta max=0,
axis on top,
xmin=-3.5,
xmax=4.487327631,
xlabel style={font=\color{white!15!black}},
xlabel=\empty,
yticklabel=\empty,                      
ymin=-1.1,
ymax=4.481919232,
tick align=inside,
xtick = {-3,-2,...,4},
ytick = {-1,0,...,4},
    minor x tick num = 1, 
    minor y tick num = 1, 
    xticklabel=\empty,
]

\addplot [forget plot] graphics [xmin=-3.50535343675, xmax=4.50338794125, ymin=-1.20535343675, ymax=4.50141013875] {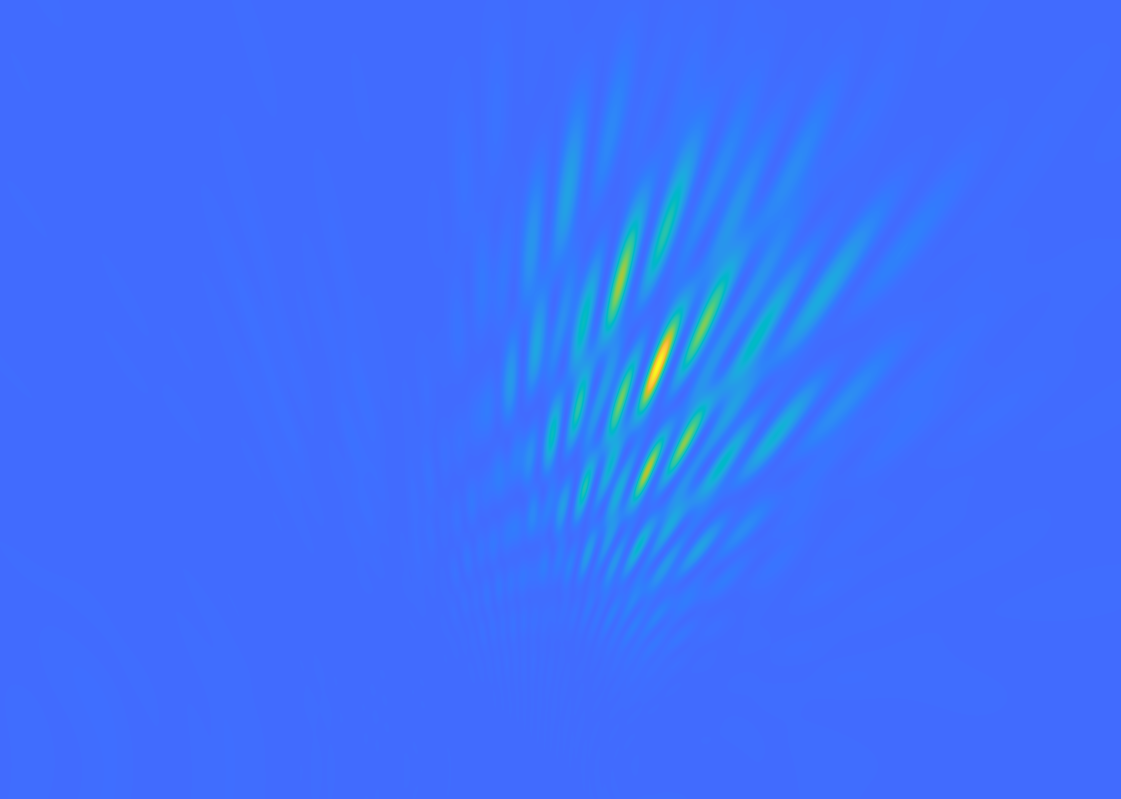};

\input{\datapath/trajectory.tex}

\addplot [color=blue, line width=0.5pt, only marks, mark size=0.9pt, mark=o, mark options={solid, IEEEred}, forget plot]
  table[row sep=crcr]{%
1.18592964824121	1.8675879396984\\
};

\input{\datapath/arrays2D.tex}

\end{axis}

\node[above left, align=right,font={\footnotesize},fill=white,
opacity=0.9,
inner sep=0.75mm, 
xshift=0.65mm, 
yshift=-0.65mm, 
draw, 
line width = 0.8pt,
minimum width = 3mm,
minimum height = 3mm,
] at %
(boundaryCenter.south east)
{b)};


\node[
    below left,
    align=right,
    shape=rectangle,
    draw=black,
    fill=gray!20!white,
    rounded corners=0.5pt,
    inner sep=1pt,
    minimum size=1.0em,
    text=black
] at (boundaryCenter.north east)
{\adjustbox{max height=0.65em}{C}};

\begin{axis}[%
name=boundaryRight,  
width=0.878\figurewidth,
height=0.625\figurewidth,
at={(1.8\figurewidth,0\figurewidth)},
axis line style = thick,	
line cap = round,	
line join = round,	
scale only axis,
point meta min=-7,
point meta max=0,
axis on top,
xmin=-3.5,
xmax=4.487327631,
yticklabel=\empty,                      
ymin=-1.1,
ymax=4.481919232,
tick align=inside,
xtick = {-3,-2,...,4},
ytick = {-1,0,...,4},
    minor x tick num = 1, 
    minor y tick num = 1, 
    xticklabel=\empty,
colormap={mymap}{[1pt] rgb(0pt)=(0.2422,0.1504,0.6603); rgb(1pt)=(0.2444,0.1534,0.6728); rgb(2pt)=(0.2464,0.1569,0.6847); rgb(3pt)=(0.2484,0.1607,0.6961); rgb(4pt)=(0.2503,0.1648,0.7071); rgb(5pt)=(0.2522,0.1689,0.7179); rgb(6pt)=(0.254,0.1732,0.7286); rgb(7pt)=(0.2558,0.1773,0.7393); rgb(8pt)=(0.2576,0.1814,0.7501); rgb(9pt)=(0.2594,0.1854,0.761); rgb(11pt)=(0.2628,0.1932,0.7828); rgb(12pt)=(0.2645,0.1972,0.7937); rgb(13pt)=(0.2661,0.2011,0.8043); rgb(14pt)=(0.2676,0.2052,0.8148); rgb(15pt)=(0.2691,0.2094,0.8249); rgb(16pt)=(0.2704,0.2138,0.8346); rgb(17pt)=(0.2717,0.2184,0.8439); rgb(18pt)=(0.2729,0.2231,0.8528); rgb(19pt)=(0.274,0.228,0.8612); rgb(20pt)=(0.2749,0.233,0.8692); rgb(21pt)=(0.2758,0.2382,0.8767); rgb(22pt)=(0.2766,0.2435,0.884); rgb(23pt)=(0.2774,0.2489,0.8908); rgb(24pt)=(0.2781,0.2543,0.8973); rgb(25pt)=(0.2788,0.2598,0.9035); rgb(26pt)=(0.2794,0.2653,0.9094); rgb(27pt)=(0.2798,0.2708,0.915); rgb(28pt)=(0.2802,0.2764,0.9204); rgb(29pt)=(0.2806,0.2819,0.9255); rgb(30pt)=(0.2809,0.2875,0.9305); rgb(31pt)=(0.2811,0.293,0.9352); rgb(32pt)=(0.2813,0.2985,0.9397); rgb(33pt)=(0.2814,0.304,0.9441); rgb(34pt)=(0.2814,0.3095,0.9483); rgb(35pt)=(0.2813,0.315,0.9524); rgb(36pt)=(0.2811,0.3204,0.9563); rgb(37pt)=(0.2809,0.3259,0.96); rgb(38pt)=(0.2807,0.3313,0.9636); rgb(39pt)=(0.2803,0.3367,0.967); rgb(40pt)=(0.2798,0.3421,0.9702); rgb(41pt)=(0.2791,0.3475,0.9733); rgb(42pt)=(0.2784,0.3529,0.9763); rgb(43pt)=(0.2776,0.3583,0.9791); rgb(44pt)=(0.2766,0.3638,0.9817); rgb(45pt)=(0.2754,0.3693,0.984); rgb(46pt)=(0.2741,0.3748,0.9862); rgb(47pt)=(0.2726,0.3804,0.9881); rgb(48pt)=(0.271,0.386,0.9898); rgb(49pt)=(0.2691,0.3916,0.9912); rgb(50pt)=(0.267,0.3973,0.9924); rgb(51pt)=(0.2647,0.403,0.9935); rgb(52pt)=(0.2621,0.4088,0.9946); rgb(53pt)=(0.2591,0.4145,0.9955); rgb(54pt)=(0.2556,0.4203,0.9965); rgb(55pt)=(0.2517,0.4261,0.9974); rgb(56pt)=(0.2473,0.4319,0.9983); rgb(57pt)=(0.2424,0.4378,0.9991); rgb(58pt)=(0.2369,0.4437,0.9996); rgb(59pt)=(0.2311,0.4497,0.9995); rgb(60pt)=(0.225,0.4559,0.9985); rgb(61pt)=(0.2189,0.462,0.9968); rgb(62pt)=(0.2128,0.4682,0.9948); rgb(63pt)=(0.2066,0.4743,0.9926); rgb(64pt)=(0.2006,0.4803,0.9906); rgb(65pt)=(0.195,0.4861,0.9887); rgb(66pt)=(0.1903,0.4919,0.9867); rgb(67pt)=(0.1869,0.4975,0.9844); rgb(68pt)=(0.1847,0.503,0.9819); rgb(69pt)=(0.1831,0.5084,0.9793); rgb(70pt)=(0.1818,0.5138,0.9766); rgb(71pt)=(0.1806,0.5191,0.9738); rgb(72pt)=(0.1795,0.5244,0.9709); rgb(73pt)=(0.1785,0.5296,0.9677); rgb(74pt)=(0.1778,0.5349,0.9641); rgb(75pt)=(0.1773,0.5401,0.9602); rgb(76pt)=(0.1768,0.5452,0.956); rgb(77pt)=(0.1764,0.5504,0.9516); rgb(78pt)=(0.1755,0.5554,0.9473); rgb(79pt)=(0.174,0.5605,0.9432); rgb(80pt)=(0.1716,0.5655,0.9393); rgb(81pt)=(0.1686,0.5705,0.9357); rgb(82pt)=(0.1649,0.5755,0.9323); rgb(83pt)=(0.161,0.5805,0.9289); rgb(84pt)=(0.1573,0.5854,0.9254); rgb(85pt)=(0.154,0.5902,0.9218); rgb(86pt)=(0.1513,0.595,0.9182); rgb(87pt)=(0.1492,0.5997,0.9147); rgb(88pt)=(0.1475,0.6043,0.9113); rgb(89pt)=(0.1461,0.6089,0.908); rgb(90pt)=(0.1446,0.6135,0.905); rgb(91pt)=(0.1429,0.618,0.9022); rgb(92pt)=(0.1408,0.6226,0.8998); rgb(93pt)=(0.1383,0.6272,0.8975); rgb(94pt)=(0.1354,0.6317,0.8953); rgb(95pt)=(0.1321,0.6363,0.8932); rgb(96pt)=(0.1288,0.6408,0.891); rgb(97pt)=(0.1253,0.6453,0.8887); rgb(98pt)=(0.1219,0.6497,0.8862); rgb(99pt)=(0.1185,0.6541,0.8834); rgb(100pt)=(0.1152,0.6584,0.8804); rgb(101pt)=(0.1119,0.6627,0.877); rgb(102pt)=(0.1085,0.6669,0.8734); rgb(103pt)=(0.1048,0.671,0.8695); rgb(104pt)=(0.1009,0.675,0.8653); rgb(105pt)=(0.0964,0.6789,0.8609); rgb(106pt)=(0.0914,0.6828,0.8562); rgb(107pt)=(0.0855,0.6865,0.8513); rgb(108pt)=(0.0789,0.6902,0.8462); rgb(109pt)=(0.0713,0.6938,0.8409); rgb(110pt)=(0.0628,0.6972,0.8355); rgb(111pt)=(0.0535,0.7006,0.8299); rgb(112pt)=(0.0433,0.7039,0.8242); rgb(113pt)=(0.0328,0.7071,0.8183); rgb(114pt)=(0.0234,0.7103,0.8124); rgb(115pt)=(0.0155,0.7133,0.8064); rgb(116pt)=(0.0091,0.7163,0.8003); rgb(117pt)=(0.0046,0.7192,0.7941); rgb(118pt)=(0.0019,0.722,0.7878); rgb(119pt)=(0.0009,0.7248,0.7815); rgb(120pt)=(0.0018,0.7275,0.7752); rgb(121pt)=(0.0046,0.7301,0.7688); rgb(122pt)=(0.0094,0.7327,0.7623); rgb(123pt)=(0.0162,0.7352,0.7558); rgb(124pt)=(0.0253,0.7376,0.7492); rgb(125pt)=(0.0369,0.74,0.7426); rgb(126pt)=(0.0504,0.7423,0.7359); rgb(127pt)=(0.0638,0.7446,0.7292); rgb(128pt)=(0.077,0.7468,0.7224); rgb(129pt)=(0.0899,0.7489,0.7156); rgb(130pt)=(0.1023,0.751,0.7088); rgb(131pt)=(0.1141,0.7531,0.7019); rgb(132pt)=(0.1252,0.7552,0.695); rgb(133pt)=(0.1354,0.7572,0.6881); rgb(134pt)=(0.1448,0.7593,0.6812); rgb(135pt)=(0.1532,0.7614,0.6741); rgb(136pt)=(0.1609,0.7635,0.6671); rgb(137pt)=(0.1678,0.7656,0.6599); rgb(138pt)=(0.1741,0.7678,0.6527); rgb(139pt)=(0.1799,0.7699,0.6454); rgb(140pt)=(0.1853,0.7721,0.6379); rgb(141pt)=(0.1905,0.7743,0.6303); rgb(142pt)=(0.1954,0.7765,0.6225); rgb(143pt)=(0.2003,0.7787,0.6146); rgb(144pt)=(0.2061,0.7808,0.6065); rgb(145pt)=(0.2118,0.7828,0.5983); rgb(146pt)=(0.2178,0.7849,0.5899); rgb(147pt)=(0.2244,0.7869,0.5813); rgb(148pt)=(0.2318,0.7887,0.5725); rgb(149pt)=(0.2401,0.7905,0.5636); rgb(150pt)=(0.2491,0.7922,0.5546); rgb(151pt)=(0.2589,0.7937,0.5454); rgb(152pt)=(0.2695,0.7951,0.536); rgb(153pt)=(0.2809,0.7964,0.5266); rgb(154pt)=(0.2929,0.7975,0.517); rgb(155pt)=(0.3052,0.7985,0.5074); rgb(156pt)=(0.3176,0.7994,0.4975); rgb(157pt)=(0.3301,0.8002,0.4876); rgb(158pt)=(0.3424,0.8009,0.4774); rgb(159pt)=(0.3548,0.8016,0.4669); rgb(160pt)=(0.3671,0.8021,0.4563); rgb(161pt)=(0.3795,0.8026,0.4454); rgb(162pt)=(0.3921,0.8029,0.4344); rgb(163pt)=(0.405,0.8031,0.4233); rgb(164pt)=(0.4184,0.803,0.4122); rgb(165pt)=(0.4322,0.8028,0.4013); rgb(166pt)=(0.4463,0.8024,0.3904); rgb(167pt)=(0.4608,0.8018,0.3797); rgb(168pt)=(0.4753,0.8011,0.3691); rgb(169pt)=(0.4899,0.8002,0.3586); rgb(170pt)=(0.5044,0.7993,0.348); rgb(171pt)=(0.5187,0.7982,0.3374); rgb(172pt)=(0.5329,0.797,0.3267); rgb(173pt)=(0.547,0.7957,0.3159); rgb(175pt)=(0.5748,0.7929,0.2941); rgb(176pt)=(0.5886,0.7913,0.2833); rgb(177pt)=(0.6024,0.7896,0.2726); rgb(178pt)=(0.6161,0.7878,0.2622); rgb(179pt)=(0.6297,0.7859,0.2521); rgb(180pt)=(0.6433,0.7839,0.2423); rgb(181pt)=(0.6567,0.7818,0.2329); rgb(182pt)=(0.6701,0.7796,0.2239); rgb(183pt)=(0.6833,0.7773,0.2155); rgb(184pt)=(0.6963,0.775,0.2075); rgb(185pt)=(0.7091,0.7727,0.1998); rgb(186pt)=(0.7218,0.7703,0.1924); rgb(187pt)=(0.7344,0.7679,0.1852); rgb(188pt)=(0.7468,0.7654,0.1782); rgb(189pt)=(0.759,0.7629,0.1717); rgb(190pt)=(0.771,0.7604,0.1658); rgb(191pt)=(0.7829,0.7579,0.1608); rgb(192pt)=(0.7945,0.7554,0.157); rgb(193pt)=(0.806,0.7529,0.1546); rgb(194pt)=(0.8172,0.7505,0.1535); rgb(195pt)=(0.8281,0.7481,0.1536); rgb(196pt)=(0.8389,0.7457,0.1546); rgb(197pt)=(0.8495,0.7435,0.1564); rgb(198pt)=(0.86,0.7413,0.1587); rgb(199pt)=(0.8703,0.7392,0.1615); rgb(200pt)=(0.8804,0.7372,0.165); rgb(201pt)=(0.8903,0.7353,0.1695); rgb(202pt)=(0.9,0.7336,0.1749); rgb(203pt)=(0.9093,0.7321,0.1815); rgb(204pt)=(0.9184,0.7308,0.189); rgb(205pt)=(0.9272,0.7298,0.1973); rgb(206pt)=(0.9357,0.729,0.2061); rgb(207pt)=(0.944,0.7285,0.2151); rgb(208pt)=(0.9523,0.7284,0.2237); rgb(209pt)=(0.9606,0.7285,0.2312); rgb(210pt)=(0.9689,0.7292,0.2373); rgb(211pt)=(0.977,0.7304,0.2418); rgb(212pt)=(0.9842,0.733,0.2446); rgb(213pt)=(0.99,0.7365,0.2429); rgb(214pt)=(0.9946,0.7407,0.2394); rgb(215pt)=(0.9966,0.7458,0.2351); rgb(216pt)=(0.9971,0.7513,0.2309); rgb(217pt)=(0.9972,0.7569,0.2267); rgb(218pt)=(0.9971,0.7626,0.2224); rgb(219pt)=(0.9969,0.7683,0.2181); rgb(220pt)=(0.9966,0.774,0.2138); rgb(221pt)=(0.9962,0.7798,0.2095); rgb(222pt)=(0.9957,0.7856,0.2053); rgb(223pt)=(0.9949,0.7915,0.2012); rgb(224pt)=(0.9938,0.7974,0.1974); rgb(225pt)=(0.9923,0.8034,0.1939); rgb(226pt)=(0.9906,0.8095,0.1906); rgb(227pt)=(0.9885,0.8156,0.1875); rgb(228pt)=(0.9861,0.8218,0.1846); rgb(229pt)=(0.9835,0.828,0.1817); rgb(230pt)=(0.9807,0.8342,0.1787); rgb(231pt)=(0.9778,0.8404,0.1757); rgb(232pt)=(0.9748,0.8467,0.1726); rgb(233pt)=(0.972,0.8529,0.1695); rgb(234pt)=(0.9694,0.8591,0.1665); rgb(235pt)=(0.9671,0.8654,0.1636); rgb(236pt)=(0.9651,0.8716,0.1608); rgb(237pt)=(0.9634,0.8778,0.1582); rgb(238pt)=(0.9619,0.884,0.1557); rgb(239pt)=(0.9608,0.8902,0.1532); rgb(240pt)=(0.9601,0.8963,0.1507); rgb(241pt)=(0.9596,0.9023,0.148); rgb(242pt)=(0.9595,0.9084,0.145); rgb(243pt)=(0.9597,0.9143,0.1418); rgb(244pt)=(0.9601,0.9203,0.1382); rgb(245pt)=(0.9608,0.9262,0.1344); rgb(246pt)=(0.9618,0.932,0.1304); rgb(247pt)=(0.9629,0.9379,0.1261); rgb(248pt)=(0.9642,0.9437,0.1216); rgb(249pt)=(0.9657,0.9494,0.1168); rgb(250pt)=(0.9674,0.9552,0.1116); rgb(251pt)=(0.9692,0.9609,0.1061); rgb(252pt)=(0.9711,0.9667,0.1001); rgb(253pt)=(0.973,0.9724,0.0938); rgb(254pt)=(0.9749,0.9782,0.0872); rgb(255pt)=(0.9769,0.9839,0.0805)},
colorbar,
colorbar style={
    ylabel={\scalebox{0.8}{$\log f(\observationn{n}|\pos{n}) - \max\big(\log f(\observationn{n}|\pos{n})\big)$}},
    axis line style = thick,	
    line cap = round,
    line join = round,
    width=0.3cm,%
    tick align=inside, %
    ytick={0,-1,-2,-3,-4,-5,-6},
    yticklabel style={/pgf/number format/fixed},
    title={$\,\times 10^2$},
    xshift = -0.2cm, %
    title style={font=\footnotesize, yshift=-1.5mm},
    },
]

\addplot [forget plot] graphics [xmin=-3.50535343675, xmax=4.50338794125, ymin=-1.20535343675, ymax=4.50141013875] {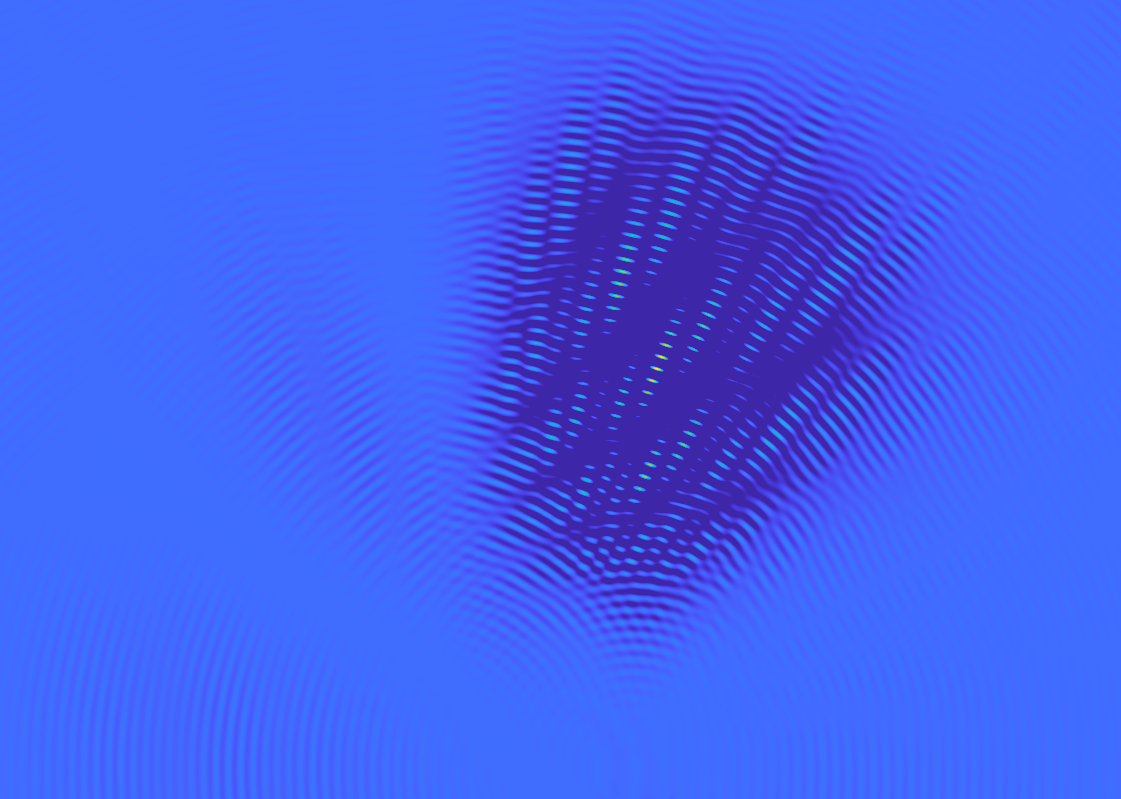};

\input{\datapath/trajectory.tex}

\addplot [color=blue, line width=0.5pt, only marks, mark size=0.9pt, mark=o, mark options={solid, IEEEred}, forget plot]
  table[row sep=crcr]{%
1.18592964824121	1.8675879396984\\
};

\input{\datapath/arrays2D.tex}

\end{axis}

\node[above left, align=right,font={\footnotesize},fill=white,
opacity=0.9,
inner sep=0.75mm, 
xshift=0.65mm, 
yshift=-0.65mm, 
draw, 
line width = 0.8pt,
minimum width = 3mm,
minimum height = 3mm,
] at %
(boundaryRight.south east)
{c)};

\node[
    below left,
    align=right,
    shape=rectangle,
    draw=black,
    fill=gray!20!white,
    rounded corners=0.5pt,
    inner sep=1pt,
    minimum size=1.0em,
    text=black
] at (boundaryRight.north east)
{\adjustbox{max height=0.65em}{CP}};

\begin{axis}[%
name=boundaryLeftBelow,  
width=0.878\figurewidth,
height=0.625\figurewidth,
at={(0\figurewidth,-0.625\figurewidth-1mm)},
axis line style = thick,	
line cap = round,	
line join = round,	
scale only axis,
point meta min=-5,
point meta max=0,
axis on top,
xmin=-3.5,
xmax=4.487327631,
xlabel style={font=\color{white!15!black}},
xlabel={$x$ in \SI{}{\metre}},
ylabel={$y$ in \SI{}{\metre}},
ylabel style={yshift=-1.5mm},
ymin=-1.1,
ymax=4.481919232,
tick align=inside,
xtick = {-3,-2,...,4},
ytick = {-1,0,...,4},
    minor x tick num = 1, 
    minor y tick num = 1, 
]

\addplot [forget plot] graphics [xmin=-3.50535343675, xmax=4.50338794125, ymin=-1.20535343675, ymax=4.50141013875] {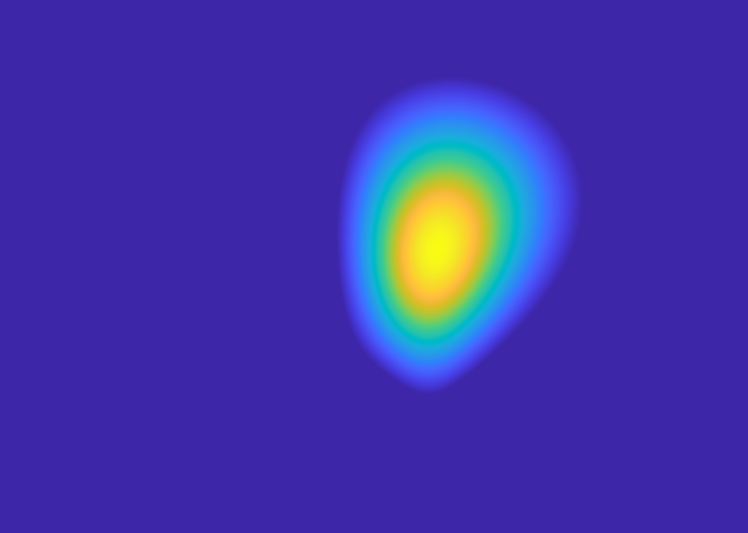};

\input{\datapath/trajectory.tex}

\addplot [color=blue, line width=0.5pt, only marks, mark size=0.9pt, mark=o, mark options={solid, IEEEred}, forget plot]
  table[row sep=crcr]{%
1.18592964824121	1.8675879396984\\
};

\input{\datapath/arrays2D.tex}
\end{axis}

\node[above left, align=right,font={\footnotesize},fill=white,
opacity=0.9,
inner sep=0.75mm, 
xshift=0.65mm, 
yshift=-0.65mm, 
draw, 
line width = 0.8pt,
minimum width = 3mm,
minimum height = 3mm,
] at %
(boundaryLeftBelow.south east)
{d)};

\node[
    below left,
    align=right,
    shape=rectangle,
    draw=black,
    fill=gray!20!white,
    rounded corners=0.5pt,
    inner sep=1pt,
    minimum size=1.0em,
    text=black
] at (boundaryLeftBelow.north east)
{\adjustbox{max height=0.65em}{ZM}};

\begin{axis}[%
name=boundaryCenterBelow,  
width=0.878\figurewidth,
height=0.625\figurewidth,
at={(0.9\figurewidth,-0.625\figurewidth-1mm)},
axis line style = thick,	
line cap = round,	
line join = round,	
scale only axis,
point meta min=-5,
point meta max=0,
axis on top,
xmin=-3.5,
xmax=4.487327631,
xlabel style={font=\color{white!15!black}},
xlabel={$x$ in \SI{}{\metre}},
yticklabel=\empty,                      
ymin=-1.1,
ymax=4.481919232,
tick align=inside,
xtick = {-3,-2,...,4},
ytick = {-1,0,...,4},
    minor x tick num = 1, 
    minor y tick num = 1, 
]

\addplot [forget plot] graphics [xmin=-3.50535343675, xmax=4.50338794125, ymin=-1.20535343675, ymax=4.50141013875] {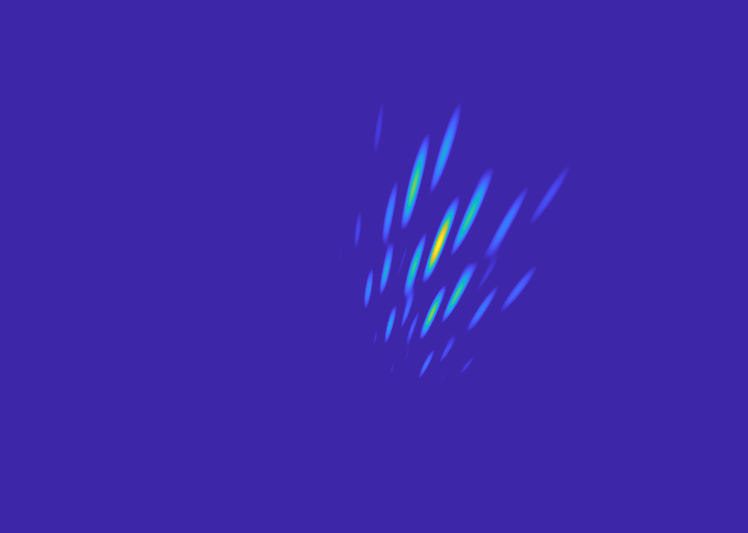};

\input{\datapath/trajectory.tex}

\addplot [color=blue, line width=0.5pt, only marks, mark size=0.9pt, mark=o, mark options={solid, IEEEred}, forget plot]
  table[row sep=crcr]{%
1.18592964824121	1.8675879396984\\
};

\input{\datapath/arrays2D.tex}

\end{axis}

\node[above left, align=right,font={\footnotesize},fill=white,
opacity=0.9,
inner sep=0.75mm, 
xshift=0.65mm, 
yshift=-0.65mm, 
draw, 
line width = 0.8pt,
minimum width = 3mm,
minimum height = 3mm,
] at %
(boundaryCenterBelow.south east)
{e)};

\node[
    below left,
    align=right,
    shape=rectangle,
    draw=black,
    fill=gray!20!white,
    rounded corners=0.5pt,
    inner sep=1pt,
    minimum size=1.0em,
    text=black
] at (boundaryCenterBelow.north east)
{\adjustbox{max height=0.65em}{ZM-S}};

\begin{axis}[%
name=boundaryRightBelow,  
width=0.878\figurewidth,
height=0.625\figurewidth,
at={(1.8\figurewidth,-0.625\figurewidth-1mm)},
axis line style = thick,	
line cap = round,	
line join = round,	
scale only axis,
point meta min=-7,
point meta max=0,
axis on top,
xmin=-3.5,
xmax=4.487327631,
xlabel style={font=\color{white!15!black}},
xlabel={$x$ in \SI{}{\metre}},
yticklabel=\empty,                      
ymin=-1.1,
ymax=4.481919232,
tick align=inside,
xtick = {-3,-2,...,4},
ytick = {-1,0,...,4},
    minor x tick num = 1, 
    minor y tick num = 1, 
colormap={mymap}{[1pt] rgb(0pt)=(0.2422,0.1504,0.6603); rgb(1pt)=(0.2444,0.1534,0.6728); rgb(2pt)=(0.2464,0.1569,0.6847); rgb(3pt)=(0.2484,0.1607,0.6961); rgb(4pt)=(0.2503,0.1648,0.7071); rgb(5pt)=(0.2522,0.1689,0.7179); rgb(6pt)=(0.254,0.1732,0.7286); rgb(7pt)=(0.2558,0.1773,0.7393); rgb(8pt)=(0.2576,0.1814,0.7501); rgb(9pt)=(0.2594,0.1854,0.761); rgb(11pt)=(0.2628,0.1932,0.7828); rgb(12pt)=(0.2645,0.1972,0.7937); rgb(13pt)=(0.2661,0.2011,0.8043); rgb(14pt)=(0.2676,0.2052,0.8148); rgb(15pt)=(0.2691,0.2094,0.8249); rgb(16pt)=(0.2704,0.2138,0.8346); rgb(17pt)=(0.2717,0.2184,0.8439); rgb(18pt)=(0.2729,0.2231,0.8528); rgb(19pt)=(0.274,0.228,0.8612); rgb(20pt)=(0.2749,0.233,0.8692); rgb(21pt)=(0.2758,0.2382,0.8767); rgb(22pt)=(0.2766,0.2435,0.884); rgb(23pt)=(0.2774,0.2489,0.8908); rgb(24pt)=(0.2781,0.2543,0.8973); rgb(25pt)=(0.2788,0.2598,0.9035); rgb(26pt)=(0.2794,0.2653,0.9094); rgb(27pt)=(0.2798,0.2708,0.915); rgb(28pt)=(0.2802,0.2764,0.9204); rgb(29pt)=(0.2806,0.2819,0.9255); rgb(30pt)=(0.2809,0.2875,0.9305); rgb(31pt)=(0.2811,0.293,0.9352); rgb(32pt)=(0.2813,0.2985,0.9397); rgb(33pt)=(0.2814,0.304,0.9441); rgb(34pt)=(0.2814,0.3095,0.9483); rgb(35pt)=(0.2813,0.315,0.9524); rgb(36pt)=(0.2811,0.3204,0.9563); rgb(37pt)=(0.2809,0.3259,0.96); rgb(38pt)=(0.2807,0.3313,0.9636); rgb(39pt)=(0.2803,0.3367,0.967); rgb(40pt)=(0.2798,0.3421,0.9702); rgb(41pt)=(0.2791,0.3475,0.9733); rgb(42pt)=(0.2784,0.3529,0.9763); rgb(43pt)=(0.2776,0.3583,0.9791); rgb(44pt)=(0.2766,0.3638,0.9817); rgb(45pt)=(0.2754,0.3693,0.984); rgb(46pt)=(0.2741,0.3748,0.9862); rgb(47pt)=(0.2726,0.3804,0.9881); rgb(48pt)=(0.271,0.386,0.9898); rgb(49pt)=(0.2691,0.3916,0.9912); rgb(50pt)=(0.267,0.3973,0.9924); rgb(51pt)=(0.2647,0.403,0.9935); rgb(52pt)=(0.2621,0.4088,0.9946); rgb(53pt)=(0.2591,0.4145,0.9955); rgb(54pt)=(0.2556,0.4203,0.9965); rgb(55pt)=(0.2517,0.4261,0.9974); rgb(56pt)=(0.2473,0.4319,0.9983); rgb(57pt)=(0.2424,0.4378,0.9991); rgb(58pt)=(0.2369,0.4437,0.9996); rgb(59pt)=(0.2311,0.4497,0.9995); rgb(60pt)=(0.225,0.4559,0.9985); rgb(61pt)=(0.2189,0.462,0.9968); rgb(62pt)=(0.2128,0.4682,0.9948); rgb(63pt)=(0.2066,0.4743,0.9926); rgb(64pt)=(0.2006,0.4803,0.9906); rgb(65pt)=(0.195,0.4861,0.9887); rgb(66pt)=(0.1903,0.4919,0.9867); rgb(67pt)=(0.1869,0.4975,0.9844); rgb(68pt)=(0.1847,0.503,0.9819); rgb(69pt)=(0.1831,0.5084,0.9793); rgb(70pt)=(0.1818,0.5138,0.9766); rgb(71pt)=(0.1806,0.5191,0.9738); rgb(72pt)=(0.1795,0.5244,0.9709); rgb(73pt)=(0.1785,0.5296,0.9677); rgb(74pt)=(0.1778,0.5349,0.9641); rgb(75pt)=(0.1773,0.5401,0.9602); rgb(76pt)=(0.1768,0.5452,0.956); rgb(77pt)=(0.1764,0.5504,0.9516); rgb(78pt)=(0.1755,0.5554,0.9473); rgb(79pt)=(0.174,0.5605,0.9432); rgb(80pt)=(0.1716,0.5655,0.9393); rgb(81pt)=(0.1686,0.5705,0.9357); rgb(82pt)=(0.1649,0.5755,0.9323); rgb(83pt)=(0.161,0.5805,0.9289); rgb(84pt)=(0.1573,0.5854,0.9254); rgb(85pt)=(0.154,0.5902,0.9218); rgb(86pt)=(0.1513,0.595,0.9182); rgb(87pt)=(0.1492,0.5997,0.9147); rgb(88pt)=(0.1475,0.6043,0.9113); rgb(89pt)=(0.1461,0.6089,0.908); rgb(90pt)=(0.1446,0.6135,0.905); rgb(91pt)=(0.1429,0.618,0.9022); rgb(92pt)=(0.1408,0.6226,0.8998); rgb(93pt)=(0.1383,0.6272,0.8975); rgb(94pt)=(0.1354,0.6317,0.8953); rgb(95pt)=(0.1321,0.6363,0.8932); rgb(96pt)=(0.1288,0.6408,0.891); rgb(97pt)=(0.1253,0.6453,0.8887); rgb(98pt)=(0.1219,0.6497,0.8862); rgb(99pt)=(0.1185,0.6541,0.8834); rgb(100pt)=(0.1152,0.6584,0.8804); rgb(101pt)=(0.1119,0.6627,0.877); rgb(102pt)=(0.1085,0.6669,0.8734); rgb(103pt)=(0.1048,0.671,0.8695); rgb(104pt)=(0.1009,0.675,0.8653); rgb(105pt)=(0.0964,0.6789,0.8609); rgb(106pt)=(0.0914,0.6828,0.8562); rgb(107pt)=(0.0855,0.6865,0.8513); rgb(108pt)=(0.0789,0.6902,0.8462); rgb(109pt)=(0.0713,0.6938,0.8409); rgb(110pt)=(0.0628,0.6972,0.8355); rgb(111pt)=(0.0535,0.7006,0.8299); rgb(112pt)=(0.0433,0.7039,0.8242); rgb(113pt)=(0.0328,0.7071,0.8183); rgb(114pt)=(0.0234,0.7103,0.8124); rgb(115pt)=(0.0155,0.7133,0.8064); rgb(116pt)=(0.0091,0.7163,0.8003); rgb(117pt)=(0.0046,0.7192,0.7941); rgb(118pt)=(0.0019,0.722,0.7878); rgb(119pt)=(0.0009,0.7248,0.7815); rgb(120pt)=(0.0018,0.7275,0.7752); rgb(121pt)=(0.0046,0.7301,0.7688); rgb(122pt)=(0.0094,0.7327,0.7623); rgb(123pt)=(0.0162,0.7352,0.7558); rgb(124pt)=(0.0253,0.7376,0.7492); rgb(125pt)=(0.0369,0.74,0.7426); rgb(126pt)=(0.0504,0.7423,0.7359); rgb(127pt)=(0.0638,0.7446,0.7292); rgb(128pt)=(0.077,0.7468,0.7224); rgb(129pt)=(0.0899,0.7489,0.7156); rgb(130pt)=(0.1023,0.751,0.7088); rgb(131pt)=(0.1141,0.7531,0.7019); rgb(132pt)=(0.1252,0.7552,0.695); rgb(133pt)=(0.1354,0.7572,0.6881); rgb(134pt)=(0.1448,0.7593,0.6812); rgb(135pt)=(0.1532,0.7614,0.6741); rgb(136pt)=(0.1609,0.7635,0.6671); rgb(137pt)=(0.1678,0.7656,0.6599); rgb(138pt)=(0.1741,0.7678,0.6527); rgb(139pt)=(0.1799,0.7699,0.6454); rgb(140pt)=(0.1853,0.7721,0.6379); rgb(141pt)=(0.1905,0.7743,0.6303); rgb(142pt)=(0.1954,0.7765,0.6225); rgb(143pt)=(0.2003,0.7787,0.6146); rgb(144pt)=(0.2061,0.7808,0.6065); rgb(145pt)=(0.2118,0.7828,0.5983); rgb(146pt)=(0.2178,0.7849,0.5899); rgb(147pt)=(0.2244,0.7869,0.5813); rgb(148pt)=(0.2318,0.7887,0.5725); rgb(149pt)=(0.2401,0.7905,0.5636); rgb(150pt)=(0.2491,0.7922,0.5546); rgb(151pt)=(0.2589,0.7937,0.5454); rgb(152pt)=(0.2695,0.7951,0.536); rgb(153pt)=(0.2809,0.7964,0.5266); rgb(154pt)=(0.2929,0.7975,0.517); rgb(155pt)=(0.3052,0.7985,0.5074); rgb(156pt)=(0.3176,0.7994,0.4975); rgb(157pt)=(0.3301,0.8002,0.4876); rgb(158pt)=(0.3424,0.8009,0.4774); rgb(159pt)=(0.3548,0.8016,0.4669); rgb(160pt)=(0.3671,0.8021,0.4563); rgb(161pt)=(0.3795,0.8026,0.4454); rgb(162pt)=(0.3921,0.8029,0.4344); rgb(163pt)=(0.405,0.8031,0.4233); rgb(164pt)=(0.4184,0.803,0.4122); rgb(165pt)=(0.4322,0.8028,0.4013); rgb(166pt)=(0.4463,0.8024,0.3904); rgb(167pt)=(0.4608,0.8018,0.3797); rgb(168pt)=(0.4753,0.8011,0.3691); rgb(169pt)=(0.4899,0.8002,0.3586); rgb(170pt)=(0.5044,0.7993,0.348); rgb(171pt)=(0.5187,0.7982,0.3374); rgb(172pt)=(0.5329,0.797,0.3267); rgb(173pt)=(0.547,0.7957,0.3159); rgb(175pt)=(0.5748,0.7929,0.2941); rgb(176pt)=(0.5886,0.7913,0.2833); rgb(177pt)=(0.6024,0.7896,0.2726); rgb(178pt)=(0.6161,0.7878,0.2622); rgb(179pt)=(0.6297,0.7859,0.2521); rgb(180pt)=(0.6433,0.7839,0.2423); rgb(181pt)=(0.6567,0.7818,0.2329); rgb(182pt)=(0.6701,0.7796,0.2239); rgb(183pt)=(0.6833,0.7773,0.2155); rgb(184pt)=(0.6963,0.775,0.2075); rgb(185pt)=(0.7091,0.7727,0.1998); rgb(186pt)=(0.7218,0.7703,0.1924); rgb(187pt)=(0.7344,0.7679,0.1852); rgb(188pt)=(0.7468,0.7654,0.1782); rgb(189pt)=(0.759,0.7629,0.1717); rgb(190pt)=(0.771,0.7604,0.1658); rgb(191pt)=(0.7829,0.7579,0.1608); rgb(192pt)=(0.7945,0.7554,0.157); rgb(193pt)=(0.806,0.7529,0.1546); rgb(194pt)=(0.8172,0.7505,0.1535); rgb(195pt)=(0.8281,0.7481,0.1536); rgb(196pt)=(0.8389,0.7457,0.1546); rgb(197pt)=(0.8495,0.7435,0.1564); rgb(198pt)=(0.86,0.7413,0.1587); rgb(199pt)=(0.8703,0.7392,0.1615); rgb(200pt)=(0.8804,0.7372,0.165); rgb(201pt)=(0.8903,0.7353,0.1695); rgb(202pt)=(0.9,0.7336,0.1749); rgb(203pt)=(0.9093,0.7321,0.1815); rgb(204pt)=(0.9184,0.7308,0.189); rgb(205pt)=(0.9272,0.7298,0.1973); rgb(206pt)=(0.9357,0.729,0.2061); rgb(207pt)=(0.944,0.7285,0.2151); rgb(208pt)=(0.9523,0.7284,0.2237); rgb(209pt)=(0.9606,0.7285,0.2312); rgb(210pt)=(0.9689,0.7292,0.2373); rgb(211pt)=(0.977,0.7304,0.2418); rgb(212pt)=(0.9842,0.733,0.2446); rgb(213pt)=(0.99,0.7365,0.2429); rgb(214pt)=(0.9946,0.7407,0.2394); rgb(215pt)=(0.9966,0.7458,0.2351); rgb(216pt)=(0.9971,0.7513,0.2309); rgb(217pt)=(0.9972,0.7569,0.2267); rgb(218pt)=(0.9971,0.7626,0.2224); rgb(219pt)=(0.9969,0.7683,0.2181); rgb(220pt)=(0.9966,0.774,0.2138); rgb(221pt)=(0.9962,0.7798,0.2095); rgb(222pt)=(0.9957,0.7856,0.2053); rgb(223pt)=(0.9949,0.7915,0.2012); rgb(224pt)=(0.9938,0.7974,0.1974); rgb(225pt)=(0.9923,0.8034,0.1939); rgb(226pt)=(0.9906,0.8095,0.1906); rgb(227pt)=(0.9885,0.8156,0.1875); rgb(228pt)=(0.9861,0.8218,0.1846); rgb(229pt)=(0.9835,0.828,0.1817); rgb(230pt)=(0.9807,0.8342,0.1787); rgb(231pt)=(0.9778,0.8404,0.1757); rgb(232pt)=(0.9748,0.8467,0.1726); rgb(233pt)=(0.972,0.8529,0.1695); rgb(234pt)=(0.9694,0.8591,0.1665); rgb(235pt)=(0.9671,0.8654,0.1636); rgb(236pt)=(0.9651,0.8716,0.1608); rgb(237pt)=(0.9634,0.8778,0.1582); rgb(238pt)=(0.9619,0.884,0.1557); rgb(239pt)=(0.9608,0.8902,0.1532); rgb(240pt)=(0.9601,0.8963,0.1507); rgb(241pt)=(0.9596,0.9023,0.148); rgb(242pt)=(0.9595,0.9084,0.145); rgb(243pt)=(0.9597,0.9143,0.1418); rgb(244pt)=(0.9601,0.9203,0.1382); rgb(245pt)=(0.9608,0.9262,0.1344); rgb(246pt)=(0.9618,0.932,0.1304); rgb(247pt)=(0.9629,0.9379,0.1261); rgb(248pt)=(0.9642,0.9437,0.1216); rgb(249pt)=(0.9657,0.9494,0.1168); rgb(250pt)=(0.9674,0.9552,0.1116); rgb(251pt)=(0.9692,0.9609,0.1061); rgb(252pt)=(0.9711,0.9667,0.1001); rgb(253pt)=(0.973,0.9724,0.0938); rgb(254pt)=(0.9749,0.9782,0.0872); rgb(255pt)=(0.9769,0.9839,0.0805)},
colorbar,
colorbar style={
    ylabel={\scalebox{0.8}{$\log f(\observationn{n}|\pos{n}) - \max\big(\log f(\observationn{n}|\pos{n})\big)$}},
    axis line style = thick,	
    line cap = round,
    line join = round,
    width=0.3cm,%
    tick align=inside, %
    ytick={0,-1,-2,-3,-4,-5,-6,-7},
    yticklabel style={/pgf/number format/fixed},
    xshift = -0.2cm, %
    title style={font=\footnotesize, yshift=-1.5mm},
    },
]

\addplot [forget plot] graphics [xmin=-3.50535343675, xmax=4.50338794125, ymin=-1.20535343675, ymax=4.50141013875] {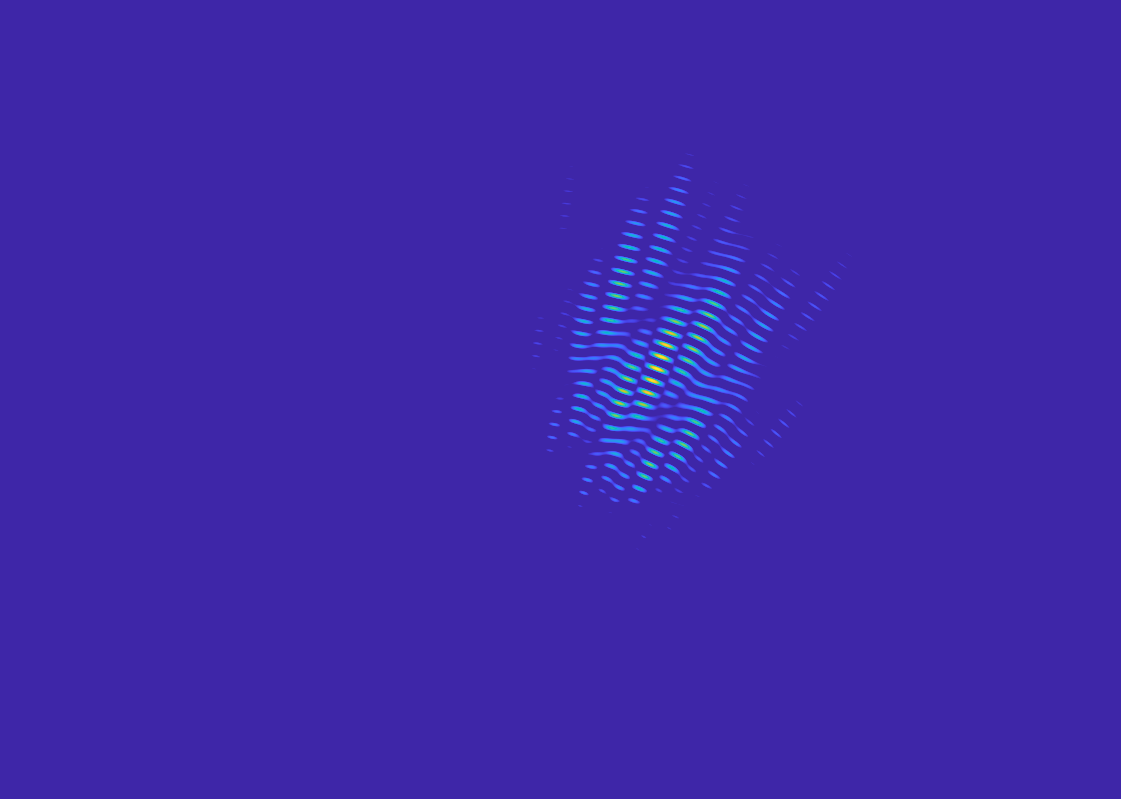};

\input{\datapath/trajectory.tex}

\addplot [color=blue, line width=0.5pt, only marks, mark size=0.9pt, mark=o, mark options={solid, IEEEred}, forget plot]
  table[row sep=crcr]{%
1.18592964824121	1.8675879396984\\
};

\input{\datapath/arrays2D.tex}

\end{axis}

\node[above left, align=right,font={\footnotesize},fill=white,
opacity=0.9,
inner sep=0.75mm, 
xshift=0.65mm, 
yshift=-0.65mm, 
draw, 
line width = 0.8pt,
minimum width = 3mm,
minimum height = 3mm,
] at %
(boundaryRightBelow.south east)
{f)};

\node[
    below left,
    align=right,
    shape=rectangle,
    draw=black,
    fill=gray!20!white,
    rounded corners=0.5pt,
    inner sep=1pt,
    minimum size=1.0em,
    text=black
] at (boundaryRightBelow.north east)
{\adjustbox{max height=0.65em}{NZM}};

\end{tikzpicture}%

%% file: figures/results/PEB.tex
%
%

\pgfplotsset{every axis/.append style={
  label style={font=\footnotesize},
  legend style={font=\scriptsize},
  tick label style={font=\footnotesize},
}}

\newcommand{\lblPOSest}[1]{RMSE REF}
\newcommand{\lblPOScrlb}[1]{noncoherent PEB}
\newcommand{\lblPOSestc}[1]{RMSE PROP}
\newcommand{\lblPOScrlbc}[1]{coherent PEB}
\newcommand{\lblPOSuncertainty}[1]{$\interval{50\%}$}

\def\datapath{./figures/data}   

\begin{tikzpicture}

\begin{axis}[%
name=boundaryLeft,  
axis line style = thick,	
width=0.65\figurewidth,
height=\figureheight,
at={(0\figurewidth,0\figureheight)},
scale only axis,
xmin=1,
xmax=200,
xlabel={Step $n$},                     
ymin=0,
ymax=4,
tick align=inside,
ytick = {0,1,...,6},
grid=both, 
    minor grid style={ultra thin, lightgray!20}, 
    minor x tick num = 3, 
    minor y tick num = 9, 
    major grid style={very thin, lightgray!50},
ylabel={
$\lVert \pos{n}  - \posHat{n}\rVert$ in \SI{}{\centi\metre}},
ylabel style={yshift=-0.3mm},
]
\pgfplotsset{every major tick/.append style={-Triangle Cap,thick},thick}

\addplot [color=\colorSnapshot, 
          line cap = round, 
          line join=round, 
          line width=\LWestimates,
          mark=\markSnapshot, 
          mark repeat=\markRepeat, 
          mark size=0.7\markSize,
          mark options={solid, line width=0.3pt,fill=\colorSnapshot}]%
    table{\datapath/position-errors/RMSE-snapshot.dat};%
    \label{pgf:RMSE-snapshot}

\addplot[area legend, draw=none, fill=\colorSnapshot, fill opacity=0.1]%
    table{\datapath/position-errors/interval-snapshot.dat};%
    \label{pgf:interval-snapshot}

\addplot [color=\colorNC, 
          line cap = round, 
          line join=round, 
          line width=\LWestimates,
          mark=\markNC, 
          mark repeat=\markRepeat, 
          mark size=0.7\markSize,
          mark options={solid, line width=0.3pt,fill=\colorNC}]%
    table{\datapath/position-errors/RMSE-Type-I-coh0.dat};%
    \label{pgf:RMSE-Type-I-coh0}

\addplot[area legend, draw=none, fill=\colorNC, fill opacity=0.1]%
    table{\datapath/position-errors/interval-Type-I-coh0.dat};%
    \label{pgf:interval-Type-I-coh0}

\addplot [color=\colorC, 
          line cap = round, 
          line join=round, 
          line width=\LWestimates,
          mark=\markC, 
          mark repeat=\markRepeat, 
          mark size=\markSize,
          mark options={solid, line width=0.3pt,fill=\colorC}]%
    table{\datapath/position-errors/RMSE-Type-I-coh2.dat};%
    \label{pgf:RMSE-Type-I-coh2}

\addplot[area legend, draw=none, fill=\colorC, fill opacity=0.1]%
    table{\datapath/position-errors/interval-Type-I-coh2.dat};%
    \label{pgf:interval-Type-I-coh2}
    
\addplot [color=\colorCP, 
          line cap = round, 
          line join=round, 
          line width=\LWestimates,
          mark=\markCP, 
          mark repeat=\markRepeat, 
          mark size=\markSize,
          mark options={solid, line width=0.3pt,fill=\colorCP}]%
    table{\datapath/position-errors/RMSE-Type-I-coh1.dat};%
    \label{pgf:RMSE-Type-I-coh1}

\addplot[area legend, draw=none, fill=\colorCP, fill opacity=0.1]%
    table{\datapath/position-errors/interval-Type-I-coh1.dat};%
    \label{pgf:interval-Type-I-coh1}

\addplot [color=\colorNCtypeII, 
          line cap = round, 
          line join=round, 
          line width=\LWestimates,
          mark=\markNCtypeII, 
          mark repeat=\markRepeat, 
          mark size=\markSize,
          mark options={solid, line width=0.3pt,fill=\colorNCtypeII}]%
    table{\datapath/position-errors/RMSE-Type-II-coh0.dat};%
    \label{pgf:RMSE-Type-II-coh0}

\addplot[area legend, draw=none, fill=\colorNCtypeII, fill opacity=0.1]%
    table{\datapath/position-errors/interval-Type-II-coh0.dat};%
    \label{pgf:interval-Type-II-coh0}

\addplot [color=\colorNCtypeIIS, 
          line cap = round, 
          line join=round, 
          line width=\LWestimates,
          mark=\markNCtypeIIS, 
          mark repeat=\markRepeat, 
          mark size=\markSize,
          mark options={solid, line width=0.3pt,fill=\colorNCtypeIIS}]%
    table{\datapath/position-errors/RMSE-Type-II-coh0-PA-stacked.dat};%
    \label{pgf:RMSE-Type-II-coh0-S}

\addplot[area legend, draw=none, fill=\colorNCtypeIIS, fill opacity=0.1]%
    table{\datapath/position-errors/interval-Type-II-coh0-PA-stacked.dat};%
    \label{pgf:interval-Type-II-coh0-S}

\addplot [color=\colorCtypeII, 
          line cap = round, 
          line join=round, 
          line width=\LWestimates,
          mark=\markCtypeII, 
          mark repeat=\markRepeat, 
          mark size=\markSize,
          mark options={solid, line width=0.3pt,fill=\colorCtypeII}]%
    table{\datapath/position-errors/RMSE-Type-II-coh1.dat};%
    \label{pgf:RMSE-Type-II-coh1}

\addplot[area legend, draw=none, fill=\colorCtypeII, fill opacity=0.1]%
    table{\datapath/position-errors/interval-Type-II-coh1.dat};%
    \label{pgf:interval-Type-II-coh1}

\addplot [LineCohNC]%
    table{\datapath/position-errors/RMSE-Type-II-coh1-nc-data.dat};%
    \label{pgf:RMSE-Type-II-coh1-nc-data}


\addplot [color=\colorNC, line cap = round, line join=round, line width=\LWbound, draw opacity = 0.5]%
    table{\datapath/position-errors/PEB-nc.dat};%
    \label{pgf:PEB-nc}

\addplot [color=\colorC, line cap = round, line join=round, line width=\LWbound, draw opacity = 0.5]%
    table{\datapath/position-errors/PEB-c.dat};%
    \label{pgf:PEB-c}

\addplot [color=\colorCP, line cap = round, line join=round, line width=\LWbound, draw opacity = 0.5]%
    table{\datapath/position-errors/PEB-cp.dat};%
    \label{pgf:PEB-cp}

\addplot [color=\colorSnapshot, line cap = round, line join=round, dashed, line width=\LWbound, draw opacity = 0.5]%
    table{\datapath/position-errors/PEB-classic.dat};%
    \label{pgf:PEB-classic}

\addplot [color=red, 
          only marks,
          mark=*, 
          mark repeat=\markRepeat, 
          mark size=1.3\markSize,
          mark options={solid, line width=0.3pt,fill=red}]%
    table[row sep=crcr]{%
    100	2.881\\
};%

\addplot [
  color=black,
  line width=0.25mm,
  line cap = round,
  forget plot,
  -{Stealth[inset=0pt, scale=1.15, angle'=20]}
]
table[row sep=crcr]{%
80	2.4943801561213\\
80	0.918959851613962\\
}
node[
  midway,
  anchor=west,
  xshift=1mm,
  yshift=2mm,
  align=left
] {\footnotesize(i) Prior information from \\[-2pt]\footnotesize~\hspace{1.65mm} Bayesian state filtering};

\addplot [
  color=black,
  line width=0.25mm,
  line cap = round,
  forget plot,
  -{Stealth[inset=0pt, scale=1.15, angle'=20]}
]
table[row sep=crcr]{%
150	1.35363576863761\\
150	0.606038721516405\\
}
node[
  midway,
  anchor=east,
  xshift=-1mm,
  yshift=-1mm,
  align=right
] {\footnotesize(ii) Aperture gain};

\addplot [
  color=black,
  line width=0.25mm,
  line cap = round,
  forget plot,
  -{Stealth[inset=0pt, scale=1.15, angle'=20]}
]
table[row sep=crcr]{%
110	0.384437063433451\\
110	0.103714731840868\\
}
node[
  midway,
  anchor=west,
  xshift=1mm,
  yshift=0.5mm,
  align=left
] {\footnotesize(iii) Carrier-phase information};

\end{axis}

\begin{axis}[%
name=boundaryRight,  
axis line style = thick,	
width=0.295\figurewidth,
height=\figureheight,
at={(0.7\figurewidth,0\figureheight)},
scale only axis,
xmin=0,
xmax=3.99,
xlabel={$\lVert \pos{n}  - \posHat{n}\rVert$ in \SI{}{\centi\metre}},
ymin=0,
ymax=1,
tick align=inside,
ytick = {0,0.1,...,1},
yticklabels={,0.1,0.2,0.3,0.4,0.5,0.6,0.7,0.8,0.9,1},
yticklabel style={yshift=-0.5mm},
grid=both, 
    minor grid style={ultra thin, lightgray!20}, 
    minor x tick num = 9, 
    minor y tick num = 4, 
    major grid style={very thin, lightgray!50},
ylabel={Cumulative frequency},
ylabel style={yshift=-0.3mm},
]
\pgfplotsset{every major tick/.append style={-Triangle Cap,thick},thick}

\addplot [color=\colorSnapshot,
          line cap = round, 
          line join=round, 
          line width=\LWestimates,
          mark=\markSnapshot, 
          mark repeat=\markRepeat, 
          mark size=0.7\markSize,
          mark options={solid, line width=0.3pt,fill=\colorSnapshot}]%
    table{\datapath/position-errors/CDF-snapshot.dat};%

\addplot [color=\colorNC,
          line cap = round, 
          line join=round, 
          line width=\LWestimates,
          mark=\markNC, 
          mark repeat=\markRepeat, 
          mark size=0.7\markSize,
          mark options={solid, line width=0.3pt,fill=\colorNC}]%
    table{\datapath/position-errors/CDF-Type-I-coh0.dat};%

\addplot [color=\colorC, 
          line cap = round, 
          line join=round, 
          line width=\LWestimates,
          mark=\markC, 
          mark repeat=\markRepeat, 
          mark size=\markSize,
          mark options={solid, line width=0.3pt,fill=\colorC}]%
    table{\datapath/position-errors/CDF-Type-I-coh2.dat};%

\addplot [color=\colorCP,
          line cap = round, 
          line join=round, 
          line width=\LWestimates,
          mark=\markCP, 
          mark repeat=\markRepeat, 
          mark size=0.7\markSize,
          mark options={solid, line width=0.3pt,fill=\colorCP}]%
    table{\datapath/position-errors/CDF-Type-I-coh1.dat};%

\addplot [color=\colorNCtypeII, 
          line cap = round, 
          line join=round, 
          line width=\LWestimates,
          mark=\markNCtypeII, 
          mark repeat=\markRepeat, 
          mark size=\markSize,
          mark options={solid, line width=0.3pt,fill=\colorNCtypeII}]%
    table{\datapath/position-errors/CDF-Type-II-coh0.dat};%

\addplot [color=\colorNCtypeIIS, 
          line cap = round, 
          line join=round, 
          line width=\LWestimates,
          mark=\markNCtypeIIS, 
          mark repeat=\markRepeat, 
          mark size=\markSize,
          mark options={solid, line width=0.3pt,fill=\colorNCtypeIIS}]%
    table{\datapath/position-errors/CDF-Type-II-coh0-PA-stacked.dat};%

\addplot [color=\colorCtypeII, 
          line cap = round, 
          line join=round, 
          line width=\LWestimates,
          mark=\markCtypeII, 
          mark repeat=\markRepeat, 
          mark size=\markSize,
          mark options={solid, line width=0.3pt,fill=\colorCtypeII}]%
    table{\datapath/position-errors/CDF-Type-II-coh1.dat};%

\addplot [LineCohNC]%
    table{\datapath/position-errors/CDF-Type-II-coh1-nc-data.dat};%

\end{axis}

\node[below,draw,thick,fill=white,inner sep=0pt,above right=0.0em,line cap = round, fill opacity=0.8, text opacity = 1, draw opacity = 1 , yshift = 0.75cm, minimum height=1.35cm, fit = (boundaryLeft.north west) (boundaryRight.north east)] 
{~\vspace{-4.3mm}

\footnotesize
\setlength{\tabcolsep}{4.8pt} 
\begin{tabular}{l|cccc|cccc|} 
    \multirow{2}{*}{Estimator} & \multicolumn{4}{c|}{Type-I} & \multicolumn{4}{c|}{Type-II} \\
    & \estLabel{S} 
    & \estLabel{NC}
    & \estLabel{C} 
    & \estLabel{CP} 
    & \estLabel{ZM} 
    & \estLabel{ZM-S} 
    & \estLabel{NZM}  
    & \estLabel{NZM-NC}  
     \\
     \cline{2-9}
    RMSE 
     & \ref{pgf:RMSE-snapshot}  
     & \ref{pgf:RMSE-Type-I-coh0}
     & \ref{pgf:RMSE-Type-I-coh2}
     & \ref{pgf:RMSE-Type-I-coh1}
     & \ref{pgf:RMSE-Type-II-coh0}
     & \ref{pgf:RMSE-Type-II-coh0-S}
     & \ref{pgf:RMSE-Type-II-coh1}
     & \ref{pgf:RMSE-Type-II-coh1-nc-data}
     \\[-1pt]
    $\interval{50\%}$
    & \ref{pgf:interval-snapshot} 
    & \ref{pgf:interval-Type-I-coh0} 
    & \ref{pgf:interval-Type-I-coh2}
    & \ref{pgf:interval-Type-I-coh1}
    & \ref{pgf:interval-Type-II-coh0}
    & \ref{pgf:interval-Type-II-coh0-S}
    & \ref{pgf:interval-Type-II-coh1}
    & --
\end{tabular}%
\hspace{0.5mm}%
\begin{tabular}{|c|ccc} 
    \multicolumn{4}{|c}{Position Error Bound}
    \\ 
     \, \hspace{2mm}
     Classic & \multicolumn{3}{c}{Bayesian} \\[0.475pt] 
     \cline{1-4}\addlinespace[0.3pt]
     Noncoherent
     & Noncoherent
     & Coherent
     & Carrier-phase based 
     \\
     \, \hspace{2mm}
     \ref{pgf:PEB-classic} 
     & \ref{pgf:PEB-nc} 
     & \ref{pgf:PEB-c}
     & \ref{pgf:PEB-cp}
\end{tabular}};

\node[above left, align=right,font={\footnotesize},fill=white,
opacity=0.9,
inner sep=0.75mm, 
xshift=0.65mm, 
yshift=-0.65mm, 
draw, 
line width = 0.8pt,
minimum width = 3mm,
minimum height = 3mm,
] at %
(boundaryLeft.south east)
{a)};

\node[above left, align=right,font={\footnotesize},fill=white,
opacity=0.9,
inner sep=0.75mm, 
xshift=0.65mm, 
yshift=-0.65mm, 
draw, 
line width = 0.8pt,
minimum width = 3mm,
minimum height = 3mm,
] at %
(boundaryRight.south east)
{b)};

\end{tikzpicture}%